\documentclass[12pt]{article}
\usepackage[T1]{fontenc}
\usepackage[dvips]{graphicx}
\graphicspath{{images/}}
\setlength{\textwidth}{6in} \setlength{\textheight}{8.5in}
\setlength{\topmargin}{-0.25in} \setlength{\evensidemargin}{.25in}
\setlength{\oddsidemargin}{.25in}
\setlength{\parskip}{.65ex}

\usepackage{verbatim}
\usepackage{amsmath,amsthm}
\usepackage{xspace}
\usepackage{pifont}
\usepackage{graphicx}
\usepackage{amssymb}
\usepackage{epic, eepic}
\usepackage{dsfont}
\usepackage{amssymb}
\usepackage{makeidx}
\usepackage{mathrsfs}
\usepackage{exscale}
\usepackage{color} 
\usepackage{overpic} 
\usepackage{bm}
\usepackage{bbm}
\usepackage{booktabs} 
\usepackage{color, colortbl}
\usepackage{subcaption}
\usepackage[numbers]{natbib}

\RequirePackage[colorlinks,citecolor=blue,urlcolor=blue]{hyperref}

\definecolor{Gray}{gray}{0.9}

\usepackage{amsmath,afterpage}
\usepackage{epsf}
\usepackage{graphics,color}
\RequirePackage[colorlinks,citecolor=blue,urlcolor=blue]{hyperref}
\usepackage[export]{adjustbox}

\def\0{\mathbf{0}}

\def\eps{\varepsilon}

\def\lam{\lambda}
\def\rr{\rightarrow}

\def \< {\langle}
\def \> {\rangle}

\def\beqa{\begin{eqnarray}}
\def\eeqa{\end{eqnarray}}
\def\beqas{\begin{eqnarray*}}
\def\eeqas{\end{eqnarray*}}

\def\blue#1{\textcolor{black}{#1}}

\newtheorem{theorem}{Theorem}[section]
\newtheorem{lemma}[theorem]{Lemma}

\newtheorem{proposition}[theorem]{Proposition}

\newtheorem{remark}[theorem]{Remark}
\newtheorem{example}[theorem]{Example}

\newtheorem{definition}[theorem]{Definition}

\numberwithin{equation}{section}
\newcommand{\hatd}[1]{{}}




\setcounter{section}{0}

\newcommand{\bd}{\begin{displaymath}}
\newcommand{\ed}{\end{displaymath}}
\newcommand{\be}{\begin{equation}}
\newcommand{\ee}{\end{equation}}
\newcommand{\bq}{\begin{eqnarray}}
\newcommand{\eq}{\end{eqnarray}}
\newcommand{\bn}{\begin{eqnarray*}}
\newcommand{\en}{\end{eqnarray*}}

\newcommand{\re}{\mathds{R}}

\def\wt{\widetilde}

\def\P{\mathbb{P}}
\def\E{{\mathbb{E}}}
\newcommand{\R}{{\mathbb R}}
\newcommand{\id}{{\rm id}}

\newcommand{\Mid}{{\ \Big|\ }}

\usepackage[normalem]{ulem}


\usepackage{authblk}

\usepackage{mathtools}
\mathtoolsset{showonlyrefs}
\typeout{get arXiv to do 4 passes: Label(s) may have changed. Rerun}

\definecolor{blue0}{RGB}{0,77,153} 
\definecolor{red0}{RGB}{179,0,77} 
\definecolor{green0}{RGB}{134,219,76} 
\definecolor{gray0}{RGB}{84,97,110}

\title{Optimal Liquidation with Signals: the General Propagator Case}
\author[1]{Eduardo Abi Jaber\thanks{The first author is grateful for the financial support from the Chaires FiME-FDD, Financial Risks, Deep Finance \& Statistics and Machine Learning and systematic methods in finance at Ecole Polytechnique.}}
\author[2]{Eyal Neuman }
\affil[1]{Ecole Polytechnique, CMAP}
\affil[2]{Department of Mathematics, Imperial College London}

 
\begin{document}

 \vspace{-0.5cm}
\maketitle

\begin{abstract}
We consider a class of optimal liquidation problems where the agent's transactions create transient price impact driven by a Volterra-type propagator along with temporary price impact. We formulate these problems as maximization of a revenue-risk functionals, where the agent also exploits available information on a progressively measurable price predicting signal. By using an infinite dimensional stochastic control approach, we characterize the value function in terms of a solution to a  free-boundary $L^2$-valued backward stochastic differential equation and an operator-valued Riccati equation. We then derive analytic solutions to these equations which yields an explicit expression for the optimal trading strategy.
We show that our formulas can be implemented in a straightforward and efficient way for a large class of price impact kernels with possible singularities such as the power-law kernel. 
\end{abstract} 


\begin{description}
\item[Mathematics Subject Classification (2010):] 93E20, 60H30, 91G80
\item[JEL Classification:] C02, C61, G11
\item[Keywords:] optimal portfolio liquidation, price impact, propagator models, predictive signals, Volterra stochastic control
\end{description}

\bigskip

\section{Introduction}

Price impact refers to the empirical fact that the execution of a large order
affects the risky asset's price in an adverse and persistent manner  leading to less favourable prices. Propagator models are a central tool in describing this phenomena mathematically. 
This class of models provides deep insight into the nature of price impact and price dynamics.  It expresses price moves in terms of the influence of past trades, which gives reliable reduced form view on the limit order book. It provides interesting insights on liquidity, price formation and on the interaction between different market participants through price impact. The model's tractability provides a convenient formulation for stochastic control problems arising from optimal execution \citep{bouchaud_bonart_donier_gould_2018, gatheral2010no}. 
More precisely, if the trader's holdings in a risky asset is denoted by $Q=\{Q_t\}_{t\geq 0}$, then the asset price $S_t$ is given by, 
$$
S_t = S_0+ \int_0^tG(t-s)dQ_s + M_t,
$$
where $M$ is a martingale and the price impact kernel $G$ is called a propagator. It can be shown both from theoretical arguments such as market efficiency paradox and empirically that $G(t)$ must decay for large values of $t$, therefore the integral on the right-hand-side of the above equation is referred to as transient price impact (see e.g. \citet[Chapter 13]{bouchaud_bonart_donier_gould_2018}). The two extreme cases where $G$ is Dirac's delta and when $G=1$ are referred to as temporary price impact and permanent price impact, respectively. They are core features in the well known Almgren-Chriss model \citep{AlmgrenChriss1,OPTEXECAC00}, up to a multiplicative constant. 

Considering the adverse effect of the price impact on the execution price, a trader who wishes to minimize her
trading costs has to split her order into a
sequence of smaller orders which are executed over a finite time
horizon. At the same time, the trader also has an incentive to execute
these split orders rapidly because she does not want to carry the risk
of an adverse price move far away from her initial decision
price. This trade-off between price impact and market risk is usually
translated into a stochastic optimal control problem where the trader
aims to minimize a risk-cost functional over a suitable class of
execution strategies, see~\citep{cartea15book, GatheralSchiedSurvey, GokayRochSoner, Gueant:16, citeulike:12047995,N-Sch16} among others. In practice however, apart from focusing on the trade-off between price impact and market risk, many traders and trading algorithms also strive for using short term price predictors in their dynamic order execution schedules. Most of such documented predictors
relate to order book dynamics as discussed in~\cite{leh16moun,Lehalle-Neum18, citeulike:12820703, cont14}.  From the modelling point of view, incorporating signals into execution problems translates into taking into consideration a non-martingale price process, which changes the problem significantly. The resulting optimal strategies in this setting are often random and in particular signal-adaptive, in contrast to deterministic strategies, which are typically obtained in the martingale price case \citep{brigo18, Brigo-21}. Results on optimal trading with signals but without a transient price impact component (i.e. $G=0$) were derived in~\citep{Car-Jiam-2016,Lehalle-Neum18,BMO:19}.  

The special case where the propagator is exponential simplifies the liquidation problem, as the transient price impact can be written as a state variable and the problem becomes Markovian. The exponential propagator case was first solved by \citet{Ob-Wan2005} and by \citet{LorenzSchied}, where further extensions were derived by \citep{GraeweHorst:17,ChenHorstTran:19,SchiedStrehleZhang} among others. In this class of problems, sometimes temporary impact is also included, but trading signals are not taken into account, which leads to deterministic optimal strategies.  In~\citet{NeumanVoss:20}, the liquidation problem with an exponential propagator and a general semimartingale signal was solved and an explicit signal adaptive optimal strategy was derived. 

Results on optimal liquidation problems with a general class of price impact kernels are scarce as the associated stochastic control problem is non-Markovian and often singular. Indeed the transient price impact term and hence the asset execution price encode the entire trajectory of the agent's trading. A first contribution towards solving this problem was made by \citet{GSS}, who solved the deterministic case without signals and without a risk-aversion term. They minimised the following energy functional over left-continuous and adapted strategies  $Q=\{Q_t\}_{t\geq 0}$ with a fuel constraint, i.e. $Q_{T+}=0$ 
$$
C(Q) = \int_{[0,T]}\int_{[0,T]} G(|t-s|)dQ_sdQ_t. 
$$
Here $C(Q)$ represents the trader's transaction costs and $Q$ as before, is the trader's holdings in the risky asset. Under the assumption that the convolution kernel $G$ is non-constant, nonincreasing, convex and integrable, a necessary and sufficient first order condition in the form of a Fredholm equation was derived in \cite{GSS}. This condition was used in order to derive the optimal strategy for several examples of kernels  including the power law kernel. These results were further improved by \citet{Alf-Schied-13} who assumed that $G$ is completely monotone and satisfies $G''(+0)<\infty$, which excludes the case of the fractional kernel. They characterised the optimal strategy in terms of an infinite dimensional Riccati equation.
 
The main objective of this paper is to solve a general class of liquidation problems in the presence of linear transient price impact, which is induced by a nonnegative-definite Volterra-type propagator, along with taking into account a progressively measurable signal. We formulate these problems as a maximization of revenue-risk functionals over a class of progressively measurable strategies. Our solution to these problems solves an open problem put forward in~\cite{Lehalle-Neum18} and also significantly extends the deterministic theory of \citet{Alf-Schied-13}. We develop a novel approach to tackle these problems by using tools from  stochastic Volterra control theory. Our methodology complements and extends the growing literature on linear-quadratic stochastic Volterra  problems \cite{abi2021linear, wang2018linear, kleptsyna2003linear, duncan2013linear, abi2020markowitz,hamaguchi2022linear, aj-22}, and allows for novel explicit formulas even in the case of non-convolution kernels. Indeed, our derivation characterizes the value function in terms of a quadratic dependence on an operator-valued Riccati equation and linear dependence in a solution to a non-standard free-boundary $L^2$-valued backward stochastic differential equation. We then derive analytic expressions for the solutions of these equations which in turn yields an explicit expression for the optimal trading strategy
(see Theorem \ref{eq:MainTheorem} and Proposition \ref{prop-exp-u}). Finally, we show that our formulas can be implemented in a straightforward and efficient way for a large class of price impact kernels.\footnote{We also provide the code of our implementation  at \url{https://colab.research.google.com/drive/1VQasI92YhdBC0wnn_LxMkkx_45VyK1yQ}.}  In particular, our results cover the case of non-convolution singular price impact kernels such as the power-law kernel (see Remark \ref{rem-examples} for additional examples). 

The results in this paper significantly improve the results of~\cite{NeumanVoss:20} as we allow for a general Volterra propagator instead of an exponential one. This turns the stochastic control problem to become non-Markovian as the state variables (e.g. the execution price) depend on the entire trading trajectory, unlike the exponential kernel case where the transient price impact could be regraded as a mean-reverting state variable hence the problem become Markovian (see Lemma 5.3 in \cite{NeumanVoss:20}). We also generalise the price process dynamics in \cite{NeumanVoss:20}, which was assumed to be a semimartingale, while here we assume that it is a progressively measurable process. 

Our main results also substantially generalise the results of \citet{Alf-Schied-13} in various directions. First, in contrast to \cite{Alf-Schied-13}, we assume that the price process is non-martingale, which turns the problem from deterministic optimisation to stochastic, and introduces new ingredients in the value function, {which depend on $L^2$-valued free-boundary BSDE  (see \eqref{eq:chi}, \eqref{eq:candidatevaluefun} and \eqref{back-theta}). }Moreover it is assumed in \cite{Alf-Schied-13} that $G$ is a convolution kernel which is completely monotone and satisfies $G''(+0)<\infty$. In this work we show that these assumptions are {not necessary} and in fact power law kernels of the form $G(t) =t^{-\beta}$ for $0<\beta<1/2$ are included in our class of admissible kernels. The solution to the problem in \cite{Alf-Schied-13} is given in terms of an infinite dimensional Riccati equation which takes values in $\mathbb{R}$ (see eq. (5) and (6) therein). This could be compared with our operator-valued Riccati equation in \eqref{eq:riccati_psiBold} which is one of the main ingredients of the solution (see \eqref{eq:optimalcontrol_monotone} and \eqref{psi-gamma}). However, as stated in Section 1.3 of  \cite{Alf-Schied-13}, their Riccati equation in general cannot be solved explicitly, and the only tractable example provided is when $G$ is a finite sum of exponential kernels. In this work we solve explicitly the operator Riccati equation (see \eqref{def:riccati_operator}) along with all the other ingredients of the value function. Moreover, in Section \ref{sec-simulation} we give a detailed numerical scheme to implement these explicit solutions as a finite-dimensional projection of the operators. Lastly, in contrast with  \cite{Alf-Schied-13} we incorporate a risk aversion term into the cost functional \eqref{eq:optimalvalue_monotone}, which has an important practical role as it reflects the risk of holding inventory.

Finally, our paper is also related to a recent work by~\citet{forde_22}, where a  specific example of an optimal liquidation problem with
power-law transient price impact, a Gaussian signal, and without a risk-aversion term was studied. In the main result of \cite{forde_22}, a first order condition for the solution was derived in terms of Fredholm integral equations of the first kind. Then, examples for explicit solutions were worked out for a specific choice of signals, which are convolution of fractional kernels with respect to Brownian motion.

 
\paragraph{Organization of the paper.} The remainder of the paper is structured as follows. In Section \ref{sec:setup}, the class of liquidation problems is defined. In Section \ref{sec-trans}, we transform the cost functional and state variables in order to formulate the problem in an infinite dimensional setting. Section \ref{sec-results} is dedicated to the presentation of the main results namely Theorem \ref{eq:MainTheorem} and Proposition \ref{prop-exp-u}. In Section \ref{sec-simulation}, we provide a numerical scheme for plotting the optimal strategy in \ref{sec-simulation} and provide illustrative examples for such computations. Sections \ref{sec-thm-pf} and \ref{sec-prop-expl} are dedicated to the proofs of Theorem \ref{eq:MainTheorem} and Proposition \ref{prop-exp-u}, respectively. Finally, Sections \ref{sec-pf-id}--\ref{sec-lem-inv} contain proofs to some auxiliary results.

\section{Model setup and problem formulation} \label{sec:setup}
We present the class of optimal liquidation problems which are studied in this paper. Let $T>0$ denote a finite deterministic time horizon and fix a filtered probability space $(\Omega, \mathcal F,(\mathcal F_t)_{0 \leq t\leq T}, \P  )$ satisfying the usual conditions of right continuity and completeness. We fix a progressively measurable process~$P=(P_t)_{0 \leq t\leq T}$
 %
satisfying 
\be \label{ass:P} 
 { \mathbb E\left[ \sup_{ 0\leq t \leq r  \leq T} \left( \mathbb E [P_r | \mathcal F_t] \right)^2 \right] <\infty}.
\ee
The technical assumption \eqref{ass:P} ensures in particular that 
$$  \mathbb E\left[\int_0^T P^{2}_{s}ds \right] < \infty,
$$
and is readily satisfied for instance for our chief numerical example considered in Section \ref{s:numsub} below on  $P_t = \int_0^t I_s ds + M_t$ where $M$ is a  martingale and $I$ is an Ornstein-Uhlenbeck process.

We consider a trader with an initial position of {$q \in \mathbb{R}$} shares in a risky asset. The number of shares the trader holds at time $t\in [0,T]$ is prescribed as 
    \begin{align} \label{def:Q}
    Q_t^u := q -\int_0^t u_s ds, 
    \end{align}
where $(u_s)_{s \in [0,T]}$ denotes the trading speed which is chosen from the set of admissible strategies
\be \label{def:admissset} 
\mathcal A :=\left\{ u \, : \, u \textrm{ progressively measurable s.t. }  \mathbb E\Big[  \sup_{t \leq T} u_t^2 \Big] <\infty \right\}.
\ee
We assume that the trader's trading activity causes price impact on the risky asset's execution price. 
We consider a Volterra kernel $G:[0,T]^2 \rr \mathbb{R}_+$, that is $G(t,s)=0$ for $s\geq t$, within a certain class of square-integrable admissible kernels which will be defined in Definition~\ref{def-ker-admis} below. Then, we introduce the actual price $S^u$ in which the orders are executed along a certain admissible strategy $u$:
\be \label{def:S}
S_t^u := P_{t} - \lam u_t -  Z^u_t, \qquad 0 \leq t \leq T, 
\ee
where $P$ plays the role of the unaffected price of the risky asset and
\be \label{z-def} 
Z_t^u:= h_{0}(t)+ \int_0^t G(t,s)u_s ds, \qquad 0 \leq t \leq T, 
\ee
for some {continuous} 
deterministic function $h_{0}:[0,T] \to \mathbb R$.

 Specifically, the trader's transaction not only instantaneously affects the execution price in~\eqref{def:S} in an adverse manner through a linear temporary price impact $\lambda > 0$ \`a la~\citet{OPTEXECAC00}; it also induces a longer lasting price distortion $Z^u$ because of the linear transient price impact (see e.g. \citet{GSS}). 
 
We now suppose that the trader's optimal trading objective is to unwind her initial position $q \in \R$ in the presence of temporary and transient price impact, along with taking into account the asset's general price, through maximizing the performance functional 
\begin{equation} \label{def:objective}
J(u) := \mathbb{E} \Bigg[ \int_0^T (P_t - Z^u_t) u_t dt - \lambda \int_0^T u^2_t dt + Q_T^u P_T -\phi \int_0^T (Q_t^u)^2 dt - \varrho (Q_T^u)^2 \Bigg], 
\end{equation}
via her selling rate $u \in \mathcal A$. The first three terms in~\eqref{def:objective} represent the trader's terminal wealth;  that is, her final cash position including the accrued trading costs which are induced by temporary and transient price impact as prescribed in~\eqref{def:S}, as well as her remaining final risky asset position's book value. The fourth and fifth terms in~\eqref{def:objective} implement a penalty $\phi {\geq} 0$ and $\varrho {\geq} 0$ on her running and terminal inventory, respectively. Also observe that $J(u) < \infty$ for any admissible strategy $u \in \mathcal A$.

The goal of this paper is to find the optimal strategy $u^*$ that maximizes the trader's performance functional: 
\begin{align} \label{control-prob}
 J(u^*)=\sup_{u \in \mathcal A} J(u).
\end{align}
Our main result summarised in Theorem~\ref{eq:MainTheorem} and Proposition~\ref{prop-exp-u} shows that, remarkably,  the problem can be solved explicitely despite the path-dependency of the model. More precisely, we show that the optimal strategy  $u^*$ is explicitly given by the solution to a  linear Volterra  equation of the form
\be 
 u^*_t = a_t + \int_0^t B(t,s)u_s^*ds,
 \ee
where $\{a_t\}_{t\in [0,T]}$ is a stochastic process that depends linearly on the price process $P$ and $B$ is a deterministic kernel. Both $a$ and $B$ are given explicitly in  \eqref{eq:aB} below, in terms of the inputs of the model and  of the price impact kernel $G$, under very mild assumptions on $G$ detailed in the next paragraph. Such expressions lend themselves naturally to numerical discretization schemes as shown in Section~\ref{sec-simulation}.

After specifying the optimization problem \eqref{control-prob} we introduce some additional assumptions on to the class of \emph{price impact kernels} or \emph{propagators}, which will be used throughout this paper. 
We say that a Volterra kernel $G:[0,T]^2 \to \R_{+}$ with $G(t,s)=0$ whenever $s \geq t$, is nonnegative definite if for every $f\in L^2\left([0,T],\mathbb R\right)$ we have  
\be \label{pos-def}
\int_{0}^T\int_{0}^T\big(G(t,s)+ G(s,t) \big)f(s)f(t)dsdt \geq 0. 
\ee

\begin{remark}  \label{rem-pos-k} 
Note that when 
\begin{align}\label{eq:Gconv}
    G(t,s) = \mathds 1_{\{s<t\}}H(t-s),
\end{align} we can replace \eqref{pos-def} with the following condition   
\be \label{pos-def2}
\int_{0}^T\int_{0}^T H(|t-s|)f(s)f(t)dsdt \geq 0. 
\ee
Note that \eqref{pos-def2} is the main assumption on the price impact kernel in \citet{GSS}. As discussed in Section 2 of \cite{GSS}, for the case where the price process $P$ is a martingale (i.e.~there is no price predicting signal), the coefficients $\lambda, \phi =0$ and we restrict to strategies with 
a fuel constraint, that is $Q_T=0$, then \eqref{pos-def2} ensures that the model does not admit price manipulations, and in particular round trips (see Definition 2.5 therein and discussion afterwards). This fact can be extended easily to the case of positive $\lambda, \phi$ as this adds quadratic terms to the cost functional \eqref{def:objective}. However, once the price process is no longer a martingale, as in the setting of this paper, round trips are possible. We refer to figure 3 in \cite{NeumanVoss:20} for some illustrations of this phenomenon when H is an exponentially decaying kernel.
\end{remark}

Volterra convolution kernels of the form \eqref{eq:Gconv} are nonnegative  definite kernels whenever the function $H$ is bounded, non-increasing  and convex  (see Example 2.7 in \cite{GSS}). The following lemma, which is a slight generalization of Bochner's theorem in one direction, gives an additional characterisation for an important subclass of nonnegative definite kernels. The proof of Lemma \ref{lemma-pos-def-ker} is postponed to Section \ref{sec-lem-inv}. 
\begin{lemma} \label{lemma-pos-def-ker} 
Let $G$ be of the form \eqref{eq:Gconv} with $H:(0,\infty) \rr [0,\infty]$. If $H$ can be represented as
\be \label{g-spec} 
H(t)= \int_{\mathbb{R}_+} e^{-xt} \mu(dx),  \quad 0\leq t\leq T, 
\ee
where $\mu$ is a nonnegative measure, then $G$ is nonnegative definite. 
\end{lemma}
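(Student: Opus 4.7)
The plan is to substitute the Laplace-type representation \eqref{g-spec} into the quadratic form defining nonnegative-definiteness, interchange the order of integration, and then argue that the kernel $e^{-x|t-s|}$ is itself nonnegative definite on $[0,T]^2$ for every $x\geq 0$. The latter reduces the whole problem to proving a single algebraic identity expressing $e^{-x|t-s|}$ as a sum of squares.

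Fix $f \in L^2([0,T],\mathbb R)$. Since $e^{-x|t-s|} \le 1$ and $|f(s)f(t)|$ is integrable on $[0,T]^2$, we have by Tonelli
\be
\int_{\mathbb R_+}\mu(dx)\int_0^T\!\!\int_0^T e^{-x|t-s|}|f(s)||f(t)|\,ds\,dt = \int_0^T\!\!\int_0^T H(|t-s|)|f(s)||f(t)|\,ds\,dt,
\ee
so that if the right-hand side is finite (otherwise the conclusion of the lemma is vacuous because the original integral is not well defined), Fubini lets me write
\be
\int_0^T\!\!\int_0^T H(|t-s|)f(s)f(t)\,ds\,dt = \int_{\mathbb R_+}\mu(dx)\int_0^T\!\!\int_0^T e^{-x|t-s|}f(s)f(t)\,ds\,dt.
\ee
Thus it suffices to show the inner integral is nonnegative for every $x\geq 0$.

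The core step is the identity, valid for all $x>0$ and $s,t\in[0,T]$,
\be
e^{-x|t-s|} = e^{-x(t+s)} + 2x\int_0^{\min(s,t)} e^{-x(t-r)}e^{-x(s-r)}\,dr,
\ee
which I would verify by a one-line direct computation assuming $s\leq t$ and then invoke the symmetry in $s,t$. Multiplying by $f(s)f(t)$, integrating over $[0,T]^2$, and interchanging the $r$-integral with the $s,t$-integrals, I obtain the sum-of-squares representation
\be
\int_0^T\!\!\int_0^T e^{-x|t-s|}f(s)f(t)\,ds\,dt = \left(\int_0^T e^{-xt}f(t)\,dt\right)^{\!2} + 2x\int_0^T\left(\int_r^T e^{-x(t-r)}f(t)\,dt\right)^{\!2} dr,
\ee
which is manifestly $\ge 0$. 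The case $x=0$ gives simply $(\int_0^T f(t)\,dt)^2\ge 0$. Integrating against the nonnegative measure $\mu$ completes the proof.

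The only delicate point is the Fubini step at the very beginning, which requires the bilinear form to be finite in absolute value; this is handled either by the $L^2$ bound above together with an implicit local integrability assumption on $H$, or by noting that an infinite value renders the nonneg\-ativity claim trivially meaningless. All the rest is the explicit algebraic identity, which is the main workhorse.
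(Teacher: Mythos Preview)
Your proof is correct and follows the same overall structure as the paper's: apply Fubini to reduce to showing that the exponential kernel $e^{-x|t-s|}$ is nonnegative definite for each $x\ge 0$, then integrate against the nonnegative measure $\mu$. The paper dispatches the exponential case by citing Example~2.7 in \cite{GSS} (bounded, nonincreasing, convex $H$ yields a nonnegative definite convolution kernel), whereas you supply a direct sum-of-squares identity. Your argument is therefore self-contained and arguably more transparent, since the identity
\[
e^{-x|t-s|} = e^{-x(t+s)} + 2x\int_0^{s\wedge t} e^{-x(t-r)}e^{-x(s-r)}\,dr
\]
makes the positivity manifest without appealing to any external convexity criterion. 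You are also more explicit than the paper about the integrability needed for the Fubini step, which the paper simply asserts; your observation that $e^{-x|t-s|}\le 1$ together with $f\in L^2$ handles the cases of interest (and the residual caveat about possible infiniteness is harmless here since the lemma is only used in the paper for kernels in $\mathcal G$, where $H\in L^2$ and the bilinear form is automatically finite).
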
 

We define the following class of admissible kernels, which will be considered throughout this paper.  
\begin{definition} [Class of admissible kernels $\mathcal G$]  \label{def-ker-admis} 
We say that a nonnegative definite Volterra kernel $G:[0,T]^2 \mapsto \R_{+}$ is in the class of kernels $\mathcal G$ if it satisfies the following conditions:  
\begin{align}\label{eq:assumtionG}
	\begin{split}
	\sup_{t\leq T} \int_0^T |G(t,s)|^2 ds  + \sup_{s\leq T} \int_0^T |G(t,s)|^2 dt< \infty, \\
	\lim_{h\to 0} \int_0^T |G(t + h, s) - G(t, s)|^2 ds = 0, \quad t \leq T.
	\end{split}
\end{align}  
\end{definition} 
\begin{remark} \label{rem-kernels} 
Note that any convolution kernel  $G(t,s)= \mathds{1}_{\{s<t\}} H(t-s),$ with $H \in L^2([0,T], \R)$ satisfies \eqref{eq:assumtionG}.
\end{remark}

\begin{example} \label{rem-examples} 
We present some typical examples for price impact kernels which belong to the class $\mathcal G$. The first three kernels are of convolution type \eqref{eq:Gconv}.
\begin{enumerate}
\item  In \cite{bouchaud2004fluctuations, gatheral2010no} among others the following kernel was introduced:   
$$ H(t)=\frac{\ell_{_{0}}}{(\ell_{0}+t)^{\beta}}, \quad  \textrm{for }  \beta >0,$$
where $\ell_{0}>0$ is a constant. 
 \item Kernels of the form 
 $$ 
 H(t)=\frac{1}{t^{\beta}}, \quad  \textrm{for }  0< \beta < 1/2,
 $$
were proposed by Gatheral in \cite{gatheral2010no}. Thanks to Lemma \ref{lemma-pos-def-ker} and to the spectral representation of fractional kernels (see e.g. eq. (1.3) in \cite{abi-euch}) we observe that this singular kernel is indeed in $\mathcal G$. 
 \item The case where $H (t) =  e^{-\rho t}$, for some constant $\rho >0$, was proposed by Obizhaeva and Wang \cite{Ob-Wan2005}. Clearly any linear combination of such kernel is also applicable. 
 \item The following non-convolution kernel was used in order to model price impact in bonds trading (see Section 3.1 of  \citet{brigo2020}): 
 $$
G(t,s) = f(t-T) H(t-s)  \mathds 1_{\{s<t\}},
$$
where $H$ is a usual decay kernel as in the above examples and $f$ is a bounded function satisfying $f(0)=0$, due to the terminal condition on the bond price.  
 \end{enumerate}
\end{example} 


\section{Transformation of the performance functional} \label{sec-trans} 
Considering the state process $Z^u$ in \eqref{z-def}, we notice that the stochastic control problem \eqref{control-prob} is path-dependent. In this section we transform the performance functional  \eqref{def:objective} and state variables so they could fit an infinite-dimensional stochastic control famework.

One can notice at this stage that \eqref{def:objective} is linear-quadratic in $(Z,Q)$. For convenience, we will incorporate the terminal quadratic term to the running cost by using integration and \eqref{def:Q}: 
\be \label{d-sq} 
 (Q^u_T)^2 = q^2 - 2\int_0^T Q^u_s u_s ds.
\ee
We moreover define  
\be \label{y-z} 
Y_t^u :=  Z_t^u - 2\varrho Q_t^u. 
\ee
From \eqref{def:objective}, \eqref{d-sq} and \eqref{y-z} we get that $J$ defined in \eqref{def:objective} can be re-written as
\begin{equation} \label{def:objective2}
 J(u) =\mathbb{E} \Bigg[ \int_0^T (P_t -  Y^u_t) u_t dt - \lambda \int_0^T u^2_t dt +Q_T^u P_T  -\phi \int_0^T (Q_t^u)^2 dt  \Bigg] -\varrho q^2.
\end{equation}

We further define,  
\be \label{h-tilde} 
\begin{aligned}
\tilde h_0(t) &:= h_0(t) - 2\varrho q \\
  \tilde G(t,s) &:=2 \varrho  \mathds{1}_{\{s<t\}} + G(t,s), \quad t,s \leq T. 
\end{aligned}
\ee
Together with \eqref{def:Q}, \eqref{z-def} and \eqref{y-z} we get, 
\be  \label{y-h-tilde} 
Y_t^u  = \tilde h_0(t) + \int_0^t \tilde G(t,s) u_s ds. 
\ee 
We further introduce a new state variable, 
\be
 X^u:=(Y^u, Q^u)^\top. 
\ee 
Note that from \eqref{def:Q} and \eqref{y-h-tilde} it follows that we can rewrite $X^u$ as follows: 
$$ 
X^u_t = g_0(t)  +  \int_0^t K(t,s) u_s ds, 
$$
where  
\be \label{g0-k-def} 
 g_0(t) := (\tilde h_0(t), q )^\top, \quad K(t,s) := (\tilde G(t,s), - \mathds{1}_{\{s\leq t\}})^\top. 
 \ee
We also define the so-called \emph{controlled adjusted forward process} as follows: 
\be \label{eq:gu}
\begin{aligned}
g^u_t(s) &= \mathds{1}_{\{s\geq t\}} \E\left[X^u_s - \int_t^s K(s,r) u_r dr  \Big | \mathcal F_t \right]  \\
&= \mathds{1}_{\{s\geq t\}} \left(g_0(s) + \int_0^t K(s,r)u_r dr\right) \\ 
&= \mathds{1}_{\{s\geq t\}} \left( \tilde h_{0}(s) + \int_{0}^{t}\tilde G(s,r)u_{r}dr, Q^u_{t}\right)^{\top}.
\end{aligned}
\ee
Note that the second component of $g^u_t(s)$ is always equal to $Q^u_t$ (since $Q$ is Markovian) and that 
\be \label{g-def} 
g^u_t(t)=X^u_t = (Y^{u}_{t}, Q_{t}^{u} )^\top, \quad t \leq T.
\ee

\section{Main Results} \label{sec-results} 
In this section, we derive explicitly the maximiser of \eqref{control-prob}. Before stating this result we introduce some essential definitions of function spaces, integral operators and stochastic processes.   
\subsection{Function spaces, integral operators} 
We denote by $\langle \cdot, \cdot \rangle_{L^2}$ the inner product on $L^2([0,T], \R^2)$, that is 
\be \label{in-prod} 
\langle f, g\rangle_{L^2} := \int_0^T f(s)^{\top} g(s) ds, \quad f,g\in L^2\left([0,T],\mathbb R^2\right). 
\ee
We define $L^2\left([0,T]^2,\mathbb R^{2 \times 2}\right)$ to be the space of measurable kernels $\Sigma:[0,T]^2 \to \R^{2\times 2}$ such that 
\begin{align*}
\int_0^T \int_0^T |\Sigma(t,s)|^2 dt ds < \infty.
\end{align*}
The notation $|\cdot|$ stands for a matrix norm, and in particular we have
$$
\int_0^T \int_0^T |\Sigma_{i,j}(t,s)|^2 dt ds < \infty, \quad \textrm{for all } i,j =1,2. 
$$
For any $\Sigma, \Lambda \in  L^2\left([0,T]^2,\mathbb R^{2\times 2}\right)$ we define the $\star$-product as follows
\begin{align*}
(\Sigma \star \Lambda)(s,u) := \int_0^T \Sigma(s,z) \Lambda(z,u)dz, \quad  (s,u) \in [0,T]^2,
\end{align*}
which is a well-defined kernel in $L^2\left([0,T]^2,\mathbb R^{2\times 2}\right)$ due to Cauchy-Schwarz inequality.  For any  kernel $\Sigma \in L^2\left([0,T]^2,\mathbb R^{2\times 2}\right)$, we denote by {$\boldsymbol \Sigma$} the integral operator   induced by the kernel $\Sigma$ that is 
\begin{align*} 
({\boldsymbol \Sigma} g)(s):=\int_0^T \Sigma(s,u) g(u)du,\quad g \in L^2\left([0,T],\mathbb R^2\right).
\end{align*}
$\boldsymbol \Sigma$ is a linear bounded operator from  $L^2\left([0,T],\mathbb R^2 \right)$ into itself. 
For $\boldsymbol{\Sigma}$ and $\boldsymbol{\Lambda}$ that are two integral operators induced by the kernels $\Sigma$ and $\Lambda$  in $L^2\left([0,T]^2,\mathbb R^{2\times 2}\right)$, we denote by $\boldsymbol{\Sigma}\boldsymbol{\Lambda}$ the integral operator induced by the kernel $\Sigma\star \Lambda$.

We denote by $\Sigma^*$ the adjoint kernel of $\Sigma$ for $\langle \cdot, \cdot \rangle_{L^2}$, that is 
\begin{align*} 
\Sigma^*(s,u) &= \; \Sigma(u,s)^\top, \quad  (s,u) \in [0,T]^2,
\end{align*}
and by $\boldsymbol{\Sigma}^*$ the corresponding adjoint integral operator.  

We recall that an operator $\boldsymbol{\Sigma}$ as above is said to be non-negative definite if $\langle \boldsymbol{\Sigma}f,f\rangle_{L^{2}} \geq0$ for all $f\in L^2\left([0,T],\mathbb R^2\right)$. It is said to be positive definite if $\langle\boldsymbol{\Sigma}f,f\rangle_{L^{2}}> 0$ for all $f\in L^2\left([0,T],\mathbb R^2\right)$ not identically zero.

\subsection{Essential operators for our setting}


\paragraph{The $\Gamma^{-1}_t$ operator:}
Recall that $\tilde G$ was defined in \eqref{h-tilde}. We define $\boldsymbol{\tilde G}_t$ as the operator induced by the kernel $\tilde G(s,u)\mathds 1_{\{u \geq t\}}$. We introduce 
\begin{align}\label{eq:schur}
    \boldsymbol{D}_t := 2\lambda \id +  (\boldsymbol{\tilde G}_t + \boldsymbol{\tilde G}^*_t) + 2\phi  \boldsymbol{1}^*_t \boldsymbol{1}_t,
\end{align}
where $\id$ is the idendity operator, i.e.~$(\id f)(t)=f(t)$, $\boldsymbol{1}_t$ is the integral operator induced by the kernel 
\be \label{op-one} 
\mathds 1_t(u,s) := \mathds 1_{\{u \geq  s\}} \mathds 1_{\{s \geq t\}} .
\ee
The following lemma, which is proved in Section \ref{sec-lem-inv}, provides the invertibility of $\boldsymbol{D}_t$, which is essential for upcoming definitions.  

\begin{lemma} \label{lemmma-inv-op} 
Assume that $\lambda>0$ and $\varrho, \phi \geq 0$. Then, for any $G \in \mathcal G$, the operator $\boldsymbol{D}_t$ is {positive} definite, self-adjoint and invertible. 
\end{lemma}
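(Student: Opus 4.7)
The plan is to verify the three asserted properties in turn: self-adjointness, positive definiteness (with a quantitative lower bound), and then invertibility as a consequence.

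First, self-adjointness of $\boldsymbol{D}_t$ is immediate from inspection of the three summands: $2\lambda\id$ is trivially self-adjoint; $\boldsymbol{\tilde G}_t+\boldsymbol{\tilde G}^*_t$ is a sum of an operator and its adjoint; and $2\phi\,\boldsymbol{1}^*_t\boldsymbol{1}_t$ is of the form $\boldsymbol{A}^*\boldsymbol{A}$. I would dispatch this in one line and move on.

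Second, and this is the main analytic content, I would establish the lower bound $\langle \boldsymbol{D}_t f, f\rangle_{L^2}\geq 2\lambda\|f\|_{L^2}^2$ for every $f\in L^2([0,T],\R)$. The term $\langle 2\lambda\id\, f,f\rangle_{L^2}=2\lambda\|f\|^2$ is obvious, and $\langle 2\phi\boldsymbol{1}_t^*\boldsymbol{1}_t f,f\rangle_{L^2}=2\phi\|\boldsymbol{1}_t f\|^2\geq 0$. The crux is to show $\langle(\boldsymbol{\tilde G}_t+\boldsymbol{\tilde G}^*_t)f,f\rangle_{L^2}\geq 0$. Since $\tilde G(s,u)=0$ for $u\geq s$ (Volterra property), the indicator $\mathds 1_{\{u\geq t\}}$ in the kernel of $\boldsymbol{\tilde G}_t$ forces both variables into $[t,T]$, so
\[
\langle(\boldsymbol{\tilde G}_t+\boldsymbol{\tilde G}^*_t)f,f\rangle_{L^2}=\int_t^T\!\!\int_t^T\bigl[\tilde G(s,u)+\tilde G(u,s)\bigr]f(s)f(u)\,ds\,du.
\]
Using $\tilde G(s,u)+\tilde G(u,s)=2\varrho+G(s,u)+G(u,s)$ (a.e.) from \eqref{h-tilde}, this splits into $2\varrho\bigl(\int_t^T f\bigr)^2\geq 0$ plus the double integral of $(G(s,u)+G(u,s))f(s)f(u)$ over $[t,T]^2$. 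The latter is nonnegative by applying the nonnegative-definiteness hypothesis \eqref{pos-def} on $G$ to the test function $f\mathds 1_{[t,T]}\in L^2([0,T],\R)$. Combined, $\langle \boldsymbol{D}_t f,f\rangle_{L^2}\geq 2\lambda\|f\|_{L^2}^2$, which, since $\lambda>0$, gives positive definiteness.

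Third, invertibility follows from this coercive lower bound. Injectivity is immediate, and to obtain a bounded inverse I would write $\boldsymbol{D}_t=2\lambda\id+\boldsymbol{K}_t$ with $\boldsymbol{K}_t:=(\boldsymbol{\tilde G}_t+\boldsymbol{\tilde G}^*_t)+2\phi\boldsymbol{1}^*_t\boldsymbol{1}_t$ self-adjoint and nonnegative, noting that $\boldsymbol{K}_t$ is Hilbert--Schmidt (and hence compact) because the kernels $\tilde G\,\mathds 1_{\{u\geq t\}}$ and $\mathds 1_t$ are square-integrable on $[0,T]^2$ thanks to $G\in\mathcal G$ (Definition~\ref{def-ker-admis}). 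The Fredholm alternative then yields surjectivity of $\boldsymbol{D}_t$. Equivalently, the spectral theorem applied to the self-adjoint operator $\boldsymbol{D}_t$ gives $\sigma(\boldsymbol{D}_t)\subset[2\lambda,\infty)$, hence $\boldsymbol{D}_t^{-1}$ exists and $\|\boldsymbol{D}_t^{-1}\|_{\mathrm{op}}\leq (2\lambda)^{-1}$.

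The main obstacle I anticipate is purely bookkeeping around step two: correctly handling the Volterra indicator together with the restriction $\mathds 1_{\{u\geq t\}}$ so that the integral genuinely reduces to a symmetric double integral over $[t,T]^2$, and then noticing that one can feed $f\mathds 1_{[t,T]}$ into the hypothesis \eqref{pos-def} to close the estimate. Everything else is standard functional analysis.
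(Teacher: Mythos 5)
Your proof is correct and follows essentially the same route as the paper: self-adjointness by inspection, then nonnegative definiteness of $\boldsymbol{\tilde G}_t+\boldsymbol{\tilde G}^*_t$ by using the Volterra indicators to reduce the quadratic form to $[t,T]^2$, splitting off the $2\varrho\bigl(\int_t^T f\bigr)^2$ term, and applying \eqref{pos-def} to the truncated function $f\mathds 1_{[t,T]}$, which yields the coercive bound $\langle\boldsymbol{D}_t f,f\rangle_{L^2}\geq 2\lambda\|f\|_{L^2}^2$. Your third step is in fact slightly more careful than the paper's, which asserts ``positive definite, hence invertible'' and defers the spectral/compactness argument to Lemma~\ref{lem-d-bnd}.
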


Using Lemma~\ref{lemmma-inv-op}, we can therefore define an operator $\boldsymbol{\Gamma}_t^{-1}$ by 
 \be \label{op-gam-inv} 
\begin{aligned}
\boldsymbol {\Gamma}_t^{-1} :=  \left(\begin{matrix}
 \boldsymbol{D}_t^{-1}  &-2 \phi \boldsymbol{D}_t^{-1} \boldsymbol{1}^*_t \\
-2\phi \boldsymbol{1}_t \boldsymbol{D}_t^{-1} & - {2\phi} \id + 4\phi^2  \boldsymbol{1}_t \boldsymbol{D}_t^{-1} \boldsymbol{1}^*_t
\end{matrix}\right).
\end{aligned}
\ee 
We note that for $\phi >0$, $\boldsymbol \Gamma^{-1}_t$ is the inverse of the operator 
\be \label{op-gamma} 
\begin{aligned}
\boldsymbol{\Gamma}_t 
&= \left(\begin{matrix}
\boldsymbol{D}_t-2\phi  \boldsymbol{1}^*_t \boldsymbol{1}_t  & -\boldsymbol{1}^*_t \\
-\boldsymbol{1}_t & - \frac 1{2\phi} \id 
\end{matrix}\right). 
\end{aligned}
\ee
Note also that $\boldsymbol \Gamma^{-1}$ solves an operator Riccati equation (see \eqref{psi-gamma}  and Lemma~\ref{L:Psi}).

\subsection{Essential stochastic processes} 
\paragraph{The process $\Theta$:} 
For convenience we introduce the following notation, 
\be \label{ind-def} 
\mathds{1}_t(s):= \mathds{1}_{\{ s \geq t\}},
\ee
{and let $e_1 :=(1,0)^\top$. } 
 We define $\Theta= \{\Theta_{t}(s) : t\in [0,s],\, s\in [0,T]\}$ as follows,  
\be \label{th-sol} 
\Theta_t(s):=- \left(\boldsymbol {\Gamma}_t^{-1}  \mathds{1}_t \E\left[P_{\cdot}- P_T \Mid \mathcal F_t\right] e_1 \right)(s).
\ee
Note that $\Theta$ solves the $L^2$-valued BSDE \eqref{back-theta} (see Proposition \ref{prop-theta}). 

\paragraph{The auxiliary process $\chi$: } 
For $P$ as in \eqref{ass:P} we define the following martingale  
\be \label{m1-mart} 
M_t := \E[P_T \mid \mathcal F_t], \quad 0\leq t\leq T.
\ee
For $K$ as in \eqref{g0-k-def}, we use the notation
\be \label{K-t}
K_t(s):= K(s,t).
\ee
Finally we define the stochastic process $\chi = \{ \chi_t\}_{t \in [0,T]}$ as follows, 
\begin{align}\label{eq:chi}
 \chi_t := -2\varrho q^2 +  \int_t^T \frac 1 {2\lambda} \E\left[   \left(  P_s -  P_T + \langle \Theta_s , K_s \rangle_{L^2} \right)^2  \mid \mathcal F_t \right] ds, \quad 0\leq t \leq T. 
 \end{align}
 Note that $\chi$ in \eqref{eq:chi} solves the following BSDE 
\begin{align}\label{eq:BSDEchi}
\begin{split}
d\chi_t &= \dot{\chi}_t dt + d\widetilde M_t , \quad \chi_T = -2\varrho q^2,  \\
\dot{\chi}_t &= -\frac{1}{2\lambda} \left(P_t - M_t + \langle \Theta_t, K_t\rangle_{L^2}\right)^2,
\end{split}
\end{align}	
where $\widetilde M$ is the following martingale  
\begin{align}\label{eq:zxprop}  
    \widetilde M_t: =  \frac{1}{2\lambda}\mathbb E\left[ \int_0^T  \left(  P_s -  P_T + \langle \Theta_s , K_s \rangle_{L^2} \right)^2 ds  \mid \mathcal F_t \right]. 
\end{align}


\subsection{Solution to the liquidation problem} 
Now we are ready to present our main results. Given the linear-quadratic structure of the performance functional $J$ in \eqref{def:objective2} and  the conditioned state variable $g^{u}$ in 
\eqref{eq:gu}, it is natural to consider a candidate for the value function of the form
\begin{align}\label{eq:candidatevaluefun}
V_t^{u}&:= \frac{1}{2}\left(\langle g_t^{u}, \boldsymbol {\Gamma}_t^{-1} g_t^{u} \rangle_{L^2} + 2\langle \Theta_t, g_t^{u} \rangle_{L^2} +  2\mathbb E[P_T \mid \mathcal F_t] Q_t^{u} +  \chi_t \right), \  0\leq t\leq T,
\end{align}  
which is the infinite dimensional analogue of standard liquidation problems with signals \cite{NeumanVoss:20}. {Indeed the solution presented in Theorem 3.2 of \cite{NeumanVoss:20} for the Markovian case (i.e. for exponential propagator) shows that the optimal trading speed is affine with respect to the state variables in \eqref{g-def} and is also affine with respect to the signal $\mathbb E[P_T \mid \mathcal F_t]$. If one plugs-in this ansatz to the performance functional $J$ in \eqref{def:objective2}, then it would follow that the value function depends in a linear quadratic manner in the state variable $g_t^{u}$ and linearly with respect to the signal. }

In the following definition we define the optimal control and value function in our infinite dimensional setting. Recall the definition of the set of admissible controls $\mathcal A$  in \eqref{def:admissset}. 
\begin{definition} \label{def-sol} 
We say that $u^* \in \mathcal A$ is an optimal strategy and that  $\{V_t^{u^*}\}_{t\geq 0}$ given by \eqref{eq:candidatevaluefun} is the optimal \emph{value process} of the cost functional \eqref{def:objective}  if we have for all 
$0 \leq t \leq T$, 

	\begin{equation}	\label{eq:optimalvalue_monotone}
	V_t^{u^*} = \;  \operatorname*{ess~sup}_{u \in \mathcal A_t(u^*)} \E\left[\int_{t}^{T}  \left( (P_s-Y^u_s) u_s - \lambda u_s^2 - \phi (Q^u_s)^2 \right)ds +  P_T Q^u_T  \Big| \mathcal F_t\right]- \varrho q^2, \quad \P-a.s.
	\end{equation}
where 
\bd \label{admis-u} 
\mathcal A_t(u)=\{ u' \in \mathcal A: u_s' = u_s, \; \text{on } [0,t]\times \Omega, \;  dt\otimes d\P-a.e. \}.
\ed
\end{definition} 

\begin{remark} 
Note that for $u^*$ as in Definition \ref{def-sol} we specifically have for $t=0$,  
$$
V_0^{u^*} = \sup_{u \in \mathcal A}  J(u).
$$
\end{remark} 
Now we are ready to present our main result. We fix a square-integrable deterministic function $h_{0}:[0,T] \rr \re$  as in \eqref{z-def} and $G$ from the class of price impact kernels $\mathcal G$ from Definition \ref{def-ker-admis}. 
We also recall that $Y^{u}$ and $g^{u}$ were defined in \eqref{y-z} and \eqref{eq:gu}, respectively.  
\begin{theorem}\label{eq:MainTheorem}
Assume that $\lambda>0$ and $\varrho, \phi \geq 0$. Then, there exists a unique optimal trading speed $u^* \in \mathcal A$ with  corresponding controlled {trajectories $Y^{u^*}$ and}  $g^{u^*}$ such that 
	\begin{align}	\label{eq:optimalcontrol_monotone}
	u^*_t &= \frac 1 {2\lambda} \big(  \E[ (P_t - P_T) \mid \mathcal F_t]  - Y^{u^*}_t + \langle \Theta_t , K_t \rangle_{L^2} +  \langle \boldsymbol {\Gamma}_t^{-1} K_t  , g_t^{u^*} \rangle_{L^2} \big),
	\end{align}
	for all $t \leq T$. 
Moreover, the optimal value process is given by 
$$
V_t^{*}=  \frac{1}{2}\left( \langle g_t^{u^*}, \boldsymbol {\Gamma}_t^{-1} g_t^{u^*} \rangle_{L^2} + 2\langle \Theta_t, g_t^{u^*} \rangle_{L^2} + 2 \mathbb E[P_T \mid \mathcal F_t] Q_t^{u^*} +  \chi_t \right).
$$
	\end{theorem}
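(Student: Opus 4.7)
I would proceed by a verification argument in infinite dimensions. By design the ansatz $V_t^u$ in \eqref{eq:candidatevaluefun} should satisfy: (i) the terminal identity $V_T^u = P_T Q_T^u - \varrho q^2$; (ii) a supermartingale property of $V_t^u + R_t^u$ for every $u \in \mathcal A$, where $R_t^u := \int_0^t \{(P_s - Y_s^u)u_s - \lambda u_s^2 - \phi(Q_s^u)^2\}ds$; and (iii) the martingale equality in (ii) precisely at the feedback $u^*$ of \eqref{eq:optimalcontrol_monotone}. Together these three properties imply $V_0^u \leq V_0^{u^*} = \sup_{u\in\mathcal A} J(u)$, which is the theorem.

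\noindent\textbf{Dynamics of the four building blocks.} Before applying It\^o's formula to $V^u$, I would record the dynamics of each ingredient. From \eqref{eq:gu}, $g^u$ has a one-sided drift $K_t u_t$ restricted to $\{s \geq t\}$ with shrinking support and boundary value $g_t^u(t) = (Y_t^u, Q_t^u)^\top$. The operator $\boldsymbol \Gamma_t^{-1}$ satisfies the operator Riccati equation referenced in \eqref{psi-gamma} and Lemma \ref{L:Psi}, whose $t$-derivative produces a rank-one term along $K_t$ that will pair with the quadratic-in-$u_t$ contribution below. The $L^2$-valued process $\Theta$ satisfies the free-boundary BSDE of \eqref{back-theta} and Proposition \ref{prop-theta}, while the scalar $\chi$ obeys \eqref{eq:BSDEchi} whose drift is tailored to be $-\tfrac{1}{2\lambda}(P_t - M_t + \langle \Theta_t, K_t\rangle_{L^2})^2$.

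\noindent\textbf{Completion of squares.} Combining these, It\^o's formula applied to $V_t^u$ plus the running reward gives a drift that is quadratic in the instantaneous control $u_t$, of the form $-\lambda u_t^2 + b_t(u)\, u_t + c_t$, with
\begin{align*}
b_t(u) = \E[P_t - P_T \mid \mathcal F_t] - Y_t^u + \langle \Theta_t, K_t \rangle_{L^2} + \langle \boldsymbol \Gamma_t^{-1} K_t, g_t^u \rangle_{L^2}.
\end{align*}
The Riccati equation together with the BSDEs for $\Theta$ and $\chi$ is exactly what forces $c_t = b_t(u)^2/(4\lambda)$, so the drift collapses to $-\lambda (u_t - u_t^*)^2\, dt + dN_t$, with $u_t^* = b_t(u^*)/(2\lambda)$ as in \eqref{eq:optimalcontrol_monotone} and $N$ a local martingale. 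Integrating from $t$ to $T$, using the terminal identifications $\boldsymbol \Gamma_T^{-1}g=0$, $\Theta_T = 0$, $\chi_T = -2\varrho q^2$ and the collapse $g_T^u = 0$ in $L^2$ (so that $V_T^u = P_T Q_T^u - \varrho q^2$), and taking conditional expectations yields (i)--(iii). Existence of $u^* \in \mathcal A$ follows by reading \eqref{eq:optimalcontrol_monotone} as a linear stochastic Volterra equation $u_t^* = a_t + \int_0^t B(t,s) u_s^*\, ds$ with deterministic $L^2$-kernel $B$ built from $\boldsymbol \Gamma^{-1}$ and $\tilde G$ and square-integrable stochastic forcing $a$; unique solvability in $L^2$ is then standard, given $G \in \mathcal G$ and \eqref{ass:P}.

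\noindent\textbf{Main obstacle.} The delicate step is the It\^o computation: both $g^u_t$ and $\boldsymbol \Gamma^{-1}_t$ carry time-dependent supports on $\{s \geq t\}$, so differentiating $\langle g^u_t, \boldsymbol \Gamma^{-1}_t g^u_t \rangle_{L^2}$ and $\langle \Theta_t, g^u_t\rangle_{L^2}$ generates boundary contributions at $s=t$ that must be matched with the $Y_t^u u_t$ and $\phi (Q_t^u)^2$ terms of the running reward and with the free-boundary drift of \eqref{back-theta}. Handling these moving boundaries rigorously and justifying the use of the operator Riccati equation in this $L^2$-valued setting is the technical crux; the rest is linear-quadratic bookkeeping.
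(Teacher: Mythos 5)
Your plan is correct and follows essentially the same route as the paper: the paper also verifies the ansatz \eqref{eq:candidatevaluefun} by a completion-of-squares argument, showing that $V_t^u + \int_0^t\{(P_s-Y_s^u)u_s-\lambda u_s^2-\phi(Q_s^u)^2\}ds + \lambda\int_0^t(u_s-\mathcal T_s(u))^2ds$ is a martingale (Proposition \ref{prop-mart}), using the operator Riccati equation, the $\Theta$- and $\chi$-BSDEs, and the terminal collapse $g_T^u=0$, with admissibility of $u^*$ obtained from the linear Volterra equation \eqref{volt-u}. The moving-boundary It\^o/Leibniz computation you flag as the crux is exactly what the paper isolates in Lemma \ref{L:dynamics} and Lemmas \ref{lem-der1}--\ref{lem-der2}.
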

The proof of Theorem \ref{eq:MainTheorem} is given in Section \ref{sec-thm-pf}. 

In the following proposition we rewrite the optimizer $u^*$, which is given in a feedback form in \eqref{eq:optimalcontrol_monotone}, in an explicit form after observing that the linearity of the process $g^u$ in $u$, yields that $u^*$ in \eqref{eq:optimalcontrol_monotone}  solves the linear  Volterra  equation
\be \label{volt-u} 
 u^*_t = a_t + \int_0^t B(t,s)u_s^*ds,
 \ee
with the process $\{a_t\}_{t\in [0,T]}$ and the kernel $B$ which are given by
	\begin{equation} \label{eq:aB}
	\begin{aligned} 
a_t &:= \frac 1{2\lambda} \left(\E\left[ (P_t - P_T) \mid \mathcal F_t\right]  - \tilde h_0(t) + \langle \Theta_t , K_t \rangle_{L^2} +  \langle \boldsymbol {\Gamma}_t^{-1} K_t  , \mathds 1_t  ( \tilde h_0, q )^\top \rangle_{L^2} \right), \\
B(t,s) &:=  \mathds 1_{\{s<t\}}\frac 1{2\lambda} \left( \langle \boldsymbol {\Gamma}_t^{-1} K_t  , \mathds 1_t  ( \tilde G(\cdot,s), -1 )^\top \rangle_{L^2}   - \tilde G(t,s) \right). 
	\end{aligned}  
	\end{equation}
\begin{proposition} \label{prop-exp-u} 
Assume that $\lambda>0$ and $\varrho, \phi \geq 0$. Then the maximizer of \eqref{def:objective2}, $u^*$ is given by 
$$
    u_t^* = \left((\id - \boldsymbol B)^{-1} a \right) (t), \quad 0\leq t \leq T,  
$$
with $a$ given in \eqref{eq:aB} and $\boldsymbol B$ is the integral operator induced by the kernel $B$ in \eqref{eq:aB}.
\end{proposition}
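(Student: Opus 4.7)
The plan is to derive the Volterra equation \eqref{volt-u} directly from the feedback form \eqref{eq:optimalcontrol_monotone} in Theorem \ref{eq:MainTheorem} by substituting the explicit expressions for $Y^{u^*}_t$ and $g^{u^*}_t$ in terms of $u^*$, and then to invert $\id - \boldsymbol B$ via standard Volterra theory. From \eqref{y-h-tilde} one has $Y^{u^*}_t = \tilde h_0(t) + \int_0^t \tilde G(t,s) u^*_s\, ds$, while from \eqref{eq:gu} together with \eqref{g0-k-def} the adjusted forward process decomposes linearly in $u^*$ as $g^{u^*}_t = \mathds 1_t(\tilde h_0, q)^\top + \int_0^t \mathds 1_t K(\cdot,r)\, u^*_r\, dr$. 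Using bilinearity of the $L^2$ inner product, the term $\langle \boldsymbol{\Gamma}_t^{-1} K_t, g_t^{u^*} \rangle_{L^2}$ splits into a deterministic contribution $\langle \boldsymbol{\Gamma}_t^{-1} K_t, \mathds 1_t(\tilde h_0, q)^\top\rangle_{L^2}$ plus a control-dependent term $\int_0^t \langle \boldsymbol{\Gamma}_t^{-1} K_t, \mathds 1_t K(\cdot, s)\rangle_{L^2}\, u^*_s\, ds$.

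Dividing the resulting identity by $2\lambda$ and rearranging yields $u^*_t = a_t + \int_0^t B(t,s) u^*_s ds$ with $a_t$ and $B(t,s)$ matching \eqref{eq:aB} exactly; to verify the precise form of $B$, one uses that for $s<t$ and $r \geq t$ the indicator satisfies $\mathds 1_{\{s\leq r\}} = 1$, so that $\mathds 1_t K(\cdot,s) = \mathds 1_t (\tilde G(\cdot,s), -1)^\top$ pointwise on $[t,T]$. This step is purely algebraic and reduces to bookkeeping of linear and constant terms in $u^*$.

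It then remains to invert $\id - \boldsymbol B$. The decisive observation is that $B$ is a Volterra kernel since $B(t,s)=0$ for $s \geq t$. Using Lemma \ref{lemmma-inv-op} (which gives $\boldsymbol D_t \succeq 2\lambda\id$ and hence a uniform-in-$t$ operator-norm bound on $\boldsymbol D_t^{-1}$, and in turn, via the block structure \eqref{op-gam-inv} and the obvious $L^2$-bound on $\boldsymbol 1_t$, on $\boldsymbol \Gamma_t^{-1}$) together with the admissibility conditions \eqref{eq:assumtionG} on $G$, I plan to show that $B \in L^2([0,T]^2,\R)$ and $a \in L^2([0,T], \R)$ almost surely, the latter also using \eqref{ass:P}. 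Standard theory of Volterra integral equations of the second kind then produces a convergent Neumann resolvent series for $(\id - \boldsymbol B)^{-1}$ as a bounded operator on $L^2([0,T],\R)$, which combined with the existence of $u^* \in \mathcal A$ from Theorem \ref{eq:MainTheorem} (and uniqueness of solutions to the Volterra equation via a Gronwall-type iteration) yields $u^*_t = ((\id - \boldsymbol B)^{-1} a)(t)$.

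The main obstacle I expect is the quantitative $L^2$-control of $B$: one needs to combine the uniform bound on $\boldsymbol \Gamma_t^{-1}$ with Cauchy--Schwarz applied to the double integral defining $\langle \boldsymbol \Gamma_t^{-1} K_t, \mathds 1_t (\tilde G(\cdot,s),-1)^\top\rangle_{L^2}$, ensuring that the resulting bound is integrable in $(t,s)$ even when $G$ is singular, as in the power-law kernel $G(t,s)=(t-s)^{-\beta}$ with $\beta<1/2$. Once this estimate is in place, all remaining steps are routine Volterra theory.
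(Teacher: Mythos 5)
Your proposal follows essentially the same route as the paper: rewrite the feedback law \eqref{eq:optimalcontrol_monotone} as the linear Volterra equation \eqref{volt-u} by exploiting the affine dependence of $Y^{u^*}$ and $g^{u^*}$ on $u^*$, then invert $\id-\boldsymbol B$ via the resolvent of the Volterra kernel $B$, with the quantitative bounds on $a$ and $B$ (via the uniform operator-norm bound on $\boldsymbol\Gamma_t^{-1}$ coming from Lemma \ref{lemmma-inv-op} and condition \eqref{eq:assumtionG}) and a Gr\"onwall argument for admissibility — exactly the content of the paper's Lemma \ref{bnd-coef}. The proposal is correct; the only cosmetic difference is that the paper invokes the resolvent existence result of Gripenberg et al.\ where you invoke the Neumann series, which amounts to the same thing here.
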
 
The proof of Proposition \ref{prop-exp-u} is given in Section \ref{sec-prop-expl}. 

\blue{
\begin{remark}
Note that the optimal strategy in Proposition \ref{prop-exp-u} is the continuous-time analog to the discrete-time solutions of \citet{AlmgrenChriss1,OPTEXECAC00}, in the special case where $P$ in \eqref{ass:P} is a martingale and the propagator in \eqref{z-def} is $G\equiv 0$. For the  continuous-time version of the aforementioned papers we refer to Chapter of 6.4 of \cite{cartea15book}.\end{remark} }

 \begin{remark} 
 In \cite{NeumanVoss:20} the special case of an exponentially decaying transient price impact of the form $  G (t,s) = \mathds 1_{\{s<t\}} e^{-\lambda (t-s)}$ was considered, along with a semimartingale unaffected price process $P$. The finite variation component $A=\{A_t\}_{t\geq 0}$ of $P$ was interpreted as a price predictive signal observed by the trader. Here we are considering a general Volterra kernel $G$ and the signal takes a more general form as $A_t=\E[P_t-P_T| \mathcal F_t]$ for $P$ progressively measurable.  
\end{remark} 

\begin{remark} 
Theorem \ref{eq:MainTheorem} and Proposition \ref{prop-exp-u} extend the results of \citet{Alf-Schied-13} in a few directions. In contrast to \cite{Alf-Schied-13}, we assume that the price process $P$ is progressively measurable and not necessarily a martingale, which turns the control problem from deterministic to stochastic optimisation, and introduces new ingredients in the value function \eqref{eq:candidatevaluefun}, such as $L^2$-valued free-boundary BSDE (see \eqref{eq:candidatevaluefun} and \eqref{prop-theta}) and linear BSDE \eqref{eq:BSDEchi}. Moreover, it is assumed in \cite{Alf-Schied-13} that $G$ is a convolution kernel which is completely monotone and satisfies $G''(+0)<\infty$. This is a special case of assumption \eqref{pos-def} as implied by Example 2.7
in \cite{GSS}. Here we remove these restricting assumptions, which allows us to consider power law kernels of the form $G(t) =t^{-\alpha}$ for $0<\alpha<1/2$ and non-convolution kernels as in Remark \ref{rem-examples}. Lastly, we incorporate a risk aversion term into the cost functional \eqref{eq:optimalvalue_monotone}, which has an important practical role as it reflects the risk of holding inventory.
\end{remark}

 \begin{remark}
 The solution to the problem in \cite{Alf-Schied-13} is given in terms of an infinite dimensional Riccati equation which takes values in $\mathbb{R}$ (see eq. (5) and (6) therein). More generally, the Riccati equations of \cite{Alf-Schied-13} appear in the context of linear-quadratic stochastic Volterra control problems for the specific case of convolution kernels that admit a representation as Laplace transforms of certain measures, see \cite{abi2021integral,abi2021linear}. This could be compared with our operator-valued Riccati equation in \eqref{eq:riccati_psiBold} which is one of the main ingredients of the solution (see \eqref{eq:optimalcontrol_monotone} and \eqref{psi-gamma}) and which is valid for a larger class of kernels.  More precisely, in the specific case where \( G \) takes a convolution form  with  a completely monotone function as in Lemma~\ref{lemma-pos-def-ker}, then introducing the Markovian auxiliary variables \( \tilde Z^u_t(\rho) := \int_0^t e^{-\rho(t-s)} \, u_s \, ds \), for $\rho \in \mathbb R_+$, allows us to link our state variable \( g_t(s) \) in \eqref{eq:gu} with the family \( (\tilde Z_t(\rho))_{\rho \in \mathbb{R}_+} \), as detailed in \cite[Lemma 3.5]{abi2022laplace}. In this setup, our value function can be re-expressed as linear-quadratic  in \( (\tilde Z, Q) \), involving the term
\[
\int_{\mathbb{R}_+^2} \tilde Z_t(\rho) \, \Lambda_t(\rho, \rho') \, \tilde Z_t(\rho') \, \mu(d\rho) \, \mu(d\rho'),
\]
where the measure $\mu$ comes from \eqref{g-spec} and 
\[
\Lambda_t(\rho, \rho') := \int_t^T e^{-\rho'(s-t)} \left(\boldsymbol{D}_t^{-1}\right) \left(e^{-\rho (\cdot -t)} {1}_{\{t<\cdot\} }\right)(s) \, ds.\] Using the operator Riccati equation satisfied by \( \boldsymbol{D}^{-1}_t \), we can show, following similar computations as in \cite[Proposition 3.7]{abi2022laplace}, that \( \Lambda \) solves an infinite-dimensional Riccati equation, in line with the equations in [6].

As stated in Section 1.3 of  \cite{Alf-Schied-13} their function-valued Riccati equation in general cannot be solved explicitly. Only one tractable example is provided for the case where $G$ is a finite sum of exponential kernels. In Theorem \ref{eq:MainTheorem} and Proposition \ref{prop-exp-u}, we provide an explicit solution to the problem. In Section \ref{sec-simulation}, we show that our formulas can be implemented in a straightforward and efficient way for a large class of price impact kernels. In particular, our results cover the case of non-convolution singular price impact kernels such as the power-law kernel (see Remark \ref{rem-examples} for additional examples).

 \end{remark}
 
\section{Numerical illustration} \label{sec-simulation} 
In this section, we provide an efficient  numerical discretization scheme for the optimal trading speed $u^*$ in \eqref{eq:optimalcontrol_monotone}. We then  illustrate numerically the effect of the transient impact kernel $G$ and the signal on the optimal trading speed.  For simplicity, we will fix throughout this section  the penalization on the running inventory to zero, i.e.~$\phi=0$ in \eqref{def:objective}. The code of our implementation can be found at \url{https://colab.research.google.com/drive/1VQasI92YhdBC0wnn_LxMkkx_45VyK1yQ}.

\subsection{Discretization of the operators}
We will make use of the so-called Nyström method to discretize the following integral equation for $u^*$
$$ u^*_t = a_t + \int_0^t B(t,s)u_s^*ds, \quad t\in [0,T],$$
recall \eqref{volt-u}, where $a$ and $B$ are given by \eqref{eq:aB}.

Fix $n\in \mathbb N$ and a partition $0=t_0<t_1<t_2<\ldots<t_n=T$ of $[0,T]$. A discretization of the equation for $u^*$ leads to the approximation of the values  $(u^*_{t_i})_{i=0,\ldots, n}$ by the vector $u^{(n)} \in \R^{n+1}$ given by
\begin{align}
u^{(n)}:= (I_{n+1}-B^{(n)})^{-1} a^{(n)}, \label{eq:approxun}
\end{align} 
with  $a^{(n)} \in \R^{n+1}$ and $B^{(n)} \in \R^{(n+1)\times (n+1)}$ given  by\footnote{We note that indices count for vectors and matrices start from $0$.}
\begin{align*}
 a^{(n)} &:=(a_{t_0}, a_{t_1}, \ldots, a_{t_n})^\top,\\
B^{(n)}_{ij} &:= 1_{\{j \leq i-1\}} \int_{t_j}^{t_{j+1}}B(t_i,s) ds, \quad i,j=0,1,\ldots, n. 
\end{align*}

We now provide a detailed approximation for $a^{(n)}$ and $B^{(n)}$ for the case $\phi=0$.  We start by defining the only quantities that depend on the signal $P$ and the kernel  $G$ that need to be (pre)computed for the approximations.  First, we denote by {$\nu_t$} the  following conditional expectation 
\begin{align}\label{eq:approxR}
\nu_t(s) :=  1_{\{s\geq t\}} \mathbb E[P_s-P_T|\mathcal F_t], \quad s,t \in [0,T],
\end{align}
and by $\mathbf{N}$ the following $(n+1)\times(n+1)$-matrix:
\begin{align}\label{eq:approxRn}
\mathbf{N}^{kj} &:= \nu_{t_j}(t_k), \quad k,j=0,\ldots, n.
\end{align}
Second, we define the following $(n+1)\times (n+1)$  lower and upper triangular matrices $L$ and $U$ where the non-zero elements are given by:
\begin{align}
    L^{kj} &:= \int_{t_j}^{t_{j+1}}   \tilde G(t_k,s) ds, \quad k =0, \ldots, n, \quad j=0,\ldots, (k-1), \label{eq:approxL}\\
 U^{kj} &:=  \int_{t_j}^{t_{j+1}} \tilde G(s,t_k) ds, \quad k=0, \ldots, n, \quad j=k, \ldots, (n-1)\label{eq:approxU},
\end{align}
where $\tilde G$ was defined in \eqref{h-tilde}. 

\paragraph{Step 1. Discretization of $\langle \boldsymbol {\Gamma}_{t_i}^{-1} K_{t_i}  ,  1_{t_i}  ( f, g )^\top \rangle_{L^2}$.}
Fix  $i=0,\ldots,n$ and $f,g\in L^2([0,T],\R)$. We first look at approximating the term $ \langle \boldsymbol {\Gamma}_{t_i}^{-1} K_{t_i}  ,  1_{t_i}  ( f, g )^\top \rangle_{L^2}$ from \eqref{eq:aB}. We note that the expressions simplify  for the case $\phi = 0$ (see \eqref{op-gam-inv}), so that using the fact that $\boldsymbol D_t$ is self-adjoint, we obtain that
\be\label{eq:approxinner}  
\begin{aligned}
 \langle \boldsymbol {\Gamma}_{t_i}^{-1} K_{t_i}  ,  1_{t_i}  ( f, g )^\top \rangle_{L^2}  &=  \langle \boldsymbol 1_{t_i} \tilde G_{t_i},
 \boldsymbol{D}_{t_i}^{-1} f \rangle_{L^2} \nonumber\\
 &= \int_{t_i}^T \tilde G(s,t) (\boldsymbol{D}_{t_i}^{-1} f)(s) ds \nonumber \\
 &\approx \sum_{k=i}^{n-1} \int_{t_k}^{t_{k+1}} \tilde G(s,t_i) ds (\boldsymbol{D}_{t_i}^{-1}f)(t_k). 
\end{aligned}
\ee
The action of the operator $\boldsymbol D_{t_i}$ can be  approximated by the $n\times n$ matrix  $D^{(n)}_{t_i}$ defined by 
\begin{align}
D^{(n)}_{t_i}  &:= 2 \lambda I_n + d^{(n)}_{t_i}, \label{eq:approxDn} \\
d^{(n),kj}_{t_i} &:= L^{kj} 1_{\{i \leq j \leq (n-1)\}} + U^{kj} 1_{\{i \leq k \leq (n-1)\}}, \quad k,j = 0,\ldots, n-1. \label{eq:approxDn2}
\end{align}
Combining this with  \eqref{eq:approxinner} yields the approximation 
\be\label{eq:approxinner2}
\begin{aligned}
 \langle \boldsymbol {\Gamma}_{t_i}^{-1} K_{t_i}  ,  1_{t_i}  ( f, g )^\top \rangle_{L^2} &\approx \sum_{k=i}^{n-1} \int_{t_k}^{t_{k+1}} \tilde G(s,t_i) ds ((D^{(n)}_{t_i})^{-1} f^{(n)})(t_k)  \\
 &=  U_i^\top (D^{(n)}_{t_i})^{-1} f^{(n)},
\end{aligned}
\ee
 where $f^{(n)}:=(f(t_0), f(t_1),\ldots, f(t_{n-1}))^\top$ and $U_i:=(U^{i\, 0},U^{i\,1}, \ldots U^{i \, (n-1)})$, i.e.~the $n$-dimensional $i$-th row of $U$ excluding the last term.

\paragraph{Step 2. Discretization of $B^{(n)}$.}
For $i=0,\ldots,n$ and $j=0,\ldots,(i-1)$, it follows that 
\begin{align}
B^{(n)}_{ij} &=  \int_{t_j}^{t_{j+1}}B(t_i,s) ds  \nonumber \\
&= \frac{1}{2\lambda}     \langle \boldsymbol 1_{t_i}\tilde G_{t_i}  ,  \boldsymbol {D}_{t_i}^{-1} \left(  \int_{t_j}^{t_{j+1}}   \tilde G(\cdot,s) ds \right) \rangle_{L^2}    - \frac{1}{2\lambda} \int_{t_j}^{t_{j+1}}   \tilde G(t_i,s) ds \nonumber \\
&\approx  \frac{1}{2\lambda}    (U_i)^\top (D^{(n)}_{t_i})^{-1} L^{(j)}   - \frac{1}{2\lambda} L^{ij},    \quad i,j=0,\ldots, n, \label{eq:approxBn}
\end{align}
where we used \eqref{eq:approxinner2} for the last identity and $L^{(j)}:=(L^{0\, j},L^{1\, j}, \ldots L^{(n-1)\, j})^\top$, i.e.~the $j$-th column of $L$ excluding the last element. 

\paragraph{Step 3. Discretization of $a^{(n)}$.}  Fix $i=0,\ldots,n$.
Recall from \eqref{th-sol} and \eqref{eq:approxR} that 
$$\Theta_t(s)=- \left(\boldsymbol {\Gamma}^{-1}_t \nu_t e_1 \right)(s),
$$
so that using \eqref{eq:approxinner2} we obtain
\begin{align*}
\langle  K_{t_i}, \Theta_{t_i} \rangle_{L^2} =  - \langle \boldsymbol {\Gamma}^{-1}_{t_i} K_{t_i},   \nu_{t_i} e_1   \rangle_{L^2} 
 \approx - (U_i)^\top (D^{(n)}_{t_i})^{-1}  \mathbf{N}^{i}
\end{align*}
where  $ \mathbf N^i:=(\mathbf N^{0\,i},\mathbf N^{i\,1},\ldots, \mathbf N^{i\,(n-1)})$, i.e.~the $i$-th column of $\mathbf N$ defined in \eqref{eq:approxR} excluding the last term. Another application of \eqref{eq:approxinner2} yields the following approximation for $a_{t_i}$:
\be\label{eq:approxan}
\begin{aligned}
a_{t_i} &\approx \frac 1{2\lambda} \left(\mathbf{N}^{i\,i}  - \tilde h_0(t_i)  - (U_i)^\top (D^{(n)}_{t_i})^{-1} \mathbf{N}^i +  (U_i)^\top (D^{(n)}_{t_i})^{-1} \tilde h_0^n \right), \quad i=0,\ldots, n,
\end{aligned}
\ee
with $\tilde h_0^n=(\tilde h_0(t_0), \tilde h_0(t_1),\ldots, \tilde h_0(t_{n-1}))^\top$.

\paragraph{Summary.} To sum up, the implementation is straightforward:

\fbox{\parbox{\textwidth}{\begin{enumerate}
    \item Specify the signal $P$ and the kernel $G$ as inputs and compute the $(n+1)\times (n+1)$ matrices $N$, $L$ and $U$ using  \eqref{eq:approxRn}, \eqref{eq:approxL} and \eqref{eq:approxU}.  (Refer to  Subsection~\ref{s:numsub} below for explicit examples.)
    \item Construct the $n\times n$-matrices $D^{(n)}_{t_i}$ using \eqref{eq:approxDn}-\eqref{eq:approxDn2} for $i=0,\ldots, n$. 
    \item 
Construct the $(n+1)$-vector $a^{(n)}$ using \eqref{eq:approxan} and the $(n+1)\times(n+1)$ matrix  $B^{(n)}$ using \eqref{eq:approxBn}.
\item 
Recover the $(n+1)$-vector for the optimal control path $u^{(n)}$ from \eqref{eq:approxun}. 
\end{enumerate}
}}

\subsection{Numerical examples}\label{s:numsub}
For our numerical illustrations, we fix a uniform partition with mesh size $\Delta t:=T/n$ and we consider a signal of the form
\begin{align*}
  P_t = \int_0^t I_s ds + M_t,
\end{align*}
for some martingale $M$ with $I$ an Ornstein-Uhlenbeck process of the form
\be \label{ou} 
dI_t = - \gamma I_t dt + \sigma dW_t, \quad I_0 \in \mathbb R,
\ee
where $\gamma, \sigma$ are positive constants and $W$ is a Brownian motion. In this case, the conditional expectation process $\nu_t$ given in \eqref{eq:approxR}  can be computed explicitly:
\begin{align*}
\nu_t(s) = I_t \frac{e^{-\gamma(T-t)}-e^{-\gamma(s-t)}}{\gamma} 1_{\{s\geq t\}},
\end{align*}
so that $\mathbf N$ defined in \eqref{eq:approxRn} reads 
\begin{align*}
\mathbf N^{kj} =  I_{t_j} \frac{e^{-\gamma(n-j)\Delta t}-e^{-\gamma({k}-j)\Delta t}}{\gamma} 1_{\{k\geq j\}}, \quad k,j=0, \ldots, n.
\end{align*}

 We will consider two examples of transient impact convolution kernels from Remark \ref{rem-examples}: the exponential kernel and the power-law kernel, where for computational convenience we take $\beta = 1-\alpha$ with $\alpha\in(1/2,1)$, for the exponent of the power law (see Table \ref{T:kernels}). The results will be compared with the case of no transient impact, i.e. $G\equiv0$.  In all three cases the matrices $L$ and $U$ in \eqref{eq:approxL}-\eqref{eq:approxU} can be computed explicitly and are also given in Table \ref{T:kernels} below.
\begin{table}[h!]
\centering
\resizebox{\textwidth}{!}{
\begin{tabular}{c c c c }
\hline\hline
&  & $L^{kj}$  & $U^{kj}$ \\ 
& $G(t,s)$ & for $0\leq j \leq k-1$ & for $k\leq j \leq n-1$ \\
\hline \hline \\[0.5ex]
No-transient		& 0 & $2 \varrho \Delta t$ & $2 \varrho \Delta t$\\ \\
Exponential	& $c{\rm e}^{-\rho(t-s)}\mathds 1_{\{s<t\}}$ & $ 2 \varrho \Delta t + c \displaystyle\frac{e^{\rho \Delta t}- 1}{\rho} e^{-\rho(k-j)\Delta t}$ & {$ 2 \varrho \Delta t + c\displaystyle\frac{1- e^{-\rho \Delta t}}{\rho}e^{-\rho(j-k)\Delta t}$}\\ \\
Fractional		& $c\,{(t-s)^{\alpha-1}} \mathds 1_{\{s<t\}}$ & $ 2 \varrho \Delta t + \frac{c (\Delta t)^{\alpha}}{\alpha}((k-j)^{\alpha} - (k-j-1)^{\alpha}  )$ & $2 \varrho \Delta t + \frac{c (\Delta t)^{\alpha}}{\alpha}((j+1-k)^{\alpha} - (j-k)^{\alpha}  )$\\ \\
\hline
\end{tabular}
}
\caption{Some kernels $G$ and the corresponding explicit non-zero elements of the matrices $L$ and $U$ in \eqref{eq:approxL}-\eqref{eq:approxU}.}
\label{T:kernels}
\end{table}

In Figure \ref{fig:withoutsignal} we present the optimal trading speed in the left panel and the resulting inventory in the right panel, in the absence of a signal (i.e. $I=0$), where the parameters of the model are set to 
\be
h_0 \equiv 0,  \ q_0 = 10,
\ T = 10, \ \lambda = 0.5, \ \varrho = 4, \  \phi =0, \ \rho=0.5, \ \alpha = 0.55, \  c=1.
\ee
We consider the cases where $G\equiv 0$ (blue), $G(t,s) =e^{-\rho(t-s)}\mathds 1_{\{s<t\}} $ with $\rho=0.5$ (orange) and $G(t,s) =(t-s)^{\alpha-1}\mathds 1_{\{s<t\}} $ with  $\alpha = 0.55$ (green). 
We notice that the optimal strategy in the power law case is more restrained than the one of the exponential kernel, as the transient price impact resulting by trades has a slower decay. This effect becomes even more prominent when we incorporate a trading signal in Figure \ref{fig:withsignal}. 

	\begin{figure}
		\includegraphics[scale=.55,center]{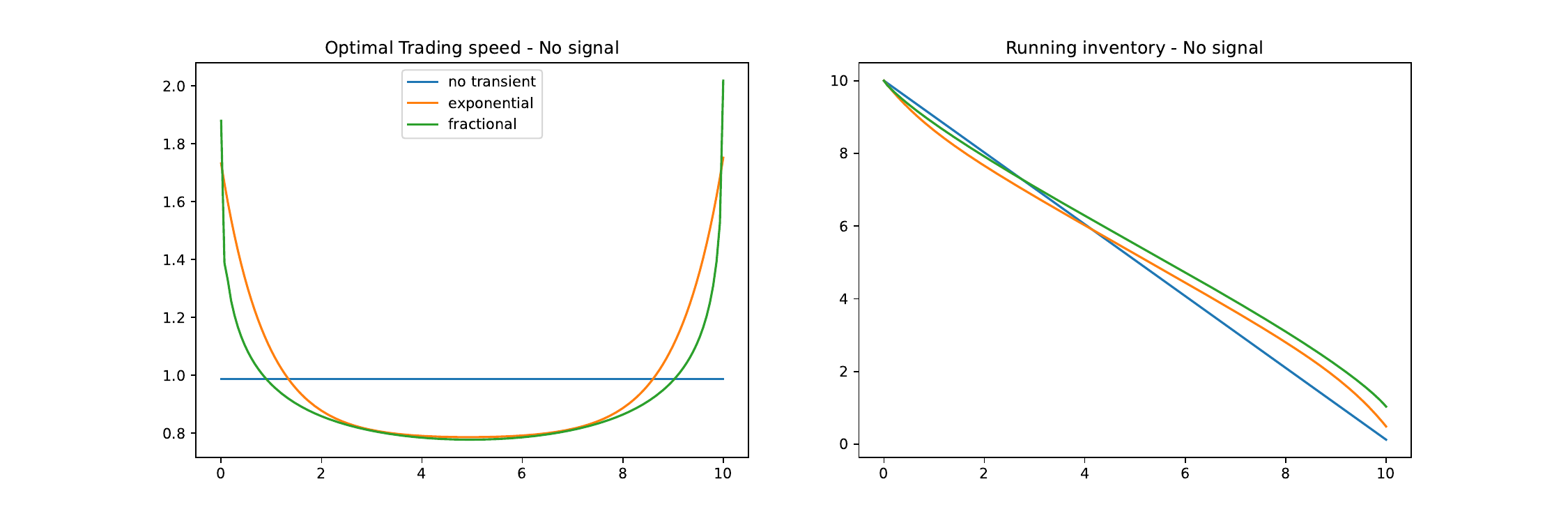}
				\vspace{-1cm}
		\captionof{figure}{Impact of different kernels on the optimal trading speed and inventory in the absence of a signal for the parameters  $h_0 \equiv 0, q_0 = 10,
T = 10, \lambda = 0.5, \varrho = 4, \phi =0,$ with three impact kernels: $G\equiv 0$ (blue), $G(t,s) =e^{-\rho(t-s)}\mathds 1_{\{s<t\}} $ with $\rho=0.5$ (orange) and $G(t,s) =(t-s)^{\alpha-1}\mathds 1_{\{s<t\}} $ with  $\alpha = 0.55$ (green). }
		\label{fig:withoutsignal}
	\end{figure}

In Figure \ref{fig:withsignal} we plot the optimal strategy for an agent who is executing a sell strategy and is also observing an integrated  Ornstein-Uhlenbeck signal as in \eqref{ou} with parameters $I_0 = \pm 2, \gamma = 0.3, \sigma = 0.5$. When the signal is negative, as illustrated in the upper panels, the agent trades with an excessive speed in the exponential kernel case compared to the power law case. 
This difference is not as substantial for a positive signal as in this scenario the trader is trading slowly anyway, as the value of her portfolio will increase in the immediate future due to the effect of the signal. Since the trading in the positive signal case is slow at the beginning of trade, the strategy is less sensitive to the type of price impact kernel. Towards the end of the trading period, inventory penalties become more influential and they trigger rapid sells, so again the effect of the kernel type is not significant. In Figure \ref{fig:withsignal2} the transient price impact resulting by the optimal strategies for the cases of exponential and power law kernels is presented, with the same realization of the signal as in Figure \ref{fig:withsignal} are used. One can observe in Figure \ref{fig:withsignal2} that the price impact induced by the power law kernel is significantly more persistent than in the exponential kernel case.   

	\begin{figure}
		\includegraphics[scale=.6, center]{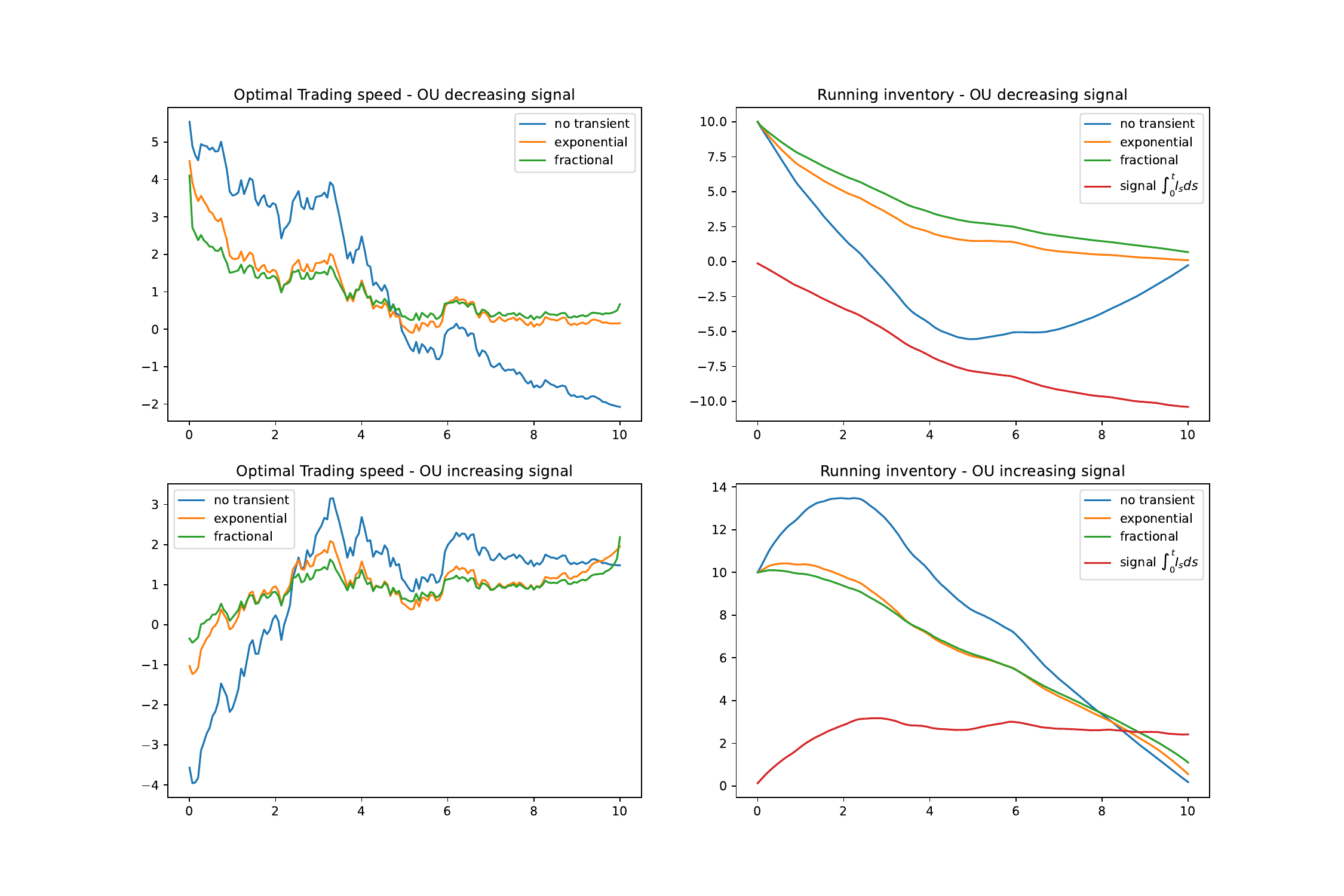}
				\vspace{-2 cm}
		\captionof{figure}{Effect of different kernels on the optimal trading speed and inventory in the presence of a signal, for the parameters  $h_0 \equiv 0, q_0 = 10,
T = 10, \lambda = 0.5, \varrho = 4, \phi =0,$ for the Ornstein-Uhlenbeck signal: $I_0 = -2, \gamma = 0.3, 
\sigma = 0.5$ (upper panels) and $I_0 = 2, \gamma = 0.3, 
\sigma = 0.5$  (lower panels) and with three price impact kernels: $G\equiv 0$ (in blue), $G(t,s) =e^{-\rho(t-s)}\mathds 1_{\{s<t\}} $ with $\rho=0.5$ (orange) and $G(t,s) =(t-s)^{\alpha-1}\mathds 1_{\{s<t\}} $ with  $\alpha = 0.55$ (green). }
		\label{fig:withsignal}
\end{figure}

	\begin{figure}
		\includegraphics[scale=.6, center, trim={0 0 0 24cm},clip]{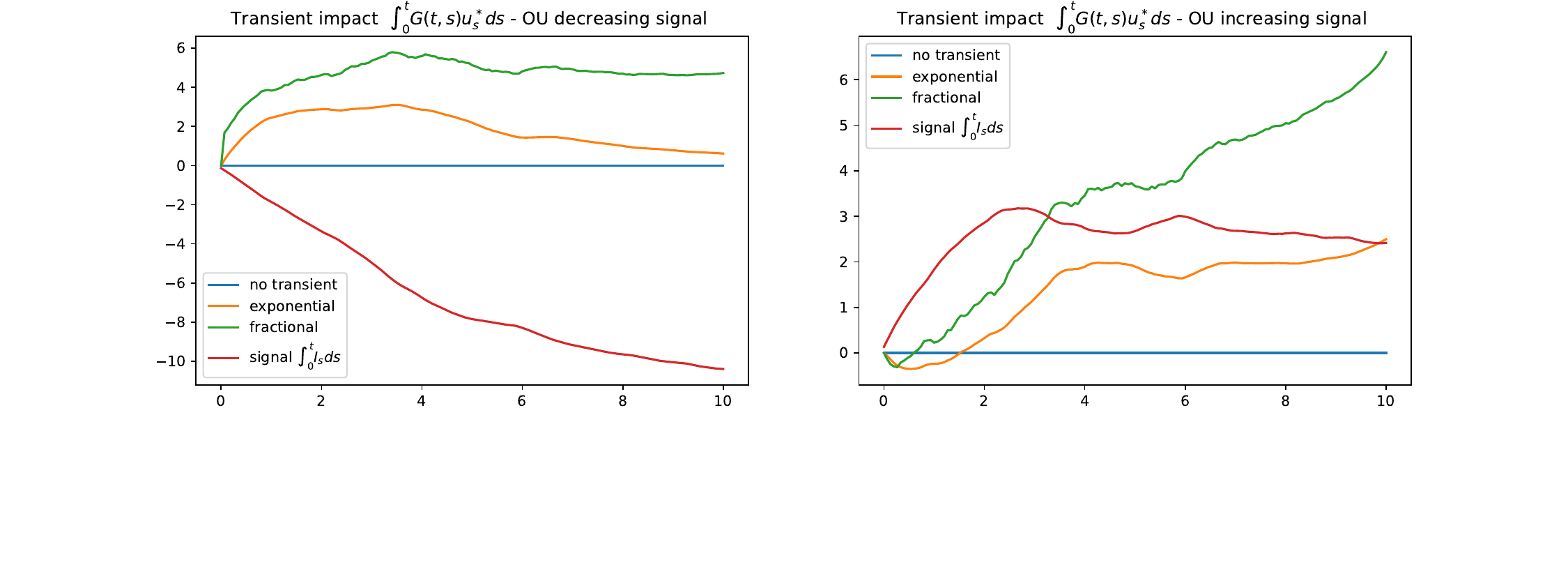}
		\vspace{-3cm}
		\captionof{figure}{The transient price impact of different kernels in the presence of a signal, for the parameters  $h_0 \equiv 0, q_0 = 10,
T = 10, \lambda = 0.5, \varrho = 4, \phi =0,$ and for Ornstein-Uhlenbeck signal (in red) with $\gamma = 0.3, \sigma = 0.5$, where $I_0 = -2$ in the left panel and $I_0 = 2$ in the right panel. The impact of the kernels appears for $G\equiv 0$ (blue), $G(t,s) =e^{-\rho(t-s)}\mathds 1_{\{s<t\}} $ with $\rho=0.5$ (orange) and $G(t,s) =(t-s)^{\alpha-1}\mathds 1_{\{s<t\}} $ with  $\alpha = 0.55$ (green). }
		\label{fig:withsignal2}
\end{figure}

In Figure \ref{fig:sensitivitykernelwosignal} we provide a sensitivity analysis for the optimal trading speed and the optimal inventory subject to changes in the price impact kernel parameters. In the left panels we consider fractional kernels and in the right panels we consider exponential kernels. For the factional kernel case, we observe that for small values of $\alpha$ , the kernel $t\mapsto  t^{\alpha-1}$ induces more price impact over small time intervals enforcing the agent to trade slower. On the other hand when $\rho$ increases the price impact induced by kernel $t\mapsto e^{-\rho t}$ decays faster which allows the agent to trade faster.  

In Figure \ref{fig:sensitivitykernelswithsignal} we repeat the same experiment, only now we extend the time horizon from $T=1$ to $T=10$  and add a positive Ornstein-Uhlenbeck signal similar to the one in Figure \ref{fig:withsignal}. We notice that the monotonicity with respect to the $\alpha$ parameter in the fractional kernel is preserved (see in the left panels). However in this scenario, since the signal is positive the agent is first buying in order to make a quick profit and then selling her inventory in order to close the position. We observe that larger values of $\alpha$ allow the trader to buy more inventory at the beginning of the trade.    
\begin{figure}
		\includegraphics[scale=.6, center]{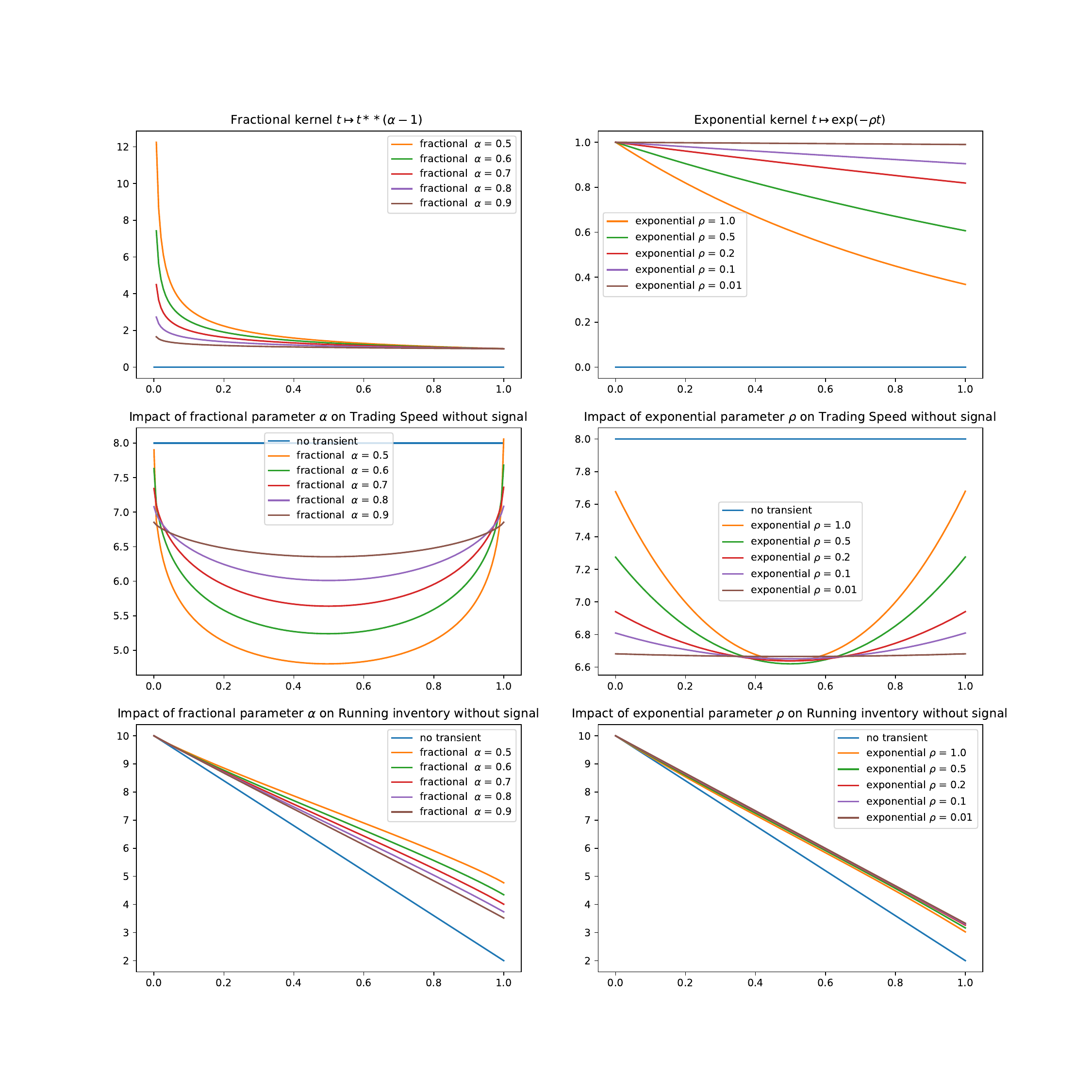}
		\vspace{-3cm}
		\captionof{figure}{Impact of parameters of the kernels on the optimal trading speed and inventory without signal for the parameters  $h_0 \equiv 0, q_0 = 10,
T = 1, \lambda = 0.5, \varrho = 2, \phi =0$. First column: Fractional kernel; Second column: Exponential kernel. }
		\label{fig:sensitivitykernelwosignal}
	\end{figure}



\begin{figure}
		\includegraphics[scale=.6, center]{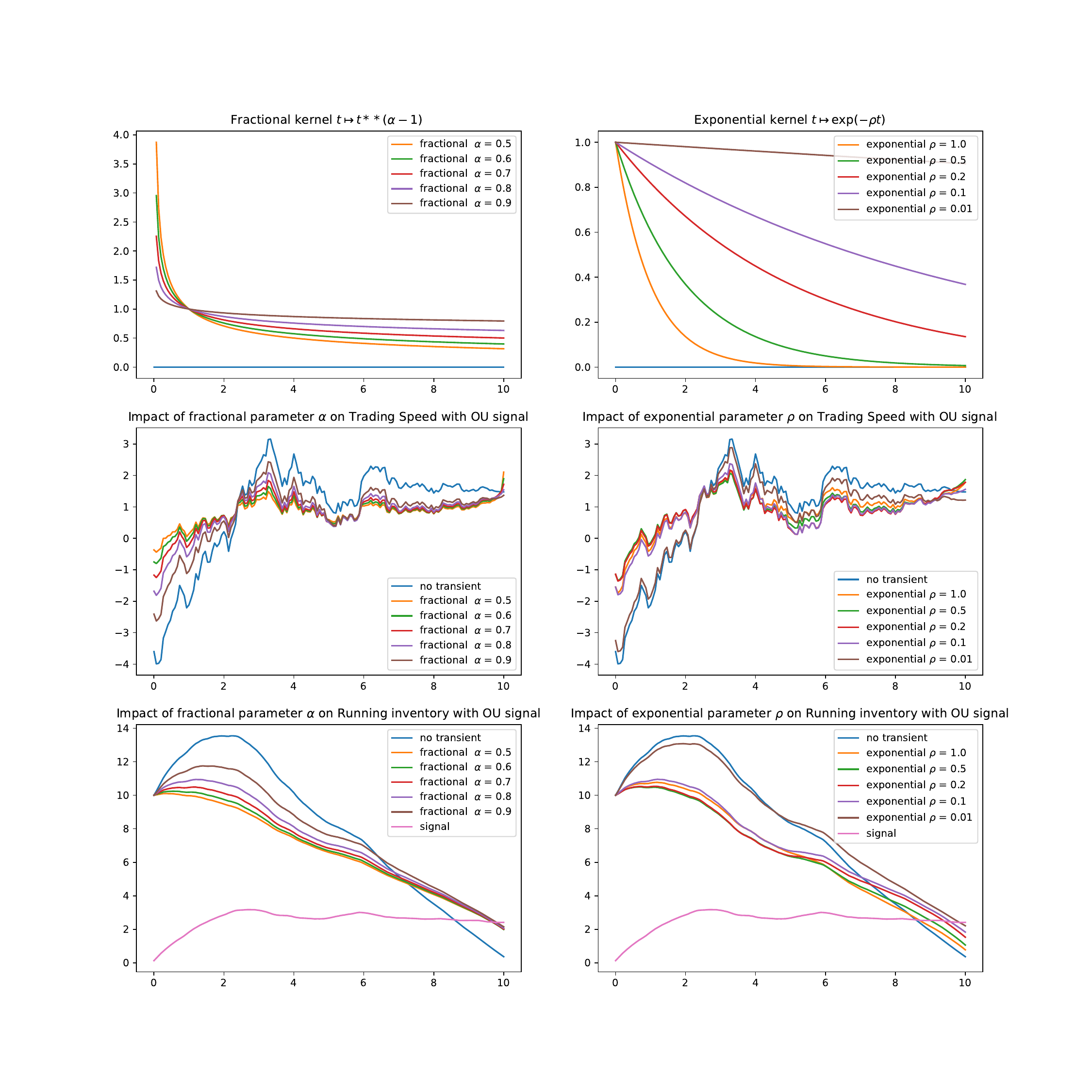}
				\vspace{-3.cm}
		\captionof{figure}{Impact of parameters of the kernels on the optimal trading speed and inventory with Ornstein-Uhlenbeck signal for the parameters   $h_0 \equiv 0, q_0 = 10,
T = 10, \lambda = 0.5, \varrho = 2, \phi =0$; for the OU signal: $I_0 = 2, \gamma = 0.3, \sigma = 0.5$. First column: Fractional kernel; Second column: Exponential kernel.}
		\label{fig:sensitivitykernelswithsignal}
	\end{figure}

\section{Derivation of the solution} \label{sec-thm-pf} 
\subsection{The covariance operator} 
 We define $\boldsymbol{\Sigma}_t$ the \emph{covariance operator} induced by $K$ (recall \eqref{g0-k-def}),  as the integral operator associated with the following kernel: 
\begin{align}\label{eq:sigmakernel}
{\Sigma}_t(s,u) :=\frac 1 {4\lambda} \int_t^{s\wedge u} K(s, z)K^{\top}(u, z)   dz, \quad t \leq s,u \leq T,
\end{align} 
where we recall that $\lambda$ is as in \eqref{def:S}.
Let $\id$ denote the identity operator, i.e. $(\id f)=f$ for all $f \in L^2\left([0,T],\R^2 \right)$. 

We define $\boldsymbol{\hat K}$ as the integral operator induced by the kernel $K$ as follows, 
\begin{align} \label{K-def} 
\hat K(t,s)  :=  - K(t,s) \otimes e_1, \quad \mbox{with } e_1=(1, 0)^\top,
\end{align}
where $\otimes$ represents the outer product. 
 Specifically, we have 
\be \label{eq:explicithatK}
\hat K(t,s)=  \begin{pmatrix}
-\tilde G(t,s) &  0 \\
\mathds{1}_{\{s\leq t\}}  & 0  \\
 \end{pmatrix}.
\ee

We define 
the \emph{adjusted covariance integral operator} $\boldsymbol{\tilde{\Sigma}}_t$ as follows 
\begin{align}\label{def:C_tilde}
\boldsymbol{\tilde{\Sigma}}_t =   \left( \id - \frac 1 {2\lambda} \boldsymbol{\hat K} \right)^{-1} \boldsymbol{\Sigma}_t \left(\id - \frac 1 {2\lambda} \boldsymbol{\hat K}^*\right)^{-1}.
\end{align}
Note that Lemma A.5 in \cite{aj-22} ensures that  $\left( \id - \frac 1 {2\lambda} \boldsymbol{\hat K} \right)$ and $\left(\id - \frac 1 {2\lambda} \boldsymbol{\hat K}^*\right)$ are invertible.

\subsection{The Riccati operator}
We let    
\be \label{a-mat} 
A := \left(\begin{matrix}
\frac 1 {2\lambda} & 0 \\
0 & - 2\phi
\end{matrix}\right).
\ee
Recall that $\boldsymbol{\tilde{\Sigma}}_t$ was defined in \eqref{def:C_tilde}. We define 
\begin{align}\label{def:riccati_operator}
\boldsymbol \Psi_{t}:=   \left( \id - \frac 1 {2\lambda} \boldsymbol{\hat K}^* \right)^{-1} A \left( \id -  2\boldsymbol{\tilde \Sigma}_{t} A  \right)^{-1}     \left( \id - \frac 1 {2\lambda}  \boldsymbol{\hat K} \right)^{-1},  \quad  t\leq T.
\end{align}
First we identify $ \boldsymbol {\Psi}_t$ with $\boldsymbol {\Gamma}^{-1}_t$ from \eqref{op-gam-inv}, on a certain class of test functions. Recall that the notation $\mathds{1}_t$ was introduced in \eqref{ind-def}. 
\begin{lemma} \label{lemma-gam-phi} 
The operator $\boldsymbol{\Psi}_{t}$ {is well defined} and satisfies for any $f \in L^2([0,T],\R^2)$, 
\begin{align}\label{psi-gamma}
 \mathds{1}_t(s) \big(\boldsymbol {\Psi}_t f \mathds 1_t \big)(s)  =   \mathds{1}_t(s) \big( \boldsymbol {\Gamma}^{-1}_t  f\mathds 1_t \big)(s) , \quad \textrm{for all } s,t\in [0,T].
\end{align}
\end{lemma}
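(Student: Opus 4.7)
The strategy is to reduce the claimed identity to a block-matrix verification via a two-level Schur complement argument. On one side, $\boldsymbol\Gamma^{-1}_t$ is defined through the explicit block inversion of $\boldsymbol\Gamma_t$ in \eqref{op-gamma}; on the other, $\boldsymbol\Psi_t$ is the Riccati-solution form familiar from linear-quadratic stochastic Volterra control. The task is to show that once one restricts to inputs supported on $[t,T]$ the two representations coincide.

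As a first step I would derive a convenient expression for $\boldsymbol\Psi_t^{-1}$. Assuming $\phi>0$, so that $A$ in \eqref{a-mat} is invertible, inverting \eqref{def:riccati_operator} and using the definition \eqref{def:C_tilde} of $\boldsymbol{\tilde\Sigma}_t$ to absorb the $B_2\,\tilde\Sigma_t\,B_1 = \Sigma_t$ factor yields
\begin{equation*}
\boldsymbol\Psi_t^{-1} \;=\; \Bigl(\id - \tfrac{1}{2\lambda}\boldsymbol{\hat K}\Bigr) A^{-1}\Bigl(\id - \tfrac{1}{2\lambda}\boldsymbol{\hat K}^*\Bigr) \;-\; 2\,\boldsymbol\Sigma_t.
\end{equation*}
Using the block-triangular form of $\boldsymbol{\hat K}$ from \eqref{eq:explicithatK} and computing the kernel of $\boldsymbol\Sigma_t$ directly from \eqref{eq:sigmakernel} with $K=(\tilde G,-\mathds 1_{\{s\leq t\}})^\top$ from \eqref{g0-k-def}, both summands become explicit $2\times 2$ operator matrices on $L^2([0,T],\mathbb R^2)$ with entries built from $\boldsymbol{\tilde G},\boldsymbol{\tilde G}^*,\boldsymbol{1}_0,\boldsymbol{1}^*_0$ for the first term, and from their $t$-subscripted restrictions for $\boldsymbol\Sigma_t$ (after recognising $\int_t^{s\wedge u}$ as the kernel of the corresponding composition).

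The next step is a two-level Schur complement, first in the temporal $L^2$-decomposition $L^2([0,T])=L^2([0,t))\oplus L^2([t,T])$ and then in the $\mathbb R^2$ block structure. The key observation for the temporal step is that $\boldsymbol\Sigma_t$ is supported in $[t,T]^2$, while the Volterra structure of $\boldsymbol{\hat K}$ gives $\boldsymbol\Psi_t^{-1}$ a block-triangular form along this split, and on the $L^2([0,t))$ block it reduces essentially to $2\lambda A^{-1}$ composed with invertible Volterra operators. Applying the Schur-complement formula with respect to the $L^2([0,t))$ block then produces an operator on $L^2([t,T])$ whose entries match exactly the block structure of $\boldsymbol\Gamma_t$ in \eqref{op-gamma}. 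A second Schur complement, this time in the $\mathbb R^2$ structure of $\boldsymbol\Gamma_t$ with $\boldsymbol D_t$ playing the role of the Schur complement of the $(1,1)$ entry, delivers the explicit inverse \eqref{op-gam-inv}, which is the content of \eqref{psi-gamma} on the restricted subspace.

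The main obstacle is the bookkeeping required to manage the two orthogonal block decompositions simultaneously, and the fact that the unrestricted and restricted operators ($\boldsymbol{\tilde G}$ versus $\boldsymbol{\tilde G}_t$, $\boldsymbol 1_0$ versus $\boldsymbol 1_t$, etc.) agree only on the $[t,T]$ subspace. Concretely, the first Schur complement must exactly cancel the ``extra'' $\int_0^t\tilde G(s,z)\tilde G(u,z)\,dz$-type kernels that appear in the block entries of $\boldsymbol\Psi_t^{-1}$ but not in $\boldsymbol\Gamma_t$, and verifying this cancellation is the most delicate calculation. Finally, the case $\phi=0$ is not directly covered since $A^{-1}$ fails to exist there; it is handled either by taking $\phi\downarrow 0$ and exploiting continuity in the parameter of both sides, or by redoing the block computation directly and using the $(2,2)$-zero structure of $A$ to bypass the inversion.
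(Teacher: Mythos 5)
Your strategy is essentially the paper's: you invert $\boldsymbol\Psi_t$ using \eqref{def:riccati_operator} and \eqref{def:C_tilde} to get
$\boldsymbol\Psi_t^{-1}=\bigl(\id-\tfrac{1}{2\lambda}\boldsymbol{\hat K}\bigr)A^{-1}\bigl(\id-\tfrac{1}{2\lambda}\boldsymbol{\hat K}^*\bigr)-2\boldsymbol\Sigma_t$, compare this with $\boldsymbol\Gamma_t$ written as $A^{-1}\id-\tfrac{1}{2\lambda}(\boldsymbol{\hat K}_tA^{-1}+A^{-1}\boldsymbol{\hat K}_t^*)$, and handle $\phi=0$ by continuity in $\phi$ of both sides — exactly as in Section \ref{sec-lem-inv}. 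Where you differ is in how the restriction to $[t,T]$ is organized: the paper does not perform a Schur complement in the temporal splitting $L^2([0,t))\oplus L^2([t,T])$; it instead uses the pointwise restriction identities \eqref{cor1} and \eqref{cor3} to replace $\boldsymbol{\hat K}$ by $\boldsymbol{\hat K}_t$ in the linear terms, verifies the cancellation of $\tfrac{1}{4\lambda^2}\boldsymbol{\hat K}A^{-1}\boldsymbol{\hat K}^*$ against $2\boldsymbol\Sigma_t$ via the outer-product identity $\boldsymbol{\hat K}A^{-1}\boldsymbol{\hat K}^*=2\lambda\boldsymbol K\boldsymbol K^*$ and \eqref{eq:sigmakernel}, and takes the $\mathbb R^2$-block inverse \eqref{op-gam-inv} as given from Lemma \ref{lemmma-inv-op} rather than rederiving it. Your temporal Schur complement buys something real here: it is precisely the mechanism that (a) justifies passing from agreement of the restricted operators $\mathds 1_t\boldsymbol\Psi_t^{-1}\mathds 1_t=\mathds 1_t\boldsymbol\Gamma_t\mathds 1_t$ to agreement of the restricted inverses, using the block-lower/upper-triangularity of $\boldsymbol{\hat K}$ and $\boldsymbol{\hat K}^*$ and the support of $\Sigma_t$ in $[t,T]^2$, and (b) removes the cross-term $\int_0^t K(s,z)K(u,z)^\top dz$ so that the quadratic term becomes $\boldsymbol{\hat K}_tA^{-1}\boldsymbol{\hat K}_t^*$, whose kernel is exactly $8\lambda^2\Sigma_t(s,u)$; you correctly flag this cancellation as the delicate point. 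The one caveat is that your proposal stops at flagging it — to be a complete proof you must actually carry out that computation (it reduces to $e_1^\top A^{-1}e_1=2\lambda$ and the definition \eqref{eq:sigmakernel}), since without it the identity is only a plausible claim rather than a verified one.
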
 
The proof of Lemma \ref{lemma-gam-phi} is given in Section \ref{sec-lem-inv}.

We will show that  $\boldsymbol{\Psi}_t$ is a solution to a Riccati operator equation involving the covariance operator $\boldsymbol{\Sigma}_t$ induced by the kernel \eqref{eq:sigmakernel}. 
 For this we specify our notion of differentiability: for any operator $\boldsymbol{G}$ from $ L^2\left([0,T],{\R^2}\right)$ to itself we define the operator norm, 
\be \label{op-norm} 
 \|\boldsymbol{G}\|_{\rm {op}}:= \sup_{f \in L^2([0,T],\R^2)} \frac{\|\bold G f\|_{L^2}}{\|f\|_{L^2}}.
\ee
The operator $t\mapsto \boldsymbol{\Psi}_t$ is said to be strongly differentiable at time $t\geq  0$,  if there exists a bounded linear operator $\dot{\boldsymbol{\Psi}}_t$  from $ L^2\left([0,T],{\R}\right)$  into  itself  such that
\begin{align}
\lim_{h\to 0} \frac{1}{h} \| \boldsymbol{\Psi}_{t+h}-\boldsymbol{\Psi}_{t} -h \dot{\boldsymbol{\Psi}}_{t} \|_{\rm{op}}=0. 
\end{align}
The following lemma gives some fundamental properties of $\boldsymbol{\Psi}_t$ that will be useful for the proof of Theorem \ref{eq:MainTheorem}. Recall that $A$ was defined in \eqref{a-mat} and $\boldsymbol{\hat {K}}$  was defined in \eqref{K-def}. 

\begin{lemma}\label{L:Psi} For any $0\leq t\leq T$, $\boldsymbol{\Psi}_t$ given by \eqref{def:riccati_operator}  is a bounded  linear operator from $L^2\left([0,T],\R\right)$ into itself. Moreover we have, 
	\begin{itemize}
		\item[\bf{(i)}] \label{L:Psi1}
		$\bar{\boldsymbol{\Psi}}_t=( - A\cdot \id + \boldsymbol{\Psi}_t)$ is an integral  operator induced by a symmetric kernel  $\bar\psi_t(s,u)$ that satisfies  
				\bd
		\label{L:Psi2} 
		\sup_{t\leq T} \int_{[0,T]^2}|\bar \psi_t(s,u)|^2 ds du<\infty.
		\ed
		\item [\bf{(ii)}] For any $f \in L^2\left([0,T],\R^2\right)$,   
			\bd
		\label{eq:Psi_on_boudary}
		(\boldsymbol{\Psi}_t f \mathds{1}_t)(t) =  \left( A\cdot\id + \frac{1}{2\lambda}\boldsymbol{\hat {K}}^*\boldsymbol{\Psi}_t \right)(f \mathds{1}_t)(t),
		\ed
		where $\mathds{1}_{t}(s) = \mathds{1}_{\{t\leq s\}}$.
		 \item [\bf{(iii)}]
		 \label{L:Psi3}
		$t\mapsto \boldsymbol{\Psi}_t$ is strongly differentiable and  satisfies the operator Riccati equation
		\begin{align}
		\label{eq:riccati_psiBold}
		\dot{\boldsymbol{\Psi}}_t &= 2\boldsymbol{\Psi}_t \dot{\boldsymbol{{\Sigma}}}_t  \boldsymbol{\Psi}_t, \qquad t \in [0,T], \\
		{\boldsymbol{\Psi}_T}&=  \left( \id - \frac 1 {2\lambda} \boldsymbol{\hat K}^* \right)^{-1} A \left( \id - \frac 1 {2\lambda} \boldsymbol{\hat K} \right)^{-1}, 
		\end{align}
		where $\dot{\boldsymbol{{\Sigma}}}_t$ is the strong derivative of $t\mapsto\bold{\Sigma}_t$ induced by the  kernel
		\begin{align}\label{eq:diffkernelsigma}
		\dot{\Sigma}_t(s,u):=-\frac{1}{4\lambda} K(s,t) K(u,t)^\top , \quad a.e.
		\end{align}
	\end{itemize}
\end{lemma}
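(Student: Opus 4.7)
The plan is to establish the three parts in sequence, exploiting the triple-product structure of $\boldsymbol{\Psi}_t$. First I would record the basic tools: by Lemma~A.5 of~\cite{aj-22}, both $(\id - \frac{1}{2\lambda}\boldsymbol{\hat K})^{-1}$ and $(\id - \frac{1}{2\lambda}\boldsymbol{\hat K}^*)^{-1}$ are bounded on $L^2([0,T],\R^2)$, and invertibility of $(\id - 2\boldsymbol{\tilde\Sigma}_t A)$ follows either via the Fredholm alternative applied to the Hilbert--Schmidt operator $\boldsymbol{\tilde\Sigma}_t A$, or, equivalently, from the identification $\boldsymbol{\Psi}_t = \boldsymbol{\Gamma}_t^{-1}$ on functions supported in $[t,T]$ furnished by Lemma~\ref{lemma-gam-phi} combined with Lemma~\ref{lemmma-inv-op}. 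This yields $\boldsymbol{\Psi}_t$ as a bounded linear operator for every $t\in[0,T]$.

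For~\textbf{(i)}, self-adjointness of $\boldsymbol{\Psi}_t$ follows from $A=A^*$, self-adjointness of $\boldsymbol{\tilde\Sigma}_t$ (a consequence of $\Sigma_t(u,s)^\top = \Sigma_t(s,u)$, readable off~\eqref{eq:sigmakernel}), and the elementary identity $A(\id - 2\boldsymbol{\tilde\Sigma}_t A)^{-1} = (\id - 2A\boldsymbol{\tilde\Sigma}_t)^{-1} A$. To extract the kernel, I would expand the triple product using the resolvent identities $(\id - \frac{1}{2\lambda}\boldsymbol{\hat K})^{-1} = \id + \frac{1}{2\lambda}\boldsymbol{\hat K}(\id - \frac{1}{2\lambda}\boldsymbol{\hat K})^{-1}$ and $(\id - 2\boldsymbol{\tilde\Sigma}_t A)^{-1} = \id + 2\boldsymbol{\tilde\Sigma}_t A(\id - 2\boldsymbol{\tilde\Sigma}_t A)^{-1}$; the single ``pure identity'' term in the expansion equals $A\cdot\id$, while every other summand contains at least one Hilbert--Schmidt factor ($\boldsymbol{\hat K}$, $\boldsymbol{\Sigma}_t$, or a resolvent composition with one of these), exhibiting $\bar{\boldsymbol{\Psi}}_t$ as an integral operator with a symmetric matrix-valued kernel $\bar\psi_t$. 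The uniform $L^2$-bound on $\bar\psi_t$ then follows from combining the uniform bounds on $\sup_t\|K(\cdot,t)\|_{L^2}$ and $\sup_t\|K(t,\cdot)\|_{L^2}$ coming from Definition~\ref{def-ker-admis} with the $t$-independence of the outer resolvents and the uniform boundedness of the inner resolvent.

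For~\textbf{(ii)}, the key observation is that $\Sigma_t(t,u)=0$ for all $u\in[t,T]$, directly from~\eqref{eq:sigmakernel} (the integral runs from $t$ to $t\wedge u = t$), so that $(\boldsymbol{\Sigma}_t h)(t)=0$ for every $h\in L^2$. Since $\boldsymbol{\hat K}$ is Volterra, so is $(\id - \frac{1}{2\lambda}\boldsymbol{\hat K})^{-1}$, and this resolvent sends functions vanishing on $[0,t]$ to functions vanishing on $[0,t]$; combining these two facts gives $(\boldsymbol{\tilde\Sigma}_t h)(t)=0$ for every $h$. Applying this to the fixed-point identity $g = (\id - \frac{1}{2\lambda}\boldsymbol{\hat K})^{-1}(f\mathds{1}_t) + 2\boldsymbol{\tilde\Sigma}_t A g$ satisfied by $g := (\id - 2\boldsymbol{\tilde\Sigma}_t A)^{-1}(\id - \frac{1}{2\lambda}\boldsymbol{\hat K})^{-1}(f\mathds{1}_t)$ yields $g(t)=f(t)$, where one uses the Volterra structure once more to compute $((\id - \frac{1}{2\lambda}\boldsymbol{\hat K})^{-1}(f\mathds{1}_t))(t) = f(t)$. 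Multiplying the definition of $\boldsymbol{\Psi}_t$ by $(\id - \frac{1}{2\lambda}\boldsymbol{\hat K}^*)$ on the left and evaluating at $s=t$ then gives the boundary identity of~(ii).

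For~\textbf{(iii)}, strong differentiability of $t\mapsto\boldsymbol{\Sigma}_t$ with derivative given by~\eqref{eq:diffkernelsigma} follows from the $L^2$-continuity of $K$ inherited from Definition~\ref{def-ker-admis}; strong differentiability of $\boldsymbol{\tilde\Sigma}_t$ with $\dot{\boldsymbol{\tilde\Sigma}}_t = (\id - \frac{1}{2\lambda}\boldsymbol{\hat K})^{-1}\dot{\boldsymbol{\Sigma}}_t(\id - \frac{1}{2\lambda}\boldsymbol{\hat K}^*)^{-1}$ is then immediate by conjugation. Differentiating the middle factor $\boldsymbol{F}_t := (\id - 2\boldsymbol{\tilde\Sigma}_t A)^{-1}$ via $\dot{\boldsymbol{F}}_t = 2\boldsymbol{F}_t\dot{\boldsymbol{\tilde\Sigma}}_t A \boldsymbol{F}_t$ and sandwiching with the $t$-independent outer factors, then substituting the conjugation formula for $\dot{\boldsymbol{\tilde\Sigma}}_t$ and regrouping, gives
\begin{align*}
\dot{\boldsymbol{\Psi}}_t = 2\boldsymbol{\Psi}_t \dot{\boldsymbol{\Sigma}}_t \boldsymbol{\Psi}_t.
\end{align*}
The terminal condition is immediate from $\boldsymbol{\Sigma}_T = 0$, which forces $\boldsymbol{F}_T = \id$. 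I expect the main technical obstacle to be the uniform-in-$t$ $L^2$-bound on $\bar\psi_t$ in~(i) together with the justification of strong operator-norm differentiability in~(iii); both hinge on turning the $L^2$-continuity of Definition~\ref{def-ker-admis} into quantitative estimates on the kernels.
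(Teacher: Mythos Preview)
Your proposal is correct, and for parts~(ii) and~(iii) it matches the paper's approach closely: the paper's proof of~(ii) in Section~\ref{sec-pf-lem-psi} expands $\boldsymbol{\Psi}_t$ via the resolvents $\boldsymbol{\hat R}$ of $\frac{1}{2\lambda}\boldsymbol{\hat K}$ and $\boldsymbol{R}^A_t$ of $2\boldsymbol{\tilde\Sigma}_t A$, then uses exactly your two observations ($\hat R$ is Volterra and $R^A_t(t,\cdot)=0$ because $\Sigma_t(t,\cdot)=0$) to kill the unwanted terms and recover the boundary identity through the resolvent equation $\boldsymbol{\hat R}^*=\frac{1}{2\lambda}\boldsymbol{\hat K}^*(\id+\boldsymbol{\hat R}^*)$; your fixed-point reformulation is a clean repackaging of the same computation. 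For~(iii) the paper simply cites Lemma~5.6 of~\cite{abi2020markowitz}, whose proof is precisely your sketch (differentiate the middle factor, conjugate, use $\boldsymbol{\Sigma}_T=0$).

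Part~(i) is where the two routes genuinely differ. The paper does not expand resolvents at all: it invokes Lemma~\ref{lemma-gam-phi} to identify $\mathds{1}_t\boldsymbol{\Psi}_t\mathds{1}_t$ with $\mathds{1}_t\boldsymbol{\Gamma}_t^{-1}\mathds{1}_t$, and then reads off symmetry and the uniform $L^2$-bound from the explicit block form of $\boldsymbol{\Gamma}_t^{-1}$ in~\eqref{op-gam-inv} together with Lemma~\ref{lem-bnd-gam} (which in turn rests on the spectral bound of Lemma~\ref{lem-d-bnd}). Your direct expansion is arguably more self-contained and makes the integral-operator structure of $\bar{\boldsymbol{\Psi}}_t$ transparent, but the uniform-in-$t$ bound you flag as the main obstacle is exactly where the paper's route pays off: the spectral argument behind Lemma~\ref{lem-d-bnd} gives $\sup_t\|\boldsymbol D_t^{-1}\|_{\rm op}<\infty$ for free, whereas in your approach one still has to argue $\sup_t\|(\id-2\boldsymbol{\tilde\Sigma}_tA)^{-1}\|_{\rm op}<\infty$, e.g.\ by continuity of $t\mapsto\boldsymbol{\tilde\Sigma}_t$ on the compact interval $[0,T]$ and continuity of operator inversion.
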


\begin{proof}
	The proof of (i) follows from \eqref{psi-gamma} and Lemma \ref{lem-bnd-gam}. The proof of (ii) in given in Section \ref{sec-pf-lem-psi}. The proof of (iii) follows the same lines as the proof of Lemma 5.6 in \cite{abi2020markowitz} hence it is omitted. 
\end{proof}

  \subsection{$L^2$--valued BSDE} 
In the following proposition we show that  $\Theta= \{\Theta_{\cdot}(s) : s\in [0,T]\}$ in \eqref{th-sol} is a solution to an $L^2$-valued linear BSDE that involves the operator $\boldsymbol \Psi_t$ and the kernel $\bar \psi_t$ appearing in Lemma~\ref{L:Psi}.  

 \begin{proposition} \label{prop-theta}
For each $s \leq T$, the process $\Theta$ solves the following $L^2$--valued BSDE, 
\be  \label{back-theta} 
\begin{aligned}
d\Theta_t(s) &= \dot{\Theta}_t(s)dt + dN_t(s),  \quad 0 \leq t <s,   \\
\textrm{with } \dot{\Theta}_t(s) &=  2\big(\boldsymbol { {\Psi}}_t \dot{\boldsymbol{\Sigma}}_t \Theta_t  \big)(s) + \bar{\psi}_t(s,t) \E_t[P_t-P_T] e_1,
\end{aligned}
\ee
with the following boundary condition 
\begin{align}\label{theta-bnd}
\Theta_s(s)=-\frac{1}{2\lambda} \left(P_s -   \E_s[P_T]  +\langle  \Theta_s, K_s\rangle_{L^2} \right)e_1, 
\end{align}
where for each $0\leq s \leq T$, $\{N_t(s)\}_{t\geq 0}$ is a suitable square-integrable martingale. 
\end{proposition}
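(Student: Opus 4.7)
The plan is to recognise $\Theta_t$ as the action of $\boldsymbol{\Psi}_t$ on a family of martingales and then extract its dynamics by combining It\^o's formula, Leibniz's rule for the moving lower limit of integration, the Riccati equation for $\boldsymbol{\Psi}_t$, and the boundary identity for $\boldsymbol{\Psi}_t$ on the diagonal. First I would use Lemma~\ref{lemma-gam-phi} to replace $\boldsymbol{\Gamma}_t^{-1}$ by $\boldsymbol{\Psi}_t$ on $\{s\geq t\}$, so that $\Theta_t(s)=-(\boldsymbol{\Psi}_t\,\xi_t e_1)(s)$ with $\xi_t(u):=\mathds{1}_{\{u\geq t\}}\E[P_u-P_T\mid\mathcal F_t]$. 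Splitting $\boldsymbol{\Psi}_t=A\cdot\id+\bar{\boldsymbol{\Psi}}_t$ as in Lemma~\ref{L:Psi}(i) and using $Ae_1=(2\lambda)^{-1}e_1$ yields the concrete representation
\[
\Theta_t(s)=-\frac{1}{2\lambda}M_t^s\,e_1-\int_t^T\bar\psi_t(s,u)e_1\,M_t^u\,du,\qquad t<s\leq T,
\]
where $M_t^r:=\E[P_r-P_T\mid\mathcal F_t]$ is a square-integrable martingale in $t$ for each fixed $r$, by \eqref{ass:P}.

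Next, for $s$ fixed I would differentiate this representation in $t\in[0,s]$, applying It\^o's formula to each $M_t^u$ and Leibniz's rule to the moving lower limit. The martingale parts, namely $-\tfrac{1}{2\lambda}e_1\,dM_t^s$ together with $-\int_t^T\bar\psi_t(s,u)e_1\,dM_t^u\,du$, assemble into $dN_t(s)$ by a stochastic Fubini argument, licit thanks to the uniform bound $\sup_t\int_{[0,T]^2}|\bar\psi_t(s,u)|^2\,ds\,du<\infty$ from Lemma~\ref{L:Psi}(i) and the moment bound \eqref{ass:P}. The finite-variation part has two pieces: the Leibniz boundary contribution involving $\bar\psi_t(s,t)\,\E_t[P_t-P_T]e_1$, and the kernel time-derivative term $-\int_t^T\partial_t\bar\psi_t(s,u)e_1\,M_t^u\,du$. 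Since $\partial_t\bar\psi_t$ is the kernel of $\dot{\boldsymbol{\Psi}}_t=\dot{\bar{\boldsymbol{\Psi}}}_t$, the Riccati equation of Lemma~\ref{L:Psi}(iii) rewrites the latter as $-2(\boldsymbol{\Psi}_t\dot{\boldsymbol{\Sigma}}_t\boldsymbol{\Psi}_t\xi_te_1)(s)$. The rank-one structure $\dot\Sigma_t(v,u)=-\tfrac{1}{4\lambda}K(v,t)K(u,t)^\top$, together with $K(u,t)=0$ for $u<t$, implies that $\dot{\boldsymbol{\Sigma}}_t$ only probes $\boldsymbol{\Psi}_t\xi_te_1$ through the inner product $\langle K_t,\boldsymbol{\Psi}_t\xi_te_1\rangle_{L^2}$, and on $[t,T]$ one has $\boldsymbol{\Psi}_t\xi_te_1=-\Theta_t$; hence $\boldsymbol{\Psi}_t\dot{\boldsymbol{\Sigma}}_t\boldsymbol{\Psi}_t\xi_te_1=-\boldsymbol{\Psi}_t\dot{\boldsymbol{\Sigma}}_t\Theta_t$, producing the advertised drift $2(\boldsymbol{\Psi}_t\dot{\boldsymbol{\Sigma}}_t\Theta_t)(s)$ together with the boundary contribution in \eqref{back-theta}.

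The boundary condition at $t=s$ follows from the evaluation identity of Lemma~\ref{L:Psi}(ii) applied to $f=\E[P_\cdot-P_T\mid\mathcal F_s]e_1$:
\[
(\boldsymbol{\Psi}_s\xi_se_1)(s)=A(\xi_se_1)(s)+\frac{1}{2\lambda}(\hat K^{*}\boldsymbol{\Psi}_s\xi_se_1)(s).
\]
A direct computation from the explicit form \eqref{eq:explicithatK} of $\hat K$ gives $(\hat K^{*}f)(s)=-\langle K_s,f\rangle_{L^2}e_1$ for every $f\in L^2([0,T],\R^2)$. Plugging in $\boldsymbol{\Psi}_s\xi_se_1=-\Theta_s$ on $[s,T]$ and $(\xi_se_1)(s)=(P_s-\E_s[P_T])e_1$, and negating, produces exactly \eqref{theta-bnd}.

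The main technical obstacle is the correct identification of the kernel time-derivative term with $2\boldsymbol{\Psi}_t\dot{\boldsymbol{\Sigma}}_t\Theta_t$: it hinges on a delicate interplay between the operator Riccati identity, the restriction $\boldsymbol{\Psi}_t\xi_te_1=-\Theta_t$ only on $[t,T]$ (and not on $[0,t)$), and the rank-one shape of $\dot{\boldsymbol{\Sigma}}_t$ whose kernel is supported in $\{v,u\geq t\}$. A secondary technical point is the rigorous justification of the infinite-dimensional It\^o and stochastic-Fubini manipulations, which will rest on the square-integrability of $\bar\psi_t$ supplied by Lemma~\ref{L:Psi}(i) and the integrability of $P$ in \eqref{ass:P}; once these are in place, everything is consistent with $\Theta$ being $L^2$-valued and $N_t(s)$ being a genuine square-integrable martingale.
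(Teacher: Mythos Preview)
Your proposal is correct and follows essentially the same route as the paper: rewrite $\Theta_t$ via Lemma~\ref{lemma-gam-phi} as $-\boldsymbol{\Psi}_t$ acting on the martingale family $\E_t[P_\cdot-P_T]e_1$, split $\boldsymbol{\Psi}_t=A\cdot\id+\bar{\boldsymbol{\Psi}}_t$, differentiate using Leibniz together with the Riccati identity $\dot{\boldsymbol{\Psi}}_t=2\boldsymbol{\Psi}_t\dot{\boldsymbol{\Sigma}}_t\boldsymbol{\Psi}_t$, and read off the boundary condition from Lemma~\ref{L:Psi}(ii) combined with $(\hat K^*f)(s)=-\langle K_s,f\rangle_{L^2}e_1$. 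Your explicit remark that the identification $\boldsymbol{\Psi}_t\xi_te_1=-\Theta_t$ holds only on $[t,T]$ and is nonetheless sufficient because the rank-one kernel of $\dot{\boldsymbol{\Sigma}}_t$ is supported on $\{u,v\geq t\}$ is a point the paper uses implicitly but does not spell out.
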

The proof of Proposition is postponed to Section \ref{sec-bsde}.

\subsection{A verification result}

We will use the following lemma that derives the general dynamics for some functionals in $L^2$. 

\begin{lemma}\label{L:dynamics}
	Let $f(t,s)$ and $h(t,s)$ be two $L^2([0,T]^2, \re^2)$ functions that are continuous in $(t,s) \in [0,T]^2$, with partial derivatives with respect to $t$, $\dot f(t,s) := \partial_t f(t,s)$, $\dot h(t,s) := \partial_t h(t,s)$ that are {in $L^2([0,T]^2, \re^2)$}. Define 
	\be \label{f-h} 
	F_t(s):=\mathds{1}_{\{t \leq s\}} f(t,s), \quad \textrm{and}  \quad H_t(s):=\mathds{1}_{\{t \leq s\}} h(t, s). 
	\ee
	Let $\boldsymbol{\Xi}_t := A \id + \bar {\boldsymbol{\Xi}}_t$  where $\bar {\boldsymbol{\Xi}}_t$ a bounded, strongly differentiable and self adjoint integral operator in $L^2$, and $A$ is a $2\times 2$ symmetric matrix.  Then,     
	the derivative of $t\mapsto \langle F_t, \boldsymbol{\Xi}_t H_t \rangle_{L^2}$ is given by
	\begin{align*}
	\frac{d}{dt}\langle F_t, \boldsymbol{\Xi}_t H_t \rangle_{L^2}&= - f^{\top}(t,t)Ah(t,t) -f^{\top} (t,t)\bar{\boldsymbol{\Xi}}_t H_t(t)+\langle \mathds{1}_t \dot f^{} (t, \cdot) , {\boldsymbol{\Xi}}_t H_t\rangle_{L^2} \\ 
&\quad + \langle F_t,  \dot{ {\boldsymbol{\Xi}}}_t H_t \rangle_{L^2}    - h(t,t)^{\top} \bar{\boldsymbol{\Xi}}_t F_t(t)+ \langle F_t,   {\boldsymbol{\Xi}}_t\mathds{1}_t \dot{h}_t \rangle_{L^2}.
	\end{align*}
\end{lemma}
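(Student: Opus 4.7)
The plan is to split $\boldsymbol{\Xi}_t = A\cdot\id + \bar{\boldsymbol{\Xi}}_t$ into its ``pointwise'' and ``integral'' parts and differentiate each one separately, then recombine. The boundary contributions at $s=t$ that come from differentiating the indicator $\mathds 1_{\{t\leq s\}}$ in $F_t$ and $H_t$ account for the $-f^\top(t,t)Ah(t,t)$, $-f^\top(t,t)\bar{\boldsymbol{\Xi}}_t H_t(t)$, and $-h(t,t)^\top \bar{\boldsymbol{\Xi}}_t F_t(t)$ terms; the remaining terms arise from the smooth differentiation in $t$ of $f$, $h$, and $\bar{\boldsymbol{\Xi}}_t$.

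For the $A$-part, I would write
$\langle F_t, A\cdot\id\, H_t\rangle_{L^2}=\int_t^T f(t,s)^\top A\, h(t,s)\,ds$,
which is an absolutely continuous function of $t$ by the $L^2$-regularity of $\dot f,\dot h$ and continuity of $f,h$. A direct application of the Leibniz rule yields
$-f^\top(t,t)Ah(t,t)+\langle \mathds 1_t\dot f(t,\cdot),AH_t\rangle_{L^2}+\langle F_t,A\mathds 1_t\dot h(t,\cdot)\rangle_{L^2}$,
using symmetry of $A$ to keep each piece in the desired form.

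For the $\bar{\boldsymbol{\Xi}}_t$-part, I would use a finite-difference argument. Write
\begin{align*}
\langle F_{t+h},\bar{\boldsymbol{\Xi}}_{t+h} H_{t+h}\rangle_{L^2}-\langle F_t,\bar{\boldsymbol{\Xi}}_t H_t\rangle_{L^2}
&=\langle F_{t+h}-F_t,\bar{\boldsymbol{\Xi}}_{t+h}H_{t+h}\rangle_{L^2} \\
&\quad +\langle F_t,(\bar{\boldsymbol{\Xi}}_{t+h}-\bar{\boldsymbol{\Xi}}_t)H_{t+h}\rangle_{L^2} \\
&\quad +\langle F_t,\bar{\boldsymbol{\Xi}}_t(H_{t+h}-H_t)\rangle_{L^2}.
\end{align*}
The middle term, divided by $h$, converges to $\langle F_t,\dot{\boldsymbol{\Xi}}_t H_t\rangle_{L^2}$ directly from the strong differentiability of $\bar{\boldsymbol{\Xi}}_t$ (noting $\dot{\bar{\boldsymbol{\Xi}}}_t=\dot{\boldsymbol{\Xi}}_t$ since $A$ is constant). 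For the first term, decompose
$F_{t+h}-F_t=-\mathds 1_{\{t\leq s<t+h\}}f(t,s)+\mathds 1_{\{s\geq t+h\}}(f(t+h,s)-f(t,s))$;
the second piece divided by $h$ yields $\langle \mathds 1_t\dot f(t,\cdot),\bar{\boldsymbol{\Xi}}_t H_t\rangle_{L^2}$ by dominated convergence, and the first piece divided by $h$ converges to $-f^\top(t,t)(\bar{\boldsymbol{\Xi}}_t H_t)(t)$. For the third term, I would first use self-adjointness, $\langle F_t,\bar{\boldsymbol{\Xi}}_t(H_{t+h}-H_t)\rangle_{L^2}=\langle \bar{\boldsymbol{\Xi}}_t F_t,H_{t+h}-H_t\rangle_{L^2}$, and then run the same argument, producing $-h(t,t)^\top(\bar{\boldsymbol{\Xi}}_t F_t)(t)+\langle F_t,\bar{\boldsymbol{\Xi}}_t\mathds 1_t\dot h(t,\cdot)\rangle_{L^2}$.

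The main obstacle is justifying the pointwise evaluations $(\bar{\boldsymbol{\Xi}}_t H_t)(t)$ and $(\bar{\boldsymbol{\Xi}}_t F_t)(t)$ and the limit
$\tfrac{1}{h}\int_t^{t+h}f(t,s)^\top(\bar{\boldsymbol{\Xi}}_{t+h}H_{t+h})(s)\,ds\to f^\top(t,t)(\bar{\boldsymbol{\Xi}}_t H_t)(t)$,
since a priori $\bar{\boldsymbol{\Xi}}_t H_t$ is only an $L^2$ function. This is where the assumptions bite: the $L^2$-kernel bound together with continuity of $(t,s)\mapsto h(t,s)$ (and the symmetric argument for $f$) make $s\mapsto(\bar{\boldsymbol{\Xi}}_t H_t)(s)$ continuous at $s=t$ (e.g. by Cauchy-Schwarz applied to the kernel representation of $\bar{\boldsymbol{\Xi}}_t$ together with Lebesgue differentiation); one may even invoke the Hardy--Littlewood maximal inequality to pass to the limit uniformly in $h$. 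Once these pointwise limits are justified, summing the $A$-contribution and the $\bar{\boldsymbol{\Xi}}_t$-contribution and regrouping $A\cdot\id+\bar{\boldsymbol{\Xi}}_t=\boldsymbol{\Xi}_t$ in the $\langle \mathds 1_t\dot f,\cdot H_t\rangle$ and $\langle F_t,\cdot\mathds 1_t\dot h\rangle$ terms gives the claimed formula.
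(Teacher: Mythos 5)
Your proof is correct and follows essentially the same route as the paper's: split $\boldsymbol{\Xi}_t$ into $A\,\id+\bar{\boldsymbol{\Xi}}_t$, differentiate each inner product with Leibniz-type boundary terms at $s=t$, and use self-adjointness of $\bar{\boldsymbol{\Xi}}_t$ to produce the $-h(t,t)^{\top}\bar{\boldsymbol{\Xi}}_t F_t(t)$ term. Your telescoping finite-difference treatment of the $\bar{\boldsymbol{\Xi}}_t$-part is, if anything, more careful than the paper's two nested applications of the (generalized) Leibniz rule, and your explicit discussion of why the pointwise evaluations $(\bar{\boldsymbol{\Xi}}_t H_t)(t)$ and $(\bar{\boldsymbol{\Xi}}_t F_t)(t)$ make sense addresses a technical point the paper leaves implicit.
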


\begin{proof} We first use the following decomposition 
\be \label{re-1} 
\langle F_t, \boldsymbol{\Xi}_t H_t \rangle_{L^2} = \langle F_t, \bar{\boldsymbol{\Xi}}_t H_t \rangle_{L^2} + \langle F_t, (A \id) H_t \rangle_{L^2}.
\ee
Recall that $(A \id) H_t = A H_t$. A direct application of \eqref{f-h},  {\eqref{eq:assumtionG} and a generalized version of Leibnitz's rule (see e.g. Lemma 2.14 in \cite{kalinin})} gives 
\be  \label{re0}
\begin{aligned} 
\frac{d}{dt} \langle F_t, (A \id) H_t \rangle_{L^2} &=  \frac{d}{dt} \int_t^T f^{\top}(t,s)Ah(t,s)ds \\
&= - f^{\top}(t,t)Ah(t,t) + \int_t^T {\dot{f}}^{\top}(t,s)Ah(t,s)ds  \\
&\quad  + \int_t^T f^{\top}(t,s)A\dot{h}(t,s)ds \\
&= - f^{\top}(t,t)Ah(t,t) + \int_{0}^{T}\mathds{1}_{\{s\geq t\}} {\dot{f}}^{\top}(t,s)AH_{t}(s)ds  \\
&\quad  + \int_0^T   F_{t}^{\top}(s)A \mathds{1}_{\{s\geq t\}} \dot{h}(t,s)ds \\
&= - f^{\top}(t,t)Ah(t,t) + \langle {\mathds{1}_{t}\dot{f}}(t,\cdot),A\id H_{t} \rangle_{L^2}  \\
&\quad  + \langle  F_{t} , A  \id\mathds{1}_{ t} \dot{h}(t,\cdot) \rangle_{L^2}, 
\end{aligned} 
\ee 
where we used $(\id H_{t})(s) = H_{t}(s)$ in the last line. 

Using similar arguments, we get 
\be \label{re1} 
	\begin{aligned}
	\frac{d}{dt}\langle F_t, \bar{\boldsymbol{\Xi}}_t H_t \rangle_{L^2} &= \frac{d}{dt} \int_t^T f^{\top} (t,s)\bar{\boldsymbol{\Xi}}_t H_t(s)ds \\ 
&=-f^{\top} (t,t)\bar{\boldsymbol{\Xi}}_t H_t(t)+ \int_t^T \dot f^{\top} (t,s)\bar{\boldsymbol{\Xi}}_t H_t(s)ds \\ 
&\quad + \int_t^T f^{\top} (t,s) \frac{d}{dt} \left( \bar{\boldsymbol{\Xi}}_t H_t(s) \right)ds \\
&=-f^{\top} (t,t)\bar{\boldsymbol{\Xi}}_t H_t(t)+  \langle \mathds{1}_t \dot f  (t, \cdot) , \bar{\boldsymbol{\Xi}}_t H_t\rangle_{L^2} \\ 
&\quad + \int_t^T f^{\top} (t,s) \frac{d}{dt} \left(\bar{\boldsymbol{\Xi}}_t H_t(s)\right)ds. 
 	\end{aligned}
	\ee
Note that 
\be \label{re2} 
	\begin{aligned}
\frac{d}{dt} \left(\bar{\boldsymbol{\Xi}}_t H_t(s)\right) &= \frac{d}{dt} \int_t^T \bar \Xi_t(s,r)h(t,r)dr  \\ 
&= - \bar \Xi_t(s,t)h(t,t) +  \int_t^T  \left(\frac{d}{dt} \bar \Xi_t(s,r)\right) h(t,r)dr  \\ 
&\quad + \int_t^T   \bar \Xi_t(s,r)  \dot{h}(t,r)dr  \\
&=- \bar \Xi_t{(s,t)}h(t,t) +    \dot{ \bar{\boldsymbol{\Xi}}}_t H_t(s)  +   \bar{\boldsymbol{\Xi}}_t\mathds{1}_t \dot h(t, \cdot)(s), 
	\end{aligned}
	\ee
where we used the fact that the kernel of the operator  $\dot{ \bar{\boldsymbol{\Xi}}}_t$ is  $\dot{ \bar{ {\Xi}}}_t$ in the last line. 

From \eqref{re2} and since $\bar{\boldsymbol{\Xi}}_t$ is self adjoint, we get that 
\be \label{re3} 
	\begin{aligned}
&\int_t^T f^{\top} (t,s) \frac{d}{dt} \left(\bar{\boldsymbol{\Xi}}_t H_t(s)\right)ds  \\
&=- \int_t^T f^{\top} (t,s) \bar \Xi_t(s,t)h(t,t) ds 
+\int_t^T f^{\top} (t,s) \dot{ \bar{\boldsymbol{\Xi}}}_t H_t(s) ds \\
 & \quad  + \int_t^T f^{\top} (t,s)  \bar{\boldsymbol{\Xi}}_t\mathds{1}_t \dot h(t, \cdot)(s)ds \\ 
&= - h(t,t)^{\top} \bar{\boldsymbol{\Xi}}_t F_t(t) +   \langle F_t,   \bar{\boldsymbol{\Xi}}_t\mathds{1}_t \dot{h}_t \rangle_{L^2} + \langle F_t,  \dot{ \bar{\boldsymbol{\Xi}}}_t H_t \rangle_{L^2}. 
\end{aligned}
	\ee
From \eqref{re1} and \eqref{re3} it follows that
\be \label{re4} 
	\begin{aligned}
		\frac{d}{dt}\langle F_t, \bar{\boldsymbol{\Xi}}_t H_t \rangle_{L^2} 
&=-f^{\top} (t,t)\bar{\boldsymbol{\Xi}}_t H_t(t)+  \langle \mathds{1}_t \dot f^{ } (t, \cdot) , \bar{\boldsymbol{\Xi}}_t H_t\rangle_{L^2} \\ 
&\quad  - h(t,t)^{\top} \bar{\boldsymbol{\Xi}}_t F_t(t) +   \langle F_t,   \bar{\boldsymbol{\Xi}}_t\mathds{1}_t \dot{h}_t \rangle_{L^2} + \langle F_t,  \dot{ \bar{\boldsymbol{\Xi}}}_t H_t \rangle_{L^2}. 
\end{aligned}
	\ee
Applying  \eqref{re0} and \eqref{re4} to \eqref{re-1} we finally get 
  \begin{align*}
	\frac{d}{dt}\langle F_t,  {\boldsymbol{\Xi}}_t H_t \rangle_{L^2} 
&=-f^{\top} (t,t)\bar{\boldsymbol{\Xi}}_t H_t(t)+  \langle \mathds{1}_t \dot f^{} (t, \cdot) , \bar{\boldsymbol{\Xi}}_t H_t\rangle_{L^2} \\ 
&\quad  - h(t,t)^{\top} \bar{\boldsymbol{\Xi}}_t F_t(t) +   \langle F_t,   \bar{\boldsymbol{\Xi}}_t\mathds{1}_t \dot{h}_t \rangle_{L^2} + \langle F_t,  \dot{ \bar{\boldsymbol{\Xi}}}_t H_t \rangle_{L^2} \\ 
&\quad - f^{\top}(t,t)Ah(t,t) + \langle {\bm{1}_{t}\dot{f}} (t,\cdot),A\id H_{t} \rangle_{L^2}  \\
&\quad  + \langle  F_{t} , A  \id\mathds{1}_{t} \dot{h}(t,\cdot) \rangle_{L^2}  \\
&= - f^{\top}(t,t)Ah(t,t) -f^{\top} (t,t)\bar{\boldsymbol{\Xi}}_t H_t(t)+\langle \mathds{1}_t \dot f^{} (t, \cdot) , {\boldsymbol{\Xi}}_t H_t\rangle_{L^2} \\ 
&\quad + \langle F_t,  \dot{ {\boldsymbol{\Xi}}}_t H_t \rangle_{L^2}    - h(t,t)^{\top} \bar{\boldsymbol{\Xi}}_t F_t(t)+ \langle F_t,   {\boldsymbol{\Xi}}_t\mathds{1}_t \dot{h}_t \rangle_{L^2},
 	\end{align*}
were we used  $\boldsymbol{\Xi}_t := A \id + \bar {\boldsymbol{\Xi}}_t$ and $\dot{\boldsymbol{\Xi}}_t = \dot{ \bar {\boldsymbol{\Xi}}}_t$ in the last line. This completes the proof.

\end{proof}

We will use Lemma \ref{L:dynamics} in order to differentiate the first term on the right hand side of \eqref{eq:candidatevaluefun}. Recall that for $K$ as in \eqref{g0-k-def} we write $K_t(s)= K(s,t)$.
\begin{lemma} \label{lem-der1} 
Let $g^u$ as in \eqref{eq:gu} and $\boldsymbol \Psi $ as in \eqref{def:riccati_operator}  then we have $[0,T] \times \Omega$-a.e.,  
$$
\begin{aligned} 
\frac d {dt}\langle g^u_t, \boldsymbol{\Psi}_t g_t^{u} \rangle_{L^2}  &= - (g^u_t(t))^\top A g^u_t(t) +  \frac{1}{\lambda} g^u_t(t)^\top \left(\boldsymbol{\hat {K}}^*\boldsymbol{\Psi}_t \right)(g^u_t \mathds{1}_t)(t) \\
&\quad + 2 \langle {\boldsymbol{\Psi}}_t K_t ,  g^u_t   \rangle_{L^2} u_t + 2 \langle  g^u_t ,  \boldsymbol{\Psi}_t \dot{\boldsymbol{{\Sigma}}}_t  \boldsymbol{\Psi}_t g^u_t   \rangle_{L^2}. 
\end{aligned} 
$$
\end{lemma}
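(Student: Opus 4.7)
The plan is to apply Lemma \ref{L:dynamics} directly with $F_t = H_t = g_t^u$, and then to reduce the six resulting terms using (a) self-adjointness of $\boldsymbol{\Psi}_t$, (b) the Riccati equation from Lemma \ref{L:Psi}(iii), and (c) the boundary identity from Lemma \ref{L:Psi}(ii). From \eqref{eq:gu} one reads off, for $s\geq t$, $g^u_t(s) = g_0(s) + \int_0^t K(s,r) u_r\,dr$, so the natural choice in Lemma \ref{L:dynamics} is $f(t,s)=h(t,s) = g_0(s) + \int_0^t K(s,r) u_r\,dr$, which yields $\dot f(t,s) = K(s,t) u_t = K_t(s) u_t$ and the boundary value $f(t,t) = g^u_t(t) = X^u_t$ by \eqref{g-def}. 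The operator $\boldsymbol{\Xi}_t$ is taken to be $\boldsymbol{\Psi}_t$, and by Lemma \ref{L:Psi}(i) the remainder $\bar{\boldsymbol{\Psi}}_t = \boldsymbol{\Psi}_t - A\cdot\id$ is indeed a bounded, self-adjoint, strongly differentiable integral operator, so the structural assumption of Lemma \ref{L:dynamics} holds.

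After substituting, the six terms of Lemma \ref{L:dynamics} collapse in pairs because $f=h$. The matrix term produces $-(g^u_t(t))^\top A\, g^u_t(t)$, and the two symmetric boundary terms combine into $-2(g^u_t(t))^\top \bar{\boldsymbol{\Psi}}_t g^u_t(t)$. The two inner-product terms involving $\dot f$ and $\dot h$ combine, using self-adjointness of $\boldsymbol{\Psi}_t$ together with $\mathds{1}_t K_t u_t = K_t u_t$ (since $K_t(s)$ already vanishes for $s<t$), into $2 u_t \langle \boldsymbol{\Psi}_t K_t, g^u_t\rangle_{L^2}$. The derivative-of-operator term is dealt with by Lemma \ref{L:Psi}(iii), giving $\langle g^u_t,\dot{\boldsymbol{\Psi}}_t g^u_t\rangle_{L^2} = 2\langle g^u_t, \boldsymbol{\Psi}_t \dot{\boldsymbol{\Sigma}}_t \boldsymbol{\Psi}_t g^u_t\rangle_{L^2}$. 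Finally, to rewrite the boundary term, note that $g^u_t = g^u_t \mathds{1}_t$, so Lemma \ref{L:Psi}(ii) applied with the test function $g^u_t$ gives $(\bar{\boldsymbol{\Psi}}_t g^u_t)(t) = \tfrac{1}{2\lambda}(\boldsymbol{\hat K}^*\boldsymbol{\Psi}_t)(g^u_t \mathds{1}_t)(t)$, which converts the boundary term to the $\tfrac{1}{\lambda} (g_t^u(t))^{\top}(\boldsymbol{\hat K}^*\boldsymbol{\Psi}_t)(g^u_t\mathds{1}_t)(t)$ contribution (with the sign dictated by $\hat K = -K\otimes e_1$ from \eqref{K-def}). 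Adding everything up yields the claimed identity.

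The only real obstacle is checking that the regularity hypotheses of Lemma \ref{L:dynamics} are satisfied for our choice of $f$ and $h$: one needs $f,h$ to be continuous in $(t,s)\in[0,T]^2$ with partial $t$-derivatives in $L^2([0,T]^2, \R^2)$. Continuity and $L^2$-differentiability follow pathwise from the definition of $g^u$, the $L^2$-regularity of $K$ encoded in \eqref{eq:assumtionG} (via Remark \ref{rem-kernels}), and the fact that $u\in\mathcal A$ implies $u\in L^2([0,T])$ almost surely; the outputs $\dot f(t,s) = K(s,t) u_t$ and $f(t,t) = X^u_t$ inherit the required integrability from \eqref{ass:P} and \eqref{def:admissset}. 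Everything else in the derivation is algebraic bookkeeping based on the identities already established for $\boldsymbol{\Psi}_t$.
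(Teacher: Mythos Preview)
Your approach is essentially the same as the paper's: define the ``unfolded'' function $\tilde g^u_t(s)=g_0(s)+\int_0^t K(s,r)u_r\,dr$, apply Lemma~\ref{L:dynamics} with $f=h=\tilde g^u_t$ and $\boldsymbol{\Xi}_t=\boldsymbol{\Psi}_t=A\cdot\id+\bar{\boldsymbol{\Psi}}_t$, then collapse the symmetric terms using self-adjointness, invoke the Riccati equation Lemma~\ref{L:Psi}(iii) for the $\dot{\boldsymbol{\Psi}}_t$ term, and use the boundary identity Lemma~\ref{L:Psi}(ii) to rewrite $(\bar{\boldsymbol{\Psi}}_t g^u_t)(t)$. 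One small correction: your parenthetical ``with the sign dictated by $\hat K=-K\otimes e_1$'' is not the right justification at this step --- substituting $(\bar{\boldsymbol{\Psi}}_t g^u_t)(t)=\tfrac{1}{2\lambda}(\boldsymbol{\hat K}^*\boldsymbol{\Psi}_t)(g^u_t\mathds 1_t)(t)$ into $-2(g^u_t(t))^\top(\bar{\boldsymbol{\Psi}}_t g^u_t)(t)$ directly gives the $\tfrac{1}{\lambda}$-term with the sign carried over from the factor $-2$, not from the definition of $\hat K$.
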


\begin{proof} 
Define 
\be \label{t-g}
\tilde g^u_t(s):=g_0(s) + \int_0^t K(s,r)u_r dr, \quad 0\leq s, t \leq T,
\ee
where $g_0$ was defined in \eqref{g0-k-def}. {Recall that $h_0$ in \eqref{z-def} is assumed to be continuous, hence $\tilde h_0$ in \eqref{h-tilde} and therefore $g_0$ are also continuous functions}. From \eqref{eq:assumtionG} and an application of Cauchy-Schwarz inequality it follows that $\tilde g^u_t(s)$ is continuous in $(t,s)\in [0,T]^2$, as required by the assumptions of Lemma \ref{L:dynamics}.  

From \eqref{t-g} we get that 
\be \label{t-der}
\frac{d}{dt} \tilde g^u_t(s) = K(s,t)u_t = K_t(s)u_t, \quad  d\P\otimes ds \otimes dt-\rm{a.e.}
\ee
Moreover, from \eqref{eq:gu} we have 
\be \label{g-g-t}
  g^u_t(s) =\mathds{1}_{\{t\leq s\}} \tilde g^u_t(s),  \quad d\P\otimes ds \otimes dt-\rm{a.e.}
\ee
From Lemma~\ref{L:Psi}(i) it follows that, 
\be \label{psi-bar} 
\boldsymbol{\Psi}_t=A\cdot \id + \bar{\boldsymbol{\Psi}}_t, 
\ee
where $\bar{\boldsymbol{\Psi}}_t$ is a self adjoint integral operator. 

From \eqref{t-der}-\eqref{psi-bar} and by a direct application of  Lemma~\ref{L:dynamics} we get a.e. on $[0,T] \times \Omega$-\textrm{a.s.},
\be \label{id34} 
	\begin{aligned}
\frac d {dt}\langle g_t^u, \boldsymbol{\Psi}_t g_t^{u} \rangle_{L^2} &= - (\tilde{g}^u_t(t))^\top A \tilde{g}^u_t(t)  - (\tilde{g}^u_t(t))^\top \bar{\boldsymbol{\Psi}}_t  {g}^u_t(t) +\langle \mathds{1}_t \dot {\tilde{g}}^{u}_t , {\boldsymbol{\Psi}}_t g^u_t\rangle_{L^2} \\
&\quad   + \langle g^u_t,  \dot{ {\boldsymbol{\Psi}}}_t g^u_t \rangle_{L^2} - \tilde g_t^u(t)^{\top} \bar{\boldsymbol{\Psi}}_t g^u_t(t) +\langle g^u_t,   {\boldsymbol{\Psi}}_t\mathds{1}_t \dot{\tilde{g}}^u_t \rangle_{L^2} \\ 
&= - (\tilde{g}^u_t(t))^\top A \tilde{g}^u_t(t)  - (\tilde{g}^u_t(t))^\top ( - A\cdot \id + \boldsymbol{\Psi}_t) {g}^u_t(t) +\langle \mathds{1}_t K_t  , {\boldsymbol{\Psi}}_t g^u_t\rangle_{L^2} u_t\\
&\quad   + \langle g^u_t,  \dot{ {\boldsymbol{\Psi}}}_t g^u_t \rangle_{L^2} - \tilde g_t^u(t)^{\top}( - A\cdot \id + \boldsymbol{\Psi}_t)  g^u_t(t) +\langle g^u_t,   {\boldsymbol{\Psi}}_t\mathds{1}_t K_t  \rangle_{L^2}u_t. 
 \end{aligned}
 \ee
Since $A$ is a symmetric matrix and $\bar{\boldsymbol{\Psi}}_t$ is self-adjoint, it follows from \eqref{psi-bar} that $ {\boldsymbol{\Psi}}_t$ is self-adjoint. Together with \eqref{g-g-t} we get that,  
\be \label{id12} 
\begin{aligned} 
 \langle \mathds{1}_t K_t  , {\boldsymbol{\Psi}}_t g^u_t\rangle_{L^2} & = \langle g^u_t,   {\boldsymbol{\Psi}}_t\mathds{1}_t K_t  \rangle_{L^2}  \\
&= \langle  {\boldsymbol{\Psi}}_t K_t ,  g^u_t   \rangle_{L^2}. 
\end{aligned} 
\ee
From \eqref{g-g-t} we also have $g^u_t(t) = \tilde g^u_t(t)$. Using this and \eqref{id12}, we can gather similar terms in \eqref{id34} and get, 
	\begin{align*}
\frac d {dt}\langle g_t^u, \boldsymbol{\Psi}_t g_t^{u} \rangle_{L^2}  &= (g^u_t(t))^\top A g^u_t(t) -  2 g^u_t(t)^\top({\bar{{\boldsymbol{\Psi}}}_t  } g^u_t)(t)  \\
&\quad + 2 \langle  {\boldsymbol{\Psi}}_t K_t ,  g^u_t   \rangle_{L^2} u_t +  \langle  g^u_t , \dot{\boldsymbol{\Psi}}_t g^u_t   \rangle_{L^2}, \quad   d\P  \otimes dt-\rm{a.e.}
\end{align*}
Together with Lemma~\ref{L:Psi}(ii) and (iii) it follows that  
	\begin{align*}
\frac d {dt}\langle g_t^u, \boldsymbol{\Psi}_t g_t^{u} \rangle_{L^2}  
&= - (g^u_t(t))^\top A g^u_t(t) +  \frac{1}{\lambda} g^u_t(t)^\top \left(\boldsymbol{\hat {K}}^*\boldsymbol{\Psi}_t \right)(g^u_t \mathds{1}_t)(t) \\
&\quad + 2 \langle {\boldsymbol{\Psi}}_t K_t ,  g^u_t   \rangle_{L^2} u_t + 2 \langle  g^u_t ,  \boldsymbol{\Psi}_t \dot{\boldsymbol{{\Sigma}}}_t  \boldsymbol{\Psi}_t g^u_t   \rangle_{L^2}, \quad  d\P  \otimes dt-\rm{a.e.},
\end{align*}
which completes the proof. 
\end{proof} 

We now use Lemma \ref{L:dynamics} to differentiate the second term on the right hand side of \eqref{eq:candidatevaluefun}.  
\begin{lemma} \label{lem-der2} 
Let $g^u$ as in \eqref{eq:gu} and $\Theta_t$ as in \eqref{back-theta} . Then we have a.e. on $[0,T]\times \Omega$,
\begin{align*}d\langle {\Theta}_t , g_t^u \rangle_{L^2}  
& =  \left(2 \langle \boldsymbol{\Psi}_t \dot{\boldsymbol{\Sigma}}_t \Theta_t, g_t^u \rangle_{L^2} {-}\frac 1 {2\lambda} (P_t-M_t) \langle \boldsymbol{\Psi}_tK_t, g_t^u  \rangle_{L^2}\right) dt \\
&\quad + \left(   \langle {\Theta}_t , K_t \rangle_{L^2} u_t +  \frac{1}{2\lambda} \left(P_t - M_t  +\langle  \Theta_t, K_t\rangle_{L^2} \right)Y^u_t \right) dt + \langle dN_t, g_t^u \rangle_{L^2} .
 \end{align*}
\end{lemma}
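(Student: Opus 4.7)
The plan is to compute $d\langle \Theta_t,g_t^u\rangle_{L^2}$ by combining a Leibniz rule with moving boundary (in the spirit of Lemma~\ref{L:dynamics}) with the $L^2$-valued BSDE dynamics of $\Theta$ established in Proposition~\ref{prop-theta}. Since $\Theta_t$ is defined only on $\{s\geq t\}$ and $g_t^u$ carries the indicator $\mathds 1_{\{s\geq t\}}$, I would first write $\langle\Theta_t,g_t^u\rangle_{L^2}=\int_t^T\Theta_t(s)^\top g_t^u(s)\,ds$ and differentiate termwise. The BSDE \eqref{back-theta} gives $d_t\Theta_t(s)=\dot\Theta_t(s)\,dt+dN_t(s)$, and $\partial_t g_t^u(s)=K(s,t)u_t$ for $s>t$, which together yield
\begin{equation*}
d\langle\Theta_t,g_t^u\rangle_{L^2}=-\Theta_t(t)^\top g_t^u(t)\,dt+\int_t^T(\dot\Theta_t(s))^\top g_t^u(s)\,ds\,dt+\langle\Theta_t,K_t\rangle_{L^2}u_t\,dt+\langle dN_t,g_t^u\rangle_{L^2},
\end{equation*}
where the first term is the moving-boundary contribution, the third uses $\int_t^T\Theta_t(s)^\top K(s,t)\,ds=\langle\Theta_t,K_t\rangle_{L^2}$, and the last is the martingale increment.

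Next I would evaluate the boundary term by plugging the condition \eqref{theta-bnd} for $\Theta_t(t)$ and using $g_t^u(t)=(Y_t^u,Q_t^u)^\top$ from \eqref{g-def}: since $\Theta_t(t)$ is proportional to $e_1$, only the first coordinate of $g_t^u(t)$ survives and one obtains directly $\frac{1}{2\lambda}(P_t-M_t+\langle\Theta_t,K_t\rangle_{L^2})Y_t^u$, which is the last $dt$-group in the statement. The drift of $\Theta_t$ splits into the two pieces appearing in \eqref{back-theta}; the first piece contributes $2\langle\boldsymbol\Psi_t\dot{\boldsymbol\Sigma}_t\Theta_t,g_t^u\rangle_{L^2}$ essentially by definition.

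The main obstacle is to show that the second piece of the drift, namely the scalar $\mathbb E_t[P_t-P_T]\int_t^T(\bar\psi_t(s,t)e_1)^\top g_t^u(s)\,ds$, matches the target $-\frac{1}{2\lambda}(P_t-M_t)\langle\boldsymbol\Psi_t K_t,g_t^u\rangle_{L^2}$. My plan: (a) use the symmetry of $\bar\psi_t$ from Lemma~\ref{L:Psi}(i) to rewrite $\int_t^T(\bar\psi_t(s,t)e_1)^\top g_t^u(s)\,ds=e_1^\top(\bar{\boldsymbol\Psi}_t g_t^u)(t)$; (b) apply the boundary-trace identity of Lemma~\ref{L:Psi}(ii), which with $f\mathds 1_t=g_t^u$ gives $(\bar{\boldsymbol\Psi}_t g_t^u)(t)=\frac{1}{2\lambda}(\boldsymbol{\hat K}^*\boldsymbol\Psi_t g_t^u)(t)$; (c) use the explicit form \eqref{eq:explicithatK} of $\hat K$ together with the outer-product identity $\hat K(u,t)e_1=-K(u,t)=-K_t(u)$ coming from \eqref{K-def} to obtain $e_1^\top(\boldsymbol{\hat K}^*h)(t)=-\langle K_t,h\rangle_{L^2}$; (d) conclude via the self-adjointness of $\boldsymbol\Psi_t$ that the singular piece reduces to a scalar multiple of $\langle\boldsymbol\Psi_t K_t,g_t^u\rangle_{L^2}$.

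The hardest step is (c): it is the algebraic reduction that converts the pointwise kernel trace $\bar\psi_t(\cdot,t)e_1$ into the operator-valued object $\boldsymbol\Psi_t K_t$ appearing in the statement. Without the explicit matrix structure \eqref{eq:explicithatK} and the outer-product identity from \eqref{K-def}, the drift would remain trapped in terms involving the pointwise kernel $\bar\psi_t(\cdot,t)$ rather than the operator applied to $K_t$. Once that identity is in hand, everything else is bookkeeping: collect the boundary term, the two drift contributions, the $u_t$-term and the martingale, and match them against the statement.
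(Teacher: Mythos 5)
Your proposal is correct and follows essentially the same route as the paper's proof: a moving-boundary Leibniz/It\^o differentiation of $\int_t^T\Theta_t(s)^\top \tilde g_t^u(s)\,ds$, the boundary condition \eqref{theta-bnd} paired with $g_t^u(t)=(Y_t^u,Q_t^u)^\top$ for the boundary term, and the identity $\langle\bar\psi_t(\cdot,t)e_1,g_t^u\rangle_{L^2}=-\tfrac{1}{2\lambda}\langle\boldsymbol{\Psi}_t K_t,g_t^u\rangle_{L^2}$ obtained exactly via your steps (a)--(c) from Lemma~\ref{L:Psi}(i),(ii) and \eqref{K-def}, which is precisely the paper's identity \eqref{n-id}. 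The only caveat is that you pair the $\bar\psi_t(s,t)\,\E_t[P_t-P_T]e_1$ drift piece with a plus sign, which is what the stated conclusion requires and is consistent with the derivation in \eqref{fgf}, whereas \eqref{back-theta} displays that term with a minus sign --- a sign-bookkeeping discrepancy already present between the paper's own equations rather than a gap in your argument.
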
 

\begin{proof}  
First note that Lemma~\ref{L:Psi}(i) and (ii)  can be applied to $g_t^u$ instead of $f$, as time dependence will not change the result. Together with \eqref{g0-k-def}, \eqref{eq:explicithatK} and \eqref{eq:gu} we get for every $0\leq t\leq T$,
\be \label{n-id} 
\begin{aligned}
\langle  \bar{\psi}_t(\cdot,t)e_1, g_t^u\rangle_{L^2} &= e_1^\top  \int_t^T \bar{\psi}_t(t,s) g_t^u(s) ds\\
 & = e_1^\top (\boldsymbol{\Psi}_t-A\id)(g^u_t\mathds{1}_t)(t)\\
&= \frac{1}{2\lambda} e_1^\top (\hat{\boldsymbol{K}}^*\boldsymbol{\Psi}_t)(g^u_t \mathds{1}_t)(t) \\
&= - \frac{1}{2\lambda} e_1^\top e_1 \langle \boldsymbol{\Psi}_t K_t, g^u_t\mathds{1}_t \rangle_{L^2} \\
&= - \frac{1}{2\lambda} \langle \boldsymbol{\Psi}_t K_t, g^u_t \mathds{1}_t \rangle_{L^2}. 
\end{aligned}
 \ee
Using Leibnitz rule and \eqref{g-g-t}, we get a.e. on $[0,T]\times \Omega$,
\be \label{nn1} 
 \begin{aligned}
d\langle {\Theta}_t , g_t^u \rangle_{L^2} &= d\int_{t}^{T} \Theta^\top_t(s) \tilde g_t^u(s)ds \\ 
&=  - \Theta_t(t)^\top g_t(t)dt  +  \int_{t}^{T} d\big(\Theta_t^\top(s) {\tilde g}_t^u(s)\big) ds. 
 \end{aligned}
 \ee
From It\^{o} product rule, \eqref{back-theta}  and \eqref{t-der} we have a.e. on $[0,T]\times \Omega$
 \be\label{nn2} 
 \begin{aligned}
 d \big(\Theta^\top_t(s) \tilde g_t^u(s)\big) 
&=     \tilde g_t^u(s)  d\Theta^\top_t(s) dt  +    \Theta_t^\top(s) \dot{\tilde g}_t^u(s) dt\\  
&=   (\dot{\Theta}^\top_t(s)dt   +dN^\top_t )\tilde g_t^u(s) +  \Theta^\top_t(s) \dot{\tilde g}_t^u(s)dt.   
 \end{aligned}
 \ee
From \eqref{t-der}, \eqref{nn1} and \eqref{nn2} we get, 
 \begin{align*}
d\langle {\Theta}_t , g_t^u \rangle_{L^2}  
&= \left(\langle \dot{\Theta}_t , g_t^u \rangle_{L^2}  +   \langle {\Theta}_t , K_t \rangle_{L^2} u_t - \Theta_t(t)^\top g_t(t) \right) dt + \langle dN_t, g_t^u \rangle_{L^2}. 
 \end{align*}
Together with \eqref{g-def}, \eqref{back-theta}, \eqref{theta-bnd} and \eqref{n-id} we get a.e. on $[0,T]\times \Omega$,
  \begin{align*}
d\langle {\Theta}_t , g_t^u \rangle_{L^2} & =  \left(2 \langle \boldsymbol{\Psi}_t \dot{\boldsymbol{\Sigma}}_t \Theta_t, g_t^u \rangle_{L^2} { -}\frac 1 {2\lambda} (P_t-M_t) \langle \boldsymbol{\Psi}_tK_t, g_t^u  \rangle_{L^2}\right) dt \\
&\quad + \left(   \langle {\Theta}_t , K_t \rangle_{L^2} u_t +  \frac{1}{2\lambda} \left(P_t - M_t  +\langle  \Theta_t, K_t\rangle_{L^2} \right)Y^u_t \right) dt + \langle dN_t, g_t^u \rangle_{L^2}.
 \end{align*}
\end{proof} 

Now we  are ready to prove Theorem \ref{eq:MainTheorem}.

\begin{proof}[Proof of Theorem \ref{eq:MainTheorem}] 
{The uniqueness of the optimal trading speed $u^*$ follows from the strict concavity of the cost functional $u \to J(u)$ for $u\in \mathcal A$. The proof of the concavity of  $J$ in \eqref{def:objective} follows the same lines as the proof of Theorem 2.3 in \cite{Lehalle-Neum18}. See also Lemma 10.1 \cite{N-Voss-23} which incorporates the temporary price impact term in the cost functional.  We therefore omit the details.}

Recall that the proposed value process $V^u_t$ was defined in \eqref{eq:candidatevaluefun} and the performance functional $J$ was defined in \eqref{def:objective}.  For any  admissible $u$ as in \eqref{def:admissset} we set 
	 \be  \label{j-t-func} 
	 J_t(u):=\E\left[\int_{t}^{T}  \left( (P_s-Y^u_s) u_s - \lambda u_s^2 - \phi (Q^u_s)^2 \right)ds +  P_T Q^u_T - \varrho q^2 \Mid \mathcal F_t\right],
	 \ee
	 and we define the  process 
	\be \label{m-mart} 
	M^{u}_t := 	\int_0^t \left( (P_s-Y^u_s) u_s - \lambda u_s^2 - \phi (Q^u_s)^2 \right)ds + V_t^{u} + \lambda \int_0^t  (u_s - {\mathcal T_s(u)})^2  d s,
	\ee
	where  
	\be \label{T-def}
	{\mathcal T_t(u)}:= \frac 1 {2\lambda} \big(   \E[ (P_t - P_T) \mid \mathcal F_t]  - Y^u_t + \langle \Theta_t , K_t \rangle_{L^2} +  \langle \boldsymbol{\Psi}_t K_t  , g_t^u \rangle_{L^2} \big).
\ee
Proposition \ref{prop-mart} below implies that 
\be \label{mart} 
\E[M^{u}_{T}| \mathcal F_t]=M^{u}_{t}, \quad \textrm{for all } 0\leq t \leq T, \ \P-\textrm{a.s.} 
\ee

From \eqref{eq:gu} it follows that $g^u_T(s)=0$ on $[0,T)$. From \eqref{eq:chi} we have $\chi_T = -2\varrho q^2$. Using both terminal conditions on \eqref{eq:candidatevaluefun} give,  
\be \label{terminal:candidatevaluefun}
\begin{aligned}V_T^{u}&=\frac{1}{ 2}\langle g_T^{u}, \boldsymbol{\Psi}_T g_T^{u} \rangle_{L^2} + \langle \Theta_T, g_T^{u} \rangle_{L^2} +  \mathbb E[P_T \mid \mathcal F_T] Q_T^{u} + \frac{1}{ 2} \chi_T \\
&= P_T  Q_T^{u}-  \varrho q^2.
\end{aligned}  
\ee
By using \eqref{mart} and \eqref{terminal:candidatevaluefun} on \eqref{j-t-func}  and \eqref{m-mart} it follows that  
	\begin{align}\label{eq:tempalpha}
 V_t^u - J_t(u) & = \lambda \E\left[ \int_t^{T}  (u_s - {\mathcal T_s(u)})^2 d s  \Mid \mathcal {F}_t \right],  \ \textrm{for all } 0\leq t\leq T, \ \P-\rm{a.s.}
	\end{align}
	Since $\lambda >0$, the right hand side of \eqref{eq:tempalpha} is always nonnegative and it vanishes for $u=u^*$, where $u^*(\cdot) = \mathcal T_{\cdot}(u^*)$ is given by \eqref{eq:optimalcontrol_monotone}.  
	
Fix now $t \leq T$ and recall that $\mathcal A_t(u^*)$ was defined in \eqref{admis-u}. We observe  that $V_t^{u^*}$ $=$ $V_t^{u'}$ for all 
	$u'$ $\in$ $\mathcal A_t(u^*)$. We then deduce from \eqref{eq:tempalpha}  that
	\begin{align*}
	V_t^{u^*} & = J_t(u^*) \; = \; \operatorname*{ess~sup}_{u'\in\mathcal A_t(u^*)} J_t(u'), 
	\end{align*}
	which is equivalent to \eqref{eq:optimalvalue_monotone}. Finally, we note that the admissibility of $u^*$ follows from the explicit solution derived in Proposition \ref{prop-exp-u} which is equivalent to \eqref{eq:optimalcontrol_monotone} and from the bounds derived in Lemma \ref{bnd-coef}. This proves that $u^*$ is an optimal control. 
\end{proof}

The following proposition has been used in the proof above. 
\begin{proposition} \label{prop-mart} 
For any $u$ $\in$ $\mathcal A$, $M^{u}$ is a martingale with respect to $(\mathcal F_t)_{0 \leq t\leq T}$. 
\end{proposition}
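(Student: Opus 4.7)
The plan is to compute $dV_t^u$ explicitly by differentiating each of the four pieces in \eqref{eq:candidatevaluefun}, then add the integrand and the completion-of-square contributions appearing in \eqref{m-mart}, and verify that the total drift vanishes. The strategy is a classical HJB-type verification, only carried out in the infinite-dimensional setting.

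First, since $g_t^u(s) = \mathds 1_t(s)\tilde g_t^u(s)$ vanishes for $s<t$, Lemma \ref{lemma-gam-phi} allows us to replace $\boldsymbol \Gamma_t^{-1}$ by $\boldsymbol \Psi_t$ in \eqref{eq:candidatevaluefun}. We then differentiate term by term: for the quadratic form $\tfrac12\langle g_t^u, \boldsymbol \Psi_t g_t^u\rangle_{L^2}$ we apply Lemma \ref{lem-der1}; for the linear form $\langle \Theta_t, g_t^u\rangle_{L^2}$ we apply Lemma \ref{lem-der2}; for $M_t Q_t^u$ the It\^o product rule yields drift $-M_t u_t\,dt$ plus $Q_t^u\,dM_t$; and for $\tfrac12\chi_t$ we read off the drift $-\tfrac{1}{4\lambda}(P_t-M_t+\langle \Theta_t,K_t\rangle_{L^2})^2\,dt$ from \eqref{eq:BSDEchi}. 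Using $g_t^u(t)=(Y_t^u,Q_t^u)^\top$ and the definition of $A$ in \eqref{a-mat}, the quadratic boundary term in Lemma \ref{lem-der1} contributes $-\tfrac{1}{4\lambda}(Y_t^u)^2+\phi (Q_t^u)^2$, while the $\boldsymbol{\hat K}^*\boldsymbol\Psi_t$ term, together with Lemma \ref{L:Psi}(ii) and the boundary identity \eqref{theta-bnd} for $\Theta$, reproduces $\langle \boldsymbol\Psi_t K_t, g_t^u\rangle_{L^2}$ combinations needed below.

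Second, I would expand $\lambda(u_t-\mathcal T_t(u))^2 = \lambda u_t^2 - 2\lambda u_t\mathcal T_t(u)+\lambda \mathcal T_t(u)^2$. Using the definition \eqref{T-def}, namely $2\lambda\mathcal T_t(u) = (P_t-M_t)-Y_t^u+\langle\Theta_t,K_t\rangle_{L^2}+\langle\boldsymbol\Psi_t K_t,g_t^u\rangle_{L^2}$, the $u_t$-linear pieces of the drift of $V_t^u$ (coming from Lemma \ref{lem-der1}: $\langle \boldsymbol\Psi_t K_t, g_t^u\rangle_{L^2}u_t$; from Lemma \ref{lem-der2}: $\langle\Theta_t, K_t\rangle_{L^2}u_t+\tfrac{1}{2\lambda}Y_t^u(P_t-M_t+\langle\Theta_t,K_t\rangle_{L^2})$; from $M_tQ_t^u$: $-M_t u_t$), when combined with $-2\lambda u_t\mathcal T_t(u)$ and the $(P_t-Y_t^u)u_t-\lambda u_t^2$ from the integrand, are expected to cancel exactly. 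Simultaneously, the $(Q_t^u)^2$-contributions, which enter with opposite signs through $A=\mathrm{diag}(1/(2\lambda),-2\phi)$ and the penalty $-\phi(Q_t^u)^2$, cancel; and the $(Y_t^u)^2$ contribution from $-\tfrac{1}{4\lambda}(Y_t^u)^2$ is precisely balanced by the corresponding piece inside $\lambda\mathcal T_t(u)^2$.

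The main obstacle is the remaining purely quadratic piece: the cross-term $\lambda\mathcal T_t(u)^2$, together with the quadratic Riccati drift $\langle g_t^u,\boldsymbol\Psi_t\dot{\boldsymbol\Sigma}_t\boldsymbol\Psi_t g_t^u\rangle_{L^2}$ from Lemma \ref{L:Psi}(iii)--Lemma \ref{lem-der1} and the linear Riccati drift $2\langle \boldsymbol\Psi_t\dot{\boldsymbol\Sigma}_t\Theta_t,g_t^u\rangle_{L^2}$ from Lemma \ref{lem-der2}, must combine to equal $\tfrac{1}{4\lambda}(P_t-M_t+\langle\Theta_t,K_t\rangle_{L^2})^2$, thereby cancelling the drift of $\tfrac12\chi_t$. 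To verify this, I would plug in the explicit rank-one form \eqref{eq:diffkernelsigma} for $\dot{\boldsymbol\Sigma}_t$, giving $\langle f,\boldsymbol\Psi_t\dot{\boldsymbol\Sigma}_t\boldsymbol\Psi_t h\rangle_{L^2}=-\tfrac{1}{4\lambda}(\boldsymbol\Psi_t f)(t)^\top K_t(t) K_t(t)^\top(\boldsymbol\Psi_t h)(t)$ read in the appropriate directions, and use Lemma \ref{L:Psi}(ii) together with the boundary identity \eqref{theta-bnd} to translate $(\boldsymbol\Psi_t\Theta_t)(t)$ and $(\boldsymbol\Psi_t g_t^u)(t)$ into quantities involving $P_t-M_t$, $Y_t^u$ and $\langle\boldsymbol\Psi_t K_t, g_t^u\rangle_{L^2}$. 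The resulting perfect square indeed collapses with $\lambda\mathcal T_t(u)^2$ and $\dot\chi_t$.

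Once the drift of $M^u$ is shown to vanish, it remains to upgrade the resulting local martingale to a true martingale. Here I would use the admissibility $u\in\mathcal A$ of \eqref{def:admissset}, the integrability \eqref{ass:P} of $P$, the uniform operator bound on $\boldsymbol\Psi_t$ from Lemma \ref{L:Psi}(i), and square-integrability of $\Theta$ and $\chi$ (following from their explicit representations \eqref{th-sol} and \eqref{eq:chi} together with Doob's inequality for $M$) to conclude $\sup_{t\leq T}\E[(M_t^u)^2]<\infty$. This yields that $M^u$ is a uniformly integrable $(\mathcal F_t)$-martingale, completing the proof.
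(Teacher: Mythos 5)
Your plan follows essentially the same route as the paper's proof: the drift of $V^u$ is computed via Lemmas \ref{lem-der1} and \ref{lem-der2} and the BSDE \eqref{eq:BSDEchi}, the completion-of-square term $\lambda(u_t-\mathcal T_t(u))^2$ is expanded so that the $u_t$-linear pieces cancel, the remaining quadratic pieces are collapsed using the rank-one structure of $\dot{\boldsymbol\Sigma}_t$ (this is exactly Lemma \ref{lem-id} in the paper) together with Lemma \ref{L:Psi}(ii) and \eqref{theta-bnd}, and the local martingale is upgraded to a true one by the same $L^2$ bounds. The only imprecision is your displayed rank-one identity, which should be the $L^2$ pairing $\langle f,\boldsymbol{\Psi}_t\dot{\boldsymbol{\Sigma}}_t\boldsymbol{\Psi}_t h\rangle_{L^2}=-\tfrac{1}{4\lambda}\langle\boldsymbol{\Psi}_t K_t,f\rangle_{L^2}\,\langle\boldsymbol{\Psi}_t K_t,h\rangle_{L^2}$ rather than a pointwise evaluation at $t$; with that correction the stated cancellations go through as in the paper.
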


Before proving  Proposition~\ref{prop-mart}, we collect in the next lemma,  some useful identities. Recall that the notation $\mathds{1}_t$ was introduced in \eqref{ind-def}. 
\begin{lemma}  \label{lem-id} 
The following identities hold: 
\begin{itemize} 
\item[\textbf{(i)}] 
 \be
     4 \langle \boldsymbol{\Psi}_t \dot{\boldsymbol{\Sigma}}_t \Theta_t, g_t^u \rangle_{L^2}     = - \frac{1}{\lambda} \langle \boldsymbol{\Psi}_t K_t, g_t^u \rangle_{L^2} \langle K_t,  \Theta_t \rangle_{L^2}, \quad \textrm{for all } 0\leq t\leq T, \ \P-\rm{a.s.} 
 \ee
\item[\textbf{(ii)}] 
 $$
  \left(\boldsymbol{\hat {K}}^*\boldsymbol{\Psi}_t \right)(g^u_t \mathds{1}_t)(t) =- \langle \boldsymbol{\Psi}_tK_t, g_t \rangle_{L^2} e_1. 
  $$
\item[\textbf{(iii)}] 
  \be \label{de2} 
 2  \langle   \boldsymbol{\Psi}_t\dot{\boldsymbol{\Sigma}}_t \boldsymbol{\Psi}_t g_t^u, g_t^u \rangle_{L^2}  =  -\frac{1}{2\lambda}      \langle \boldsymbol{\Psi}_t K_t  , g_t^u \rangle_{L^2}^2.
\ee
\end{itemize} 
\end{lemma}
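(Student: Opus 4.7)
The three identities rely on two structural observations. First, the kernel of $\dot{\boldsymbol{\Sigma}}_t$ from \eqref{eq:diffkernelsigma} is a pure outer product in its spatial variables, so $\dot{\boldsymbol{\Sigma}}_t$ acts as a rank-one operator: for any $f\in L^2([0,T],\mathbb{R}^2)$,
\begin{equation}
(\dot{\boldsymbol{\Sigma}}_t f)(s) \;=\; -\frac{1}{4\lambda}\, K_t(s)\,\langle K_t, f\rangle_{L^2}, \notag
\end{equation}
using $K(s,t)=K_t(s)$. Second, from \eqref{K-def} we have $\hat K(t,s) = -K(t,s)\,e_1^\top$, so the adjoint kernel is $\hat K^{*}(t,s) = -e_1\,K(s,t)^\top$, which implies
\begin{equation}
(\boldsymbol{\hat K}^{*} h)(t) \;=\; -e_1\,\langle K_t, h\rangle_{L^2}, \qquad h\in L^2([0,T],\mathbb{R}^2). \notag
\end{equation}
Together with the self-adjointness of $\boldsymbol{\Psi}_t$ (which follows from Lemma~\ref{L:Psi}(i) and the symmetry of $A$), these two reductions yield all three claims.

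For \textbf{(i)}, I would first apply the rank-one reduction to obtain $\dot{\boldsymbol{\Sigma}}_t \Theta_t = -\frac{1}{4\lambda}\langle K_t,\Theta_t\rangle_{L^2} K_t$, then pull the scalar out and use self-adjointness to move $\boldsymbol{\Psi}_t$ onto $K_t$ inside the inner product with $g^u_t$; multiplying by $4$ produces the stated formula.

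For \textbf{(ii)}, I would apply the formula for $\boldsymbol{\hat K}^{*}$ above with $h = \boldsymbol{\Psi}_t(g^u_t\mathds{1}_t)$, then use self-adjointness of $\boldsymbol{\Psi}_t$ to rewrite $\langle K_t, \boldsymbol{\Psi}_t(g^u_t \mathds{1}_t)\rangle_{L^2} = \langle \boldsymbol{\Psi}_t K_t, g^u_t\mathds{1}_t\rangle_{L^2}$, and finally observe that the indicator $\mathds{1}_t$ can be dropped on the right-hand side since $g^u_t$ is already supported on $[t,T]$ by \eqref{eq:gu}.

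For \textbf{(iii)}, I would repeat the rank-one reduction with $f = \boldsymbol{\Psi}_t g^u_t$, obtaining $\dot{\boldsymbol{\Sigma}}_t \boldsymbol{\Psi}_t g^u_t = -\frac{1}{4\lambda}\langle K_t, \boldsymbol{\Psi}_t g^u_t\rangle_{L^2}\,K_t = -\frac{1}{4\lambda}\langle \boldsymbol{\Psi}_t K_t, g^u_t\rangle_{L^2}\,K_t$ by self-adjointness; applying $\boldsymbol{\Psi}_t$, pairing with $g^u_t$, and using self-adjointness once more produces the squared scalar on the right. There is no real obstacle here: the three identities are algebraic consequences of the rank-one structure of $\dot{\boldsymbol{\Sigma}}_t$ and the factored form of $\hat K$, and the only care needed is tracking the self-adjoint moves and the support of $g^u_t$ to justify the presence or absence of the indicator $\mathds{1}_t$.
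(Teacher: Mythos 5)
Your proposal is correct and follows essentially the same route as the paper: the paper's proof of (i) and (iii) consists of writing out the triple/quadruple integrals and factoring them, which is exactly your rank-one reduction of $\dot{\boldsymbol{\Sigma}}_t$, and its proof of (ii) likewise rests on the identity $\langle K_t,f\rangle_{L^2}e_1=-(\boldsymbol{\hat K}^*f)(t)$ together with the self-adjointness of $\boldsymbol{\Psi}_t$ and the support property $g^u_t\mathds{1}_t=g^u_t$. (Minor remark: in (i) the factor $\langle K_t,\Theta_t\rangle_{L^2}$ is a scalar, so pulling it out needs only linearity, not self-adjointness; this does not affect correctness.)
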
 
The proof of Lemma \ref{lem-id} is postponed to Section \ref{sec-pf-id}.

\begin{proof} [Proof of Proposition \ref{prop-mart}] 	
Recall that $M_{t} = \E[P_T | \mathcal F_t]$ and define  
\be \label{beta} 
\beta_t= P_t - M_t +  \langle \Theta_t , K_t \rangle_{L^2}. 
\ee
Then from \eqref{m1-mart}, \eqref{m-mart}, \eqref{T-def} and \eqref{beta} we have  
\be \label{m-dif} 
\begin{aligned}
dM^{u}_t &=   \left((P_t-Y^u_t) u_t - \lambda u_t^2 - \phi (Q^u_t)^2     + \lambda (u_t - {\mathcal T_t(u)})^2 \right)dt + dV_t^{u}\\
&=   \big((P_t-Y^u_t) u_t - \phi (Q^u_t)^2    -  2\lambda {\mathcal T_t(u)}u_t  + \lambda {\mathcal T_t(u)}^2\big)dt  + dV_t^{u}\\
&= - \phi (Q^u_t)^2 dt  -   \left( - M_t +  \langle \Theta_t , K_t \rangle_{L^2} +    \langle \boldsymbol{\Psi}_t K_t  , g_t^u \rangle_{L^2} \right) u_t \big)dt  \\
&\quad + \frac 1 {4\lambda} \left( \beta_t - Y^u_t +    \langle \boldsymbol{\Psi}_t K_t  , g_t^u \rangle_{L^2} \right)^2dt + dV_t^{u} \\
&= \left(\frac {1}{2}g^u_t(t)^\top A g^u_t(t)   -  \left( - M_t +  \langle \Theta_t , K_t \rangle_{L^2} +    \langle \boldsymbol{\Psi}_t K_t  , g_t^u \rangle_{L^2} \right) u_t \right)dt\\
&\quad  + \frac {1}{4 \lambda} \left( \beta_t^2 + 2\beta_t  \langle \boldsymbol{\Psi}_t K_t  , g_t^u \rangle_{L^2} - 2\beta_t Y_t^u - 2 Y_t^u  \langle \boldsymbol{\Psi}_t K_t  , g_t^u \rangle_{L^2} +  \langle \boldsymbol{\Psi}_t K_t  , g_t^u \rangle_{L^2}^2   \right)dt \\
&\quad + dV_t^{u},  \quad \textrm{for all } 0\leq t\leq T, \ \P-\rm{a.s.}
\end{aligned}
\ee
where we have used the identity   
$$g^u_t(t)^\top A g^u_t(t) = \frac {1}{2\lambda}(Y^u_t)^{2} - 2\phi (Q^u_t)^2, 
$$
  which follows from \eqref{g-def} and \eqref{a-mat}, in the last equality.  

An application of \eqref{eq:BSDEchi} and Lemmas~\ref{lem-der1} and~\ref{lem-der2} to \eqref{eq:candidatevaluefun} yields:
\be \label{v-dif} 
\begin{aligned}
	dV^{u}_t &= {\frac{1}{2}}\bigg(  \left(-g^u_t(t)^\top A g^u_t(t) +  \frac{ {1}}{\lambda} g^u_t(t)^\top  \left(\boldsymbol{\hat {K}}^*\boldsymbol{\Psi}_t \right)(g^u_t \mathds{1}_t)(t) \right)dt\\
	&\quad +\left( 2 \langle {\boldsymbol{\Psi}}_t K_t ,  g^u_t   \rangle_{L^2} u_t + 2 \langle  g^u_t ,  \boldsymbol{\Psi}_t \dot{\boldsymbol{{\Sigma}}}_t  \boldsymbol{\Psi}_t g^u_t   \rangle_{L^2} \right)dt\\
	&\quad+ \left(4 \langle \boldsymbol{\Psi}_t \dot{\boldsymbol{\Sigma}}_t \Theta_t, g_t^u \rangle_{L^2} - \frac {1} {\lambda} (P_t-M_t) \langle \boldsymbol{\Psi}_tK_t, g_t^u  \rangle_{L^2}\right) dt \\
&\quad + \left(  2 \langle {\Theta}_t , K_t \rangle_{L^2} u_t +  \frac{ {1}}{\lambda} \left(P_t - M_t  +\langle  \Theta_t, K_t\rangle_{L^2} \right)Y^u_t \right) dt \\
	&\quad + \left( -2M_t u_t +  \dot \chi_t \right) dt + 2Q^u_t dM_t  + d\wt{M}_t +2 \langle dN_t, g^u_t \rangle_{L^2} \bigg), \quad dt\otimes d\P-\rm{a.e}. 
	\end{aligned}
	\ee
	Note that from \eqref{eq:BSDEchi} and \eqref{beta} we have 
\be \label{xi-beta}
 \dot{\chi}_t = -\frac 1 {2\lambda} \beta_t^2.
\ee
Using \eqref{xi-beta} and Lemma~\ref{lem-id}, we can rewrite \eqref{v-dif} as 
\be  \label{v-dif2} 
\begin{aligned}
	dV^{u}_t &= \frac{1}{2}\bigg( \left(- g^u_t(t)^\top A g^u_t(t) {-} \frac{1}{\lambda} g^u_t(t)^\top   \langle \boldsymbol{\Psi}K_t, g_t \rangle_{L^2} e_1 \right)dt\\
	&\quad + \left(2 \langle {\boldsymbol{\Psi}}_t K_t ,  g^u_t   \rangle_{L^2} u_t   -\frac{1}{2\lambda}      \langle \boldsymbol{\Psi}_t K_t  , g_t^u \rangle_{L^2}^2 \right)dt \\
	&\quad- \frac {1} { \lambda} \left( \langle \boldsymbol{\Psi}_t K_t, g_t^u \rangle_{L^2} \langle K_t,  \Theta_t \rangle_{L^2}+  (P_t-M_t) \langle \boldsymbol{\Psi}_tK_t, g_t^u  \rangle_{L^2}\right) dt \\
&\quad + \left(  2 \langle {\Theta}_t , K_t \rangle_{L^2} u_t +  \frac{1}{\lambda} \left(P_t - M_t  +\langle  \Theta_t, K_t\rangle_{L^2} \right)Y^u_t \right) dt \\
	&\quad - \left( 2M_t u_t   +\frac 1 {2\lambda} \beta_t^2 \right) dt + 2Q^u_t dM_t+ d\wt M_t + 2 \langle dN_t, g^u_t \rangle_{L^2} \bigg), \quad dt\otimes d\P-\rm{a.e}. 
	\end{aligned}
\ee
Plugging in \eqref{v-dif2} to \eqref{m-dif} and using \eqref{beta} we get  
\bd
			\begin{aligned}
		dM^{u}_t &=     \left(\frac{1}{{2}\lambda} g^u_t(t)^\top e_1   \langle \boldsymbol{\Psi}K_t, g_t \rangle_{L^2}  -\frac{1}{{2}\lambda} Y_t^u  \langle \boldsymbol{\Psi}_t K_t  , g_t^u \rangle_{L^2} \right)dt   \\
			&\quad  +   Q^u_tdM_t+  \frac{1}{2}d\wt M_t +{ \langle dN_t, g^u_t \rangle_{L^2} }, \quad dt\otimes d\P-\rm{a.e}. 
\end{aligned}
\ed
From \eqref{g-def} we have $g^u_t(t)^\top e_1 = Y_t^u$, so we get 
\be  \label{m-u-def}
dM_t^u =   Q^u_t dM_t    + \frac{1}{2} d\wt M_t +{ \langle dN_t, g^u_t \rangle_{L^2}} , \quad dt\otimes d\P-\rm{a.e}. 
\ee
Note that from \eqref{def:Q},\eqref{eq:gu},\eqref{m1-mart},\eqref{th-sol},\eqref{eq:BSDEchi},\eqref{eq:zxprop},\eqref{eq:candidatevaluefun} and the explicit presentation of $(N_t)_{t\in [0,T]}$ which is detailed in \eqref{eq:dN}, one can show that the right-hand side of \eqref{m-u-def} admits a càdlàg modification. Similar observation is made for the left-hand side of \eqref{m-u-def} using \eqref{m-mart} and \eqref{eq:candidatevaluefun}. Hence it follows that modifications of both sides of \eqref{m-u-def} hold in equality for all $t\in [0,T]$, $\P$-a.s. This shows that $M^u$ is a local martingale. 
To argue true martingality, we bound every term in the right-hand side of \eqref{m-mart} in order to get 
\be \label{hj0} 
\E\Big[\sup_{t\in [0,T]} |M_t^u|\Big]<\infty, \quad \textrm{ for any } u \in \mathcal A,
\ee
where we recall that the set of admissible controls $\mathcal A$ was defined in \eqref{def:admissset}. 
An application of Jensen's inequality and Cauchy–Schwarz's inequality gives,
\be \label{hj1} 
\begin{aligned} 
\E \left[ \sup_{t\in [0,T]} \left(\int_0^t P_s u_s  ds \right)^2 \right] &\leq C \E \left[ \left( \int_0^T |P_s| |u_s|  ds \right)^2 \right] \\
&\leq C \E \left[  \int_0^T P^2_sds \right] \E \left[  \int_0^T u^2_sds \right] \\ 
&<\infty,   \quad \textrm{for all } u \in \mathcal A,
\end{aligned} 
\ee
where we have used \eqref{ass:P} and \eqref{def:admissset} in the last inequality. 

Additional applications of Jensen's inequality and Cauchy–Schwarz's inequality give, 
\be \label{hj2} 
\begin{aligned}
&\E \left[ \sup_{t\in [0,T]} \left( \int_0^t \left(\int_0^s G(s,r)u_r dr \right)  u_s  ds  \right)^2 \right] \\
&\leq  C \E \left[   \int_0^T \left(\int_0^s G(s,r)u_r dr \right)^2 ds\right]  \E \left[  \int_0^Tu^2_s ds \right] \\
&\leq  C(T) \E \left[   \int_0^T \left(\int_0^s G(s,r)u_r dr \right)^2 ds \right]\\ 
&\leq  C(T) \left( \sup_{s\in [0,T]}\int_0^T G^2(s,r)dr \right) \E \left[ \int_0^T u^2_r dr \right]  \\ 
&<\infty,  \quad \textrm{for all } u \in \mathcal A,
\end{aligned} 
\ee
where we have used \eqref{def:admissset} and \eqref{eq:assumtionG} in the last inequality.

From \eqref{def:Q}, \eqref{def:admissset}, \eqref{z-def}, \eqref{y-z} and \eqref{hj2} we get,  
\be \label{tr-4} 
\begin{aligned}
\E \left[ \sup_{t\in [0,T]} \left|\int_0^t \left( Y^u_su_s - \lambda u_s^2 - \phi (Q^u_s)^2 \right)ds  \right| \right] <\infty,  \quad \textrm{for all } u \in \mathcal A. 
\end{aligned} 
\ee

Recalling \eqref{m-mart}, our next step is to prove that 
\be\label{tr-3}
\E \left[\sup_{t\in [0,T]} |V_t^{u} |  \right] < \infty, \quad \textrm{for any } u \in \mathcal A, 
\ee
where $V_{\cdot}^{u}$ was defined in  \eqref{eq:candidatevaluefun}. 

The boundedness properties of $\Theta= \{\Theta_{t}(s) : t\in [0,s],\, s\in [0,T]\}$ defined in \eqref{th-sol} are detailed in Lemma \ref{lem-bnd-theta} below and give,   
$$
\E\left[  \sup_{t \in [0,T]} \int_t^T \mathbb  |\Theta_t(s)|^2 ds\right]   <\infty. 
$$
Using this bound together with Jensen and Cauchy–Schwarz inequalities and \eqref{eq:gu} we get, 
\be\label{tr-2} 
\begin{aligned}
&\E\left[  \sup_{t \in [0,T]} ( \langle \Theta_t, g_t^{u} \rangle_{L^2} )^2\right]  \\
&= \E\left[ \sup_{t \in [0,T]}  \left| \int_{t}^T \Theta_t(s) g_t^{u}(s) ds  \right| \right] \\
&\leq  C(T) \E\left[  \sup_{t \in [0,T]} \int_t^T    |\Theta_t(s)|^2 ds\right]  \E\left[  \sup_{t \in [0,T]} \int_t^T  \left| \tilde h_{0}(s) + \int_{0}^{t}\tilde G(s,r)u_{r}dr \right|^2 ds\right] \\
&\leq  C_1(T)\left ( C_2(T)+  \E\left[  \sup_{t \in [0,T]} \int_t^T  \left| \int_{0}^{t}\tilde G(s,r)u_{r}dr \right|^2 ds\right] \right) \\
&< \infty, \quad \textrm{for all } u \in \mathcal A,
\end{aligned} 
\ee
where the last inequality follows from \eqref{h-tilde} and similar steps as in \eqref{hj2}. 

Using similar steps as in \eqref{hj1} we get, 
\be\label{tr-1} 
\E \left[ \sup_{t\in [0,T]} \big| \mathbb E[P_T \mid \mathcal F_t] Q_t^{u}  \big| \right] <\infty. 
\ee
The following bound on $\boldsymbol{\Gamma}^{-1}_t$ from \eqref{op-gam-inv} will be proved later in Lemma \ref{lem-bnd-gam}
\be \label{tr1} 
\sup_{t\leq T} \|\boldsymbol{\Gamma}^{-1}_t \|_{\rm{op}}   < \infty.
 \ee
 Repeating the same steps as in \eqref{hj2}, using \eqref{eq:gu}, \eqref{def:admissset} and \eqref{eq:assumtionG} we get, 
 \be\label{tr2} 
\E\Big[ \sup_{t\in [0,T]}\|g_t^u\|^2_{L^2} \Big]< \infty, \quad \textrm{for all } u \in \mathcal A. 
 \ee
From \eqref{op-norm}, \eqref{tr1} and \eqref{tr2} it follows that 
\be\label{tr3}
\begin{aligned}
\E \left[\sup_{t\in [0,T]} \big|\langle g_t^{u}, \boldsymbol {\Gamma}_t^{-1} g_t^{u} \rangle_{L^2}  \big| \right] &\leq \E \left[\sup_{t\in [0,T]} \|g_t^u\|_{L^2} \| \boldsymbol {\Gamma}_t^{-1} g_t^{u} \|_{L^2} \right] \\ 
&\leq C\E \left[\sup_{t\in [0,T]} \|g_t^u\|^2_{L^2}  \right] \\ 
&< \infty, \quad \textrm{for all } u \in \mathcal A. 
\end{aligned} 
\ee
Using similar steps as \eqref{hj2}--\eqref{tr3} in on \eqref{eq:BSDEchi} and \eqref{eq:zxprop} we can derive the following bound,
\be\label{tr4}
\mathbb E\left[ \sup_{t\in [0,T]}|\chi_t|  \right]<\infty.
\ee
From \eqref{eq:candidatevaluefun}, \eqref{tr-2}, \eqref{tr-1}, \eqref{tr3} and \eqref{tr4}, \eqref{tr-3} follows. 

Recall that $\mathcal T_{\cdot}$ was defined in \eqref{T-def}. Next we will show that  
\be \label{tr4.5} 
\E\left[ \int_0^T (u_s - {\mathcal T_s(u)})^2  d s \right] <\infty,   \quad \textrm{for all } u \in \mathcal A.
\ee
Note that 
\be\label{tr5}
\begin{aligned}
\E\left[ \int_0^T (u_s - {\mathcal T_s(u)})^2  d s \right] &\leq 2\E\left[ \int_0^T u_s^2  d s \right] +2\E\left[ \int_0^T  {\mathcal T_s(u)}^2  d s \right], 
\end{aligned} 
\ee
hence by \eqref{def:admissset} it suffices to bound the second term in the right hand side of \eqref{tr5}. 

Considering the terms on the right-hand side of \eqref{T-def}, it follows that it is enough to derive the following bound  in order to establish \eqref{tr4.5} as the rest of the terms can be bounded using similar arguments as in \eqref{tr-4}, \eqref{tr-2} and \eqref{tr-1}. From \eqref{eq:assumtionG}, \eqref{g0-k-def} , \eqref{eq:gu}, \eqref{h-tilde}, \eqref{K-t},  and Cauchy–Schwarz inequality we get, 
\be\label{tr5}
\begin{aligned}
\int_0^T \E\big(|\langle \boldsymbol{\Psi}_t K_t  , g_t^u \rangle_{L^2}| \big)^2 dt  &\leq \int_0^T\| \boldsymbol{\Psi}_t K_t \|^2_{L^2} \E\big[\| g_t^u \|^2_{L^2} \big]dt \\
&\leq C  \int_0^T     \|K_t \|^2_{L^2}  \left(\int_{0}^T \int_{0}^T |\psi_t  (s,r) |^2ds dr  \right)dt   \\
&<\infty. 
\end{aligned} 
\ee
where we have also used \eqref{tr2} in the second inequality and Lemma \eqref{L:Psi}(i) in the last inequality. It follows that \eqref{tr4.5} is satisfied.  
From \eqref{m-mart}, \eqref{tr-3}, \eqref{tr-4} and \eqref{tr4.5} we get \eqref{hj0}, hence $M^u$ is a true martingale.   
\end{proof} 

\section{Proof of Proposition \ref{prop-exp-u}}  \label{sec-prop-expl} 
\begin{proof}[Proof of Proposition \ref{prop-exp-u}]
Recalling \eqref{def:Q} we note that $g^u$ defined in  \eqref{eq:gu} can be re-written in the form   
\begin{align*}
    g_t^u(s) = \mathds 1_{\{s\geq t\}} \left(  \tilde h_0(s), q\right)^\top + \int_0^t \mathds 1_{\{s\geq t\}} \left(\tilde G(s,r), -1 \right)^\top u_r dr, 
\end{align*}
so that an application of Fubini's theorem leads to
\begin{align*}
  \langle \boldsymbol {\Gamma}_t^{-1} K_t  , g_t^{u} \rangle_{L^2} = \langle \boldsymbol {\Gamma}_t^{-1} K_t  , \mathds 1_t \left( \tilde h_0, q \right)^\top  \rangle_{L^2} + \int_0^t     \langle \boldsymbol {\Gamma}_t^{-1} K_t  , \mathds 1_t \left( \tilde G(\cdot, r), - 1  \right)^\top \rangle_{L^2} u_r dr,
\end{align*}
which, combined with \eqref{y-h-tilde}, yields that we can rewrite \eqref{eq:optimalcontrol_monotone} as \eqref{volt-u}. Note that \eqref{volt-u} is a linear Volterra equation which admits a solution for any fixed $\omega \in \Omega$, whenever $a(\omega)\in L^2([0,T],\mathbb R)$ and $B$ satisfies $$\sup_{t\leq T} \int_0^T B(t,s)^2 ds < \infty. $$  Indeed, the solution is given in terms of the resolvent  $R^B$ of B, see \eqref{eq:resolventeqkernel} below,  which exists by virtue of Corollary 9.3.16 in \cite{gripenberg1990volterra} and satisfies
\begin{align*}
\int_0^T \int_0^T |R^B(t,s)| dt ds < \infty.
\end{align*}
In this case, the solution $u^*$ is given by
 \begin{align*}
    u_t^* = a_t + \int_0^t R^B(t,s) a_s ds. 
 \end{align*}
 Note that $R^B$ is the kernel of the operator given by 
 $$
\boldsymbol R^B =  (\id - \boldsymbol B)^{-1}. 
$$
One would still need to check that $u^* \in \mathcal A$ defined as in \eqref{def:admissset}. This follows from the Lemma \ref{bnd-coef} below.  
\end{proof} 
\blue{
\begin{lemma} \label{bnd-coef} 
Assume $\lambda >0$ and $ \phi,\varrho \geq 0$. Then, the following hold:  {
\begin{itemize} 
\item[\bf{(i)}]  $\mathbb E\left[  \sup_{t\leq T} a_t^2 \right] <\infty$, 
\item[\bf{(ii)}] $\sup_{t\leq T} \int_0^T B(t,s)^2 ds < \infty, $ 
\item[\bf{(iii)}] {$\mathbb E\left[ \sup_{t\leq T} (u_t^*)^2  \right] < \infty$.} 
\end{itemize} }
\end{lemma}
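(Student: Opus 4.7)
The plan is to prove the three bounds in sequence, leveraging the operator formulation, the kernel assumptions \eqref{eq:assumtionG}, the integrability assumption \eqref{ass:P} on $P$, and the square-integrability of $h_0$. The central technical ingredient throughout is a uniform-in-$t$ operator-norm bound $\|\boldsymbol{\Gamma}_t^{-1}\|_{\rm op}\leq C$. This follows from the explicit block formula \eqref{op-gam-inv} together with the observation that, by nonnegative-definiteness of $G$ and of $\boldsymbol{1}_t^*\boldsymbol{1}_t$, we have $\boldsymbol{D}_t \succeq 2\lambda\,\id$ in the operator order, hence $\|\boldsymbol{D}_t^{-1}\|_{\rm op}\leq 1/(2\lambda)$, while $\|\boldsymbol{1}_t\|_{\rm op}\leq \sqrt{T}$ trivially. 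Similarly, $\|K_t\|_{L^2}^2 = \int_0^T|\tilde G(s,t)|^2 ds + (T-t)$ is uniformly bounded in $t$ by \eqref{eq:assumtionG}.

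For part \textbf{(i)}, split $a_t$ into its four summands from \eqref{eq:aB}. The term $\mathbb{E}[P_t-P_T\mid \mathcal F_t]$ has $\mathbb{E}\int_0^T (\mathbb{E}[P_t-P_T\mid \mathcal F_t])^2 dt \leq 4\mathbb{E}\int_0^T P_t^2 dt + 4T\mathbb{E}[P_T^2] < \infty$ by conditional Jensen and \eqref{ass:P}. The deterministic term $\tilde h_0(t) = h_0(t)-2\varrho q$ lies in $L^2([0,T])$ by hypothesis. For $\langle \Theta_t, K_t\rangle_{L^2}$, Cauchy--Schwarz yields $|\langle\Theta_t,K_t\rangle_{L^2}|\leq \|\Theta_t\|_{L^2}\,\|K_t\|_{L^2}$, and then $\mathbb{E}\int_0^T \|\Theta_t\|_{L^2}^2 dt < \infty$ follows from \eqref{th-sol}, the uniform bound on $\|\boldsymbol{\Gamma}_t^{-1}\|_{\rm op}$, and \eqref{ass:P} (or the already-established bound referenced near \eqref{lem-bnd-theta}). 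For the operator term, Cauchy--Schwarz gives
\[
\bigl|\langle \boldsymbol{\Gamma}_t^{-1}K_t,\mathds 1_t(\tilde h_0,q)^\top\rangle_{L^2}\bigr| \;\leq\; \|\boldsymbol{\Gamma}_t^{-1}\|_{\rm op}\,\|K_t\|_{L^2}\,\bigl(\|\tilde h_0\|_{L^2}+q\sqrt{T}\bigr),
\]
which is deterministic and uniformly bounded.

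For part \textbf{(ii)}, the $\tilde G(t,s)$ contribution gives $\int_0^T \tilde G(t,s)^2 ds \leq C$ uniformly in $t$ by \eqref{eq:assumtionG}. For the operator term, write $v := \boldsymbol{\Gamma}_t^{-1} K_t = (v_1,v_2)\in L^2([0,T],\mathbb R^2)$ which satisfies $\|v\|_{L^2}\leq C$ uniformly. Then
\[
\langle \boldsymbol{\Gamma}_t^{-1}K_t,\mathds 1_t(\tilde G(\cdot,s),-1)^\top\rangle_{L^2} = \int_t^T v_1(u)\tilde G(u,s)du - \int_t^T v_2(u)du,
\]
and by Cauchy--Schwarz plus Fubini,
\[
\int_0^T\Big(\int_t^T v_1(u)\tilde G(u,s) du\Big)^2 ds \;\leq\; \|v_1\|_{L^2}^2 \sup_{u\leq T}\int_0^T |\tilde G(u,s)|^2 ds \cdot T \;\leq\; C,
\]
while the second piece is trivially bounded by $T\|v_2\|_{L^2}^2 T \leq C$. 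Summing, $\sup_t \int_0^T B(t,s)^2 ds < \infty$.

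For part \textbf{(iii)}, the Volterra resolvent $R^B$ of $B$ exists by Corollary 9.3.16 of Gripenberg et al.\ (as already invoked in the proof of Proposition \ref{prop-exp-u}), and the $L^2$-boundedness of $\boldsymbol{B}$ implied by (ii) makes $(\id-\boldsymbol B)^{-1}$ a bounded linear operator on $L^2([0,T],\mathbb R)$ with norm depending only on the deterministic quantity $\sup_t\int_0^T B(t,s)^2 ds$. Applying this to $u^*_t = ((\id-\boldsymbol B)^{-1}a)(t)$ pathwise, then taking expectation with Fubini, yields
\[
\mathbb{E}\int_0^T (u^*_s)^2 ds \;\leq\; \|(\id-\boldsymbol B)^{-1}\|_{\rm op}^2\,\mathbb{E}\int_0^T a_s^2 ds \;<\;\infty,
\]
by part (i). The main obstacle is the uniform operator-norm bound on $\boldsymbol{\Gamma}_t^{-1}$, which however reduces, via the block structure \eqref{op-gam-inv}, to the lower bound $\boldsymbol{D}_t \succeq 2\lambda\,\id$ already exploited in Lemma \ref{lemmma-inv-op}; once that is in hand, all three parts follow by essentially mechanical Cauchy--Schwarz/Fubini estimates.
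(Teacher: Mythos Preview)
Parts (i) and (ii) follow essentially the same route as the paper: bound each summand of $a_t$ and of $B(t,\cdot)$ via Cauchy--Schwarz, using the uniform bounds $\sup_t\|K_t\|_{L^2}<\infty$, $\sup_t\|\boldsymbol{\Gamma}_t^{-1}\|_{\rm op}<\infty$ (Lemma~\ref{lem-bnd-gam}), and the $\Theta$-bound (Lemma~\ref{lem-bnd-theta}). Your derivation of $\|\boldsymbol{D}_t^{-1}\|_{\rm op}\leq 1/(2\lambda)$ directly from the operator inequality $\boldsymbol{D}_t\succeq 2\lambda\,\id$ is in fact a bit cleaner than the paper's spectral-decomposition argument in Lemma~\ref{lem-d-bnd}, which only yields $\|\boldsymbol{D}_t^{-1}\|_{\rm op}\leq 1/(2\varepsilon)$ for $\varepsilon<\lambda$.

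For part (iii) the approaches genuinely differ. The paper sets $f(t)=\E\big[\int_0^t (u^*_s)^2 ds\big]$, uses the Volterra equation \eqref{volt-u} together with (i) and (ii) to obtain the integral inequality $f(t)\leq C_1+C_2\int_0^t f(s)\,ds$, and concludes by Gr\"onwall. You instead argue that $(\id-\boldsymbol{B})^{-1}$ is a bounded operator on $L^2$ and apply it pathwise to $a$. This is a valid strategy, but your justification has a gap: the sentence ``the $L^2$-boundedness of $\boldsymbol{B}$ implied by (ii) makes $(\id-\boldsymbol B)^{-1}$ a bounded linear operator'' is false as a general implication. The missing ingredient is that $B$ is a \emph{Volterra} kernel, so that $\boldsymbol{B}$ is a quasi-nilpotent Hilbert--Schmidt operator (part (ii) gives $B\in L^2([0,T]^2)$); this is what forces the spectral radius to be zero and hence $(\id-\boldsymbol{B})^{-1}$ to be bounded. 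The Gripenberg reference you cite only yields $R^B\in L^1([0,T]^2)$, which does not by itself give $L^2\to L^2$ boundedness of $\id+\boldsymbol{R}^B$. Once quasi-nilpotency is invoked your argument is correct and arguably more conceptual; the paper's Gr\"onwall route is more elementary and avoids this functional-analytic point entirely.
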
 }
The rest of this section is dedicated to the proof of Lemma \ref{bnd-coef}. In order to prove this lemma we will need some auxiliary results. 
Recall that the operator norm was defined in \eqref{op-norm}. 
\begin{lemma} \label{lem-d-bnd} Assume that  $\phi, \varrho \geq 0$ and $\lambda>0$. Then 
$$\sup_{t\leq T} \|\boldsymbol{D}^{-1}_t \|_{\rm{op}}   < \infty. $$
\end{lemma}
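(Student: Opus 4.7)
The plan is to exploit the natural positivity structure of $\boldsymbol{D}_t$ and derive a uniform spectral lower bound. By Lemma~\ref{lemmma-inv-op}, $\boldsymbol{D}_t$ is self-adjoint, so it suffices to establish that $\langle \boldsymbol{D}_t f, f\rangle_{L^2} \geq 2\lambda \|f\|_{L^2}^2$ for every $f \in L^2([0,T], \mathbb R)$; standard spectral theory for bounded self-adjoint operators then yields $\|\boldsymbol{D}_t^{-1}\|_{\rm{op}} \leq 1/(2\lambda)$, a bound independent of $t$.

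To verify the coercivity estimate, I would decompose the quadratic form into its three natural pieces and show each of the last two is nonnegative. The term $\langle 2\lambda\, \id f, f\rangle_{L^2}$ contributes $2\lambda\|f\|_{L^2}^2$ directly. For the penalization term, $\langle 2\phi\, \boldsymbol{1}^*_t \boldsymbol{1}_t f, f\rangle_{L^2} = 2\phi \|\boldsymbol{1}_t f\|_{L^2}^2 \geq 0$ since $\phi \geq 0$. For the symmetric transient-impact piece, writing $g := f \mathds 1_{[t,T]}$, a direct computation using the Volterra property of $\tilde G$ (which forces $\tilde G(s,u) = 0$ whenever $u \geq s$) shows
\begin{equation*}
\langle (\boldsymbol{\tilde G}_t + \boldsymbol{\tilde G}^*_t) f, f\rangle_{L^2} = \int_0^T\!\!\int_0^T \bigl(\tilde G(s,u) + \tilde G(u,s)\bigr) g(u) g(s)\, du\, ds.
\end{equation*}
Expanding $\tilde G$ via \eqref{h-tilde} splits this into the contribution $2\varrho\bigl(\int_t^T f\bigr)^2 \geq 0$ plus the symmetrized integral of $G$ against $g\otimes g$, which is nonnegative by the definition \eqref{pos-def} of the nonnegative definiteness of $G\in\mathcal{G}$. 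Combining these three contributions yields the coercivity inequality.

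Once coercivity is in hand, the self-adjointness of $\boldsymbol{D}_t$ places its spectrum inside $[2\lambda, \infty)$, so the spectrum of $\boldsymbol{D}_t^{-1}$ lies in $(0, 1/(2\lambda)]$, giving $\|\boldsymbol{D}_t^{-1}\|_{\rm{op}} \leq 1/(2\lambda)$ for every $t \in [0,T]$. Taking the supremum completes the proof.

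No step strikes me as a serious obstacle: the only place requiring some care is the bookkeeping in the symmetrized kernel computation, where one has to confirm that the indicator $\mathds 1_{\{u\geq t\}}$ appearing in the definition of $\boldsymbol{\tilde G}_t$ (together with the Volterra property $\tilde G(s,u)=0$ for $u\geq s$) effectively restricts both variables of integration to $[t,T]$, so that the nonnegative definiteness hypothesis on $G$ can be applied to the truncation $g=f\mathds 1_{[t,T]}$.
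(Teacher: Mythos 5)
Your proposal is correct, and its core ingredients coincide with the paper's: both arguments rest on the facts (established in the proof of Lemma~\ref{lemmma-inv-op}) that $\boldsymbol{\tilde G}_t + \boldsymbol{\tilde G}^*_t$ and $\boldsymbol{1}^*_t\boldsymbol{1}_t$ are nonnegative definite, and your bookkeeping with $g=f\mathds 1_{[t,T]}$ is exactly the computation in \eqref{pd-1}--\eqref{pd-2}. Where you diverge is in the concluding step. The paper peels off $2\eps\id$ with $\eps\in(0,\lambda)$, invokes compactness and the spectral theorem to expand $\boldsymbol S_t=\boldsymbol D_t-2\eps\id$ in eigenfunctions with nonnegative eigenvalues $\mu_{t,k}$, and reads off $\|\boldsymbol D_t^{-1}\|_{\rm op}\le 1/(2\eps)$ from the eigenvalues $1/(2\eps+\mu_{t,k})$ of the inverse. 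You instead prove the coercivity estimate $\langle \boldsymbol D_t f,f\rangle_{L^2}\ge 2\lambda\|f\|_{L^2}^2$ and conclude directly that the spectrum of the bounded self-adjoint operator $\boldsymbol D_t$ lies in $[2\lambda,\infty)$, so $\|\boldsymbol D_t^{-1}\|_{\rm op}\le 1/(2\lambda)$ uniformly in $t$. Your route is more elementary (no compactness or eigenfunction expansion is needed, only the spectral mapping for bounded self-adjoint operators, or equivalently a Lax--Milgram-type argument) and it yields the sharper, $\eps$-free constant $1/(2\lambda)$; the paper's route makes the spectral structure of $\boldsymbol D_t$ explicit but buys nothing extra for this particular lemma. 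Both are complete proofs.
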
 
\begin{proof} 
Choose $\eps \in (0,\lambda)$. In the proof of Lemma \ref{lemmma-inv-op} we have shown that $(\boldsymbol{\tilde G}_t + \boldsymbol{\tilde G}^*_t)$ and  $  \boldsymbol{1}^*_t \boldsymbol{1}_t$ are non-negative definite for any $0\leq t \leq T$. Together with \eqref{eq:assumtionG} and \eqref{h-tilde} it follows that for any $0\leq t\leq T$, the operator 
\be \label{s-op} 
\boldsymbol S_t := 2(\lambda -\eps) \id  + (\boldsymbol{\tilde G}_t + \boldsymbol{\tilde G}^*_t) + 2\phi  \boldsymbol{1}^*_t \boldsymbol{1}_t 
\ee
is positive definite, invertible, self-adjoint and compact with respect to the space of bounded operators {on $L^{2}([0,T])$} equipped with the operator norm given in \eqref{op-norm}. From Theorem 4.15 in \cite{Porter1990} it follows that  $\boldsymbol S_t$ admits a spectral decomposition in terms of a sequence of positive eigenvalues $(\mu_{t,n})_{n=1}^{\infty}$ and an orthonormal sequence of eigenvectors $(\varphi_{t,n})_{n=1}^{\infty}$ in $L^{2}([0,T])$ such that it holds that
$$
\boldsymbol S_t= \sum_{k}\mu_{t,k} \langle \varphi_{t,k}, \cdot \rangle_{L^2} \varphi_{t,k}. 
$$
By application of Cauchy Schwarz and the fact that $\boldsymbol S_t$ is self-adjoint we get  
\begin{align*}
\sup_{t\leq T}\sum_{k}\mu_{t,k}^2 & \leq  C\left((\lambda -\eps)^2 + \sup_{t\leq T} \int_0^T \left(({\tilde G}_t + {\tilde G}^*_t) + 2\phi  \mathds{1}^*_t \mathds{1}_t \right)^2 {(s, s)  ds} \right)\\
&<\infty, 
\end{align*}
where the second inequality follows from \eqref{eq:assumtionG} and \eqref{h-tilde}. 
From \eqref{eq:schur} and \eqref{s-op} it follows that we can rewrite  $\boldsymbol D_t= \boldsymbol S_t +\eps \id$ as follows, 
$$ \boldsymbol D_t = \sum_{k} \left(2\eps + \mu_{t,k}\right) \langle  \varphi_{t,k}, \cdot \rangle_{L^2}  \varphi_{t,k}. $$
We can therefore represent $\boldsymbol{D}_t^{-1}$ as follows, 
$$ \boldsymbol{D}_t^{-1} = \sum_{k}  \frac{1} {\left(2\eps + \mu_{t,k}\right)}\langle \varphi_{t,k}, \cdot \rangle_{L^2} \varphi_{t,k}.$$
Since $\eps>0$ and $\mu_{t,k} \geq 0$, for all $t\in [0,T]$ and $k=1,2,...$, we get that for any $f\in L^2([0,T], \mathbb{R})$, 
$$
\| \boldsymbol{D}_t^{-1}f \|_{L^2} \leq \frac{1}{2 \eps} \|f\|_{L^2}, \quad \textrm{for all } 0\leq t \leq T. 
$$
Together with \eqref{op-norm} this completes the proof. 
\end{proof} 

\begin{lemma} \label{lem-bnd-gam} Let $\boldsymbol{\Gamma}^{-1}_t$ as in \eqref{op-gam-inv}. Then we have
$$
\sup_{t\leq T} \|\boldsymbol{\Gamma}^{-1}_t \|_{\rm{op}}   < \infty.
 $$
\end{lemma}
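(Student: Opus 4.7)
The plan is to exploit the explicit $2\times 2$ block structure of $\boldsymbol{\Gamma}_t^{-1}$ given in \eqref{op-gam-inv} and reduce the estimate to Lemma~\ref{lem-d-bnd} together with a uniform bound on $\|\boldsymbol{1}_t\|_{\rm op}$. Concretely, for $f=(f_1,f_2)^\top\in L^2([0,T],\mathbb R^2)$, one reads off from \eqref{op-gam-inv}:
\begin{align*}
(\boldsymbol{\Gamma}_t^{-1} f)_1 &= \boldsymbol{D}_t^{-1} f_1 - 2\phi\, \boldsymbol{D}_t^{-1}\boldsymbol{1}_t^* f_2, \\
(\boldsymbol{\Gamma}_t^{-1} f)_2 &= -2\phi\, \boldsymbol{1}_t \boldsymbol{D}_t^{-1} f_1 + \bigl(-2\phi\,\id + 4\phi^2\,\boldsymbol{1}_t \boldsymbol{D}_t^{-1}\boldsymbol{1}_t^*\bigr) f_2,
\end{align*}
so by the triangle inequality and submultiplicativity of the operator norm it suffices to bound $\|\boldsymbol{D}_t^{-1}\|_{\rm op}$, $\|\boldsymbol{1}_t\|_{\rm op}$ and $\|\boldsymbol{1}_t^*\|_{\rm op}$ uniformly in $t$.

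The first is immediate from Lemma~\ref{lem-d-bnd}. For the operator $\boldsymbol{1}_t$ induced by the kernel $\mathds 1_t(u,s)=\mathds 1_{\{u\ge s\}}\mathds 1_{\{s\ge t\}}$, I would simply compute, for $g\in L^2([0,T],\mathbb R)$,
\[
(\boldsymbol{1}_t g)(u) = \mathds 1_{\{u\ge t\}}\int_t^u g(s)\,ds,
\]
and then invoke Cauchy--Schwarz to get
\[
\|\boldsymbol{1}_t g\|_{L^2}^2 \le \int_t^T (u-t)\int_t^u g(s)^2\,ds\,du \le T^2\|g\|_{L^2}^2,
\]
so $\sup_{t\le T}\|\boldsymbol{1}_t\|_{\rm op}\le T$. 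The same bound for $\boldsymbol{1}_t^*$ follows either by an identical calculation after reading off its kernel $\mathds 1_{\{s\ge u\}}\mathds 1_{\{u\ge t\}}$, or by the general fact that $\|\boldsymbol{1}_t^*\|_{\rm op}=\|\boldsymbol{1}_t\|_{\rm op}$.

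Combining these three uniform bounds, each of the four block entries is bounded by a constant depending only on $T$, $\phi$ and the constant from Lemma~\ref{lem-d-bnd}. Since the $L^2([0,T],\mathbb R^2)$ norm decomposes as $\|f\|_{L^2}^2 = \|f_1\|_{L^2}^2 + \|f_2\|_{L^2}^2$, summing the block estimates yields a bound on $\|\boldsymbol{\Gamma}_t^{-1} f\|_{L^2}$ of the form $C\|f\|_{L^2}$ with $C$ independent of $t$, which gives the claim. There is no genuine obstacle here: the result is a direct consequence of Lemma~\ref{lem-d-bnd} and the trivial $L^2$-boundedness of the Volterra integration operator $\boldsymbol{1}_t$; the only point to be careful about is tracking the block structure so that each term is matched with the correct norm estimate.
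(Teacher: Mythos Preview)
Your proof is correct and follows exactly the same approach as the paper, which simply notes that the result is immediate from \eqref{op-gam-inv} and Lemma~\ref{lem-d-bnd} since each block entry of $\boldsymbol{\Gamma}_t^{-1}$ is a product of $\boldsymbol{D}_t^{-1}$ and indicator operators. You have merely filled in the details (the explicit bound $\|\boldsymbol{1}_t\|_{\rm op}\le T$ via Cauchy--Schwarz and the block-wise assembly), which is fine.
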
 
\begin{proof} 
The proof follows directly from \eqref{op-gam-inv} and Lemma \ref{lem-d-bnd}, as each entry of $\Gamma^{-1}$ involves products of indicators and of $D_t^{-1}$.  
\end{proof} 
\begin{lemma} \label{lem-bnd-theta} Let $\Theta= \{\Theta_{t}(s) : t\in [0,s],\, s\in [0,T]\}$ as in \eqref{th-sol}. Then we have
 $$\E\left[  \sup_{t\leq T} \int_t^T \mathbb  |\Theta_t(s)|^2 ds\right]   <\infty. $$
 \end{lemma}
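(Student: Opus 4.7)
The plan is to leverage the explicit representation \eqref{th-sol} for $\Theta_t$ together with the uniform operator bound provided by Lemma \ref{lem-bnd-gam}. Since
\begin{equation*}
\Theta_t = -\boldsymbol{\Gamma}_t^{-1}\bigl(\mathds{1}_t \E[P_\cdot - P_T \mid \mathcal F_t] e_1\bigr),
\end{equation*}
and $\boldsymbol{\Gamma}_t^{-1}$ is a bounded linear operator on $L^2([0,T],\R^2)$ with norm bounded uniformly in $t$, the strategy is simply to transfer the $L^2$-bound from the argument to the image, then take expectation and use \eqref{ass:P}.

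Concretely, I would set $C := \sup_{t\leq T}\|\boldsymbol{\Gamma}_t^{-1}\|_{\rm op}$, which is finite by Lemma \ref{lem-bnd-gam}. Applying the operator norm inequality pointwise in $\omega$ and using the fact that $\Theta_t(s) = 0$ for $s < t$ (since the argument is supported on $[t,T]$) yields
\begin{equation*}
\int_t^T |\Theta_t(s)|^2 ds \; = \; \|\boldsymbol{\Gamma}_t^{-1} \mathds{1}_t \E[P_\cdot - P_T \mid \mathcal F_t] e_1\|_{L^2}^2 \; \leq \; C^2 \int_t^T \bigl|\E[P_s - P_T \mid \mathcal F_t]\bigr|^2 ds.
\end{equation*}
Next, I would take expectation on both sides, apply the conditional Jensen inequality and the elementary bound $(a-b)^2 \leq 2a^2 + 2b^2$, and invoke Fubini to obtain
\begin{equation*}
\mathbb E\left[\int_t^T |\Theta_t(s)|^2 ds\right] \; \leq \; 2C^2 \int_0^T \mathbb E[P_s^2] ds + 2C^2 T\, \mathbb E[P_T^2].
\end{equation*}
Both terms on the right-hand side are finite and independent of $t$: the first by assumption \eqref{ass:P}, the second by the implicit square-integrability of $P_T$ required for the well-posedness of the martingale $M_t = \E[P_T\mid \mathcal F_t]$ and the process $\chi$ in \eqref{eq:chi}. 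Taking the supremum over $t \leq T$ then concludes.

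There is no real obstacle here: the entire argument is a one-line consequence of the uniform operator bound once the explicit formula \eqref{th-sol} is unpacked. The only point to be mildly careful about is the support of $\mathds 1_t \E[P_\cdot - P_T \mid \mathcal F_t] e_1$, which ensures that the $L^2$-norm of the argument is indeed $\int_t^T |\E[P_s - P_T\mid\mathcal F_t]|^2 ds$ rather than an integral over $[0,T]$, so that Jensen's inequality applied afterwards does not introduce any boundary issues.
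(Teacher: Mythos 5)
Your proof is correct and follows essentially the same route as the paper: unpack the explicit formula \eqref{th-sol}, apply the uniform operator bound from Lemma \ref{lem-bnd-gam}, then use conditional Jensen and the integrability of $P$ (together with the square-integrability of $P_T$, which the paper also uses implicitly). If anything, your version is slightly cleaner in that you bound the full $L^2$-integral $\int_t^T|\Theta_t(s)|^2\,ds$ directly via the operator norm, which is the quantity the operator norm actually controls.
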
 
\begin{proof} 
From \eqref{ind-def}, \eqref{th-sol}, Lemma \ref{lem-bnd-gam}, Fubini's Theorem and successive applications of Cauchy-Schwarz inequality we get 
$$
\begin{aligned} 
\sup_{t \leq T} |\Theta_t(s)|^2 &\leq  \sup_{t \leq T} \left( \| \boldsymbol {\Gamma}_t^{-1}\|_{\rm{op}}^2 \int_t^T \left(\E[P_{r} - P_T | \mathcal F_t ]\right)^2 dr \right) \\ 
& \leq C\int_0^T  \sup_{t \leq T}  \E[(P_{r} - P_T)^2| \mathcal F_t ]dr,  \quad \textrm{for all }  s \leq T,
\end{aligned} 
$$
where the constant $C>0$ is not depending on $s$.  Using Fubini's theorem it follows that, 
\be \label{th-sup-bnd} 
\begin{aligned} 
\E\left[\sup_{t \leq T} |\Theta_t(s)|^2 \right]  
& \leq C \int_0^T \E \left[ \sup_{t \leq T}  (\E[P_{r} - P_T| \mathcal F_t ])^2 \right]dr,  \quad \textrm{for all }   s \leq T,
\end{aligned} 
\ee
Together with \eqref{ass:P}, we conclude that 
$$
\E\left[  \sup_{t\leq T} \int_t^T \mathbb  |\Theta_t(s)|^2 ds\right]   <\infty,
$$
and we get the result. 
\end{proof} 

\begin{proof} [Proof of Lemma \ref{bnd-coef}] 
 (i) Recall that 
\be \label{bl0} 
a_t = \frac 1{2\lambda} \left(\E[ (P_t - P_T) \mid \mathcal F_t]  - \tilde h_0(t) + \langle \Theta_t , K_t \rangle_{L^2} +  \langle \boldsymbol {\Gamma}_t^{-1} K_t  , \mathds 1_t  ( \tilde h_0, q )^\top \rangle_{L^2} \right) 
\ee
{
From \eqref{ass:P} we get that
\be \label{bl1} 
   \E \left[\sup_{t\in [0,T]}  \left(\E[ (P_t - P_T) \mid \mathcal F_t] \right)^2 \right ] dt    <\infty. 
\ee
From \eqref{K-t}, \eqref{eq:assumtionG}, \eqref{g0-k-def}, Lemma \ref{lem-bnd-theta} and Cauchy-Schwarz inequality we have 
\be 
\begin{aligned} 
 \E \left[ \sup_{t\in [0,T]}\left(\langle \Theta_t , K_t \rangle_{L^2} \right)^2 \right ]   &=    \E \left[\sup_{t\in [0,T]} \left( \int_t^T \Theta_t(s) K(s,t) ds \right)^2 \right ]   \\ 
&\leq    \E \left[ \sup_{t\in [0,T]} \left( \int_t^T \Theta^2_t(s) ds  \int_t^TK^2(s,t) ds \right) \right ]  \\ 
& \leq    \E \left[\sup_{t\leq T} \left( \int_t^T \Theta^2_t(s) ds \right)\right]   \sup_{t\leq T}   \int_t^TK^2(s,t) ds \\
&< \infty.
\end{aligned} 
\ee
}
Note that from \eqref{K-t}, \eqref{eq:assumtionG}, \eqref{g0-k-def} we have $\sup_{t\leq T} \|K_t\|_{L^2} <\infty$. Together with Lemma \ref{lem-bnd-gam} we get
\bd
\begin{aligned} 
\sup_{t\leq T}  \int_{0}^T  \left( \int_{0}^T \boldsymbol {\Gamma}_t^{-1}(s,r)K_t(r)dr \right)^2ds &\leq
\sup_{t\leq T} \|\boldsymbol{\Gamma}^{-1}_t \|_{\rm{op}}^2 \sup_{t\leq T} \|K_t\|^2_{L^2}   \\ 
&< \infty. 
\end{aligned} 
\ed
Since by \eqref{z-def} $h_{0}$ and hence $\tilde h_{0}$ (by \eqref{h-tilde}) are square integrable deterministic functions, it follows yet again by Cauchy-Schwarz inequality that 
\be  \label{bl2} 
\sup_{t\leq T}  \langle \boldsymbol {\Gamma}_t^{-1} K_t  , \mathds 1_t  ( \tilde h_0, q )^\top \rangle_{L^2}  < \infty. 
\ee
Applying \eqref{bl1}--\eqref{bl2} into \eqref{bl0} gives (i). 

(ii) Recall that 
 \be \label{bl11} 
B(t,s) =  \mathds 1_{\{s<t\}}\frac 1{2\lambda} \left( \langle \boldsymbol {\Gamma}_t^{-1} K_t  , \mathds 1_t  ( \tilde G(\cdot,s), -1 )^\top \rangle_{L^2}   - \tilde G(t,s) \right). 
\ee 
Similarly to the derivation of \eqref{bl2} we have 
\be  \label{bl10} 
\sup_{t\leq T}  \langle \boldsymbol {\Gamma}_t^{-1} K_t  , \mathds 1_t  ( \tilde G(\cdot,s), -1 )^\top \rangle_{L^2}  < \infty, 
\ee
where we use \eqref{eq:assumtionG} and \eqref{g0-k-def} to bound $ \tilde G$. Then from \eqref{bl11} and \eqref{bl10} we get (ii). 

{(iii) For any $0<t \leq T$ we define 
$$
f(t) = \E\left[ \sup_{s\leq t} (u^*_s)^2 \right]. 
$$
From \eqref{volt-u} and Cauchy Schwarz inequality it follows that there exists positive constants $C_i(T)$, $i=1,2$ such that, 
\begin{align*}
f(t) 
 &\leq 2  \E\left[\int_0^ta^2_sds\right] +  2 \E\left[  \sup_{s\in [0,t]}\left(\int_0^s B(s,r)u_r^*dr \right)^2 ds \right] \\
 &\leq 2  \E\left[\int_0^Ta^2_sds\right] +  2 \E\left[ \sup_{s\in [0,t]} \left(\int_0^s B^2(s,r')dr' \int_0^s (u_{r}^*)^2dr \right) \right] \\
  &\leq 2  T\E\left[\sup_{s\in [0,T]} a^2_s\right] +  2 \left(  \sup_{s \leq T}\int_0^T B^2(s,r')dr' \right) \E\left[  \int_0^t \sup_{y \in [0,r] }  (u_{y}^*)^2dr \right] \\
 &\leq C_1(T) +   C_2(T)\int_0^t f(r) dr,  \quad \textrm{for all } 0\leq t \leq T, 
 \end{align*}
 where {we used} parts (i) and (ii) in the last inequality. Part (iii) then follows by  an application of Gr\"onwall inequality.
 }

\end{proof} 
\section{Proof of Lemma \ref{lem-id}} \label{sec-pf-id} 
 \begin{proof}[Proof of Lemma \ref{lem-id}]
 (i)  
Recall that $K_t$ was defined as a function $K_t:s\mapsto K(s,t)$. From \eqref{eq:diffkernelsigma} we note that for any $g\in L^2([0,T],\mathbb{R}^2)$ we have 
\be  \label{gg4} 
\begin{aligned}
 &-4\lambda  \langle \boldsymbol{\Psi}_t  \dot{\boldsymbol{\Sigma}}_t   f, g\rangle_{L^2}  \\
&=\int_0^T \int_0^T  \int_0^T  \Psi_t(s,r)  K(r,t)K(u,t)^\top f(u) g(s)du dr ds \\
&= \left(\int_0^T \int_0^T\Psi_t(s,r)  K(r,t)g(s)dr ds \right) \left(\int_{0}^{T} K(u,t)^\top f(u)du \right) \\
& =\langle \boldsymbol{\Psi}_t K_t, g \rangle_{L^2} \langle K_t,  f\rangle_{L^2}, 
\end{aligned}
\ee
and (i) follows by taking $f= \Theta_t$ and $g=g^u_t$.

(ii) Recall that $\boldsymbol{\Psi}_t$ is self-adjoint, then we have $  \Psi_t^\top(r,s) =   \Psi_t(s,r) $. It follows that 
\bn 
( \boldsymbol{\Psi}_tK_t(s) )^\top&=& \left(\int_0^T  \Psi_t(s,r)K_t(r)dr \right)^{\top}   \\ 
 &=&  \int_0^T K_t(r)^\top  \Psi_t(r,s) dr. 
\en

Using the fact that $\boldsymbol{\Psi}_t$ is self-adjoint and that $\mathds{1}_t(s)= \mathds{1}_{\{t\leq s\}}$, we get 
\be \label{fg1} 
\begin{aligned} 
  \langle \boldsymbol{\Psi}_tK_t, g^u_t   \rangle_{L^2} &= \int_0^T (\boldsymbol{\Psi}_tK_t)^{\top}(s)g^u_t(s) ds \\
 &= \int_0^T \int_0^T K_t(r)^\top  \Psi_t(r,s) g^u_t(r) dr ds \\
  &= \int_0^T \int_0^T K_t(r)^\top  \Psi_t(r,s) g^u_t(r)\mathds{1}_t(r) ds dr. 
\end{aligned} 
\ee
 On the other hand,
\be \label{fg2} 
\begin{aligned} 
  \left(\boldsymbol{\hat {K}}^*\boldsymbol{\Psi}_t \right)(g^u_t \mathds{1}_t)(t) =&\int_0^T (\boldsymbol{\hat {K}}^*\boldsymbol\Psi_t)(t,s) g^u_t(s)\mathds{1}_t(s) ds \\  
  =&\int_0^T \int_0^T {\hat {K}}^*(t,r) \Psi_t(r,s) g^u_t(s)\mathds{1}_t(s) ds dr. 
\end{aligned} 
\ee

Using \eqref{K-def}, we get for any $f\in L^2([0,T],\mathbb{R}^2)$, 
\be \label{fg3} 
\begin{aligned} 
\langle    K_t,f \rangle_{L^2}e_1 
&=   e_1 \int_0^T  K(r,t)^\top  f(r)dr  \\
&= -  \int_0^T   \hat K(r,t)^\top  f(r)dr  \\
&= - \left( \boldsymbol{\hat{K}}^*f\right)(t).
\end{aligned} 
\ee
Using \eqref{fg3} on \eqref{fg1} and \eqref{fg2} we get (ii).  

(iii) From \eqref{eq:diffkernelsigma} we get that, 
\begin{align*} 
  & -4\lambda \langle  \boldsymbol{\Psi}_t \dot{\boldsymbol{\Sigma}}_t \boldsymbol{\Psi}_t g_t^u, g_t^u \rangle_{L^2}   \\
   &=   \int_0^T    \int_0^T   \int_0^T \int_0^T (g_t^u(s))^\top  \Psi_t(s,v) K(v,t)   K(w,t)^\top \Psi_t(w,r)g_t^u(r)   dw dv dr ds  \\
   &= \int_0^T        (g^u_t(s))^\top \boldsymbol(\Psi_t K_t)(s)  ds  \int_0^T  \boldsymbol(\Psi_t K_t)^\top(r)  g_t^u(r)  dr \\ 
    &=   \langle \boldsymbol{\Psi}_t K_t  , g_t^u \rangle_{L^2}^2,
        \end{align*} 
where we used again the notation $K_t(s)= K(s,t)$. This completes the proof. 
\end{proof} 

\section{Proof of Proposition \ref{prop-theta} }\label{sec-bsde} 
\begin{proof} [Proof of Proposition \ref{prop-theta}]
 Let $0\leq s\leq T$.  An application of Lemma \ref{lemma-gam-phi}, yields that $\Theta$ given by  \eqref{th-sol} can be written as
 \be \label{th-sol2} 
\Theta_t(s)=- \left(\boldsymbol {\Psi}_t  \mathds{1}_t \E_t[P_{\cdot}- P_T] e_1 \right)(s) \quad \textrm{for all } 0\leq t\leq s.
\ee
 We first prove that $\Theta_\cdot(s)$ in \eqref{th-sol} satisfies \eqref{back-theta}. 
Recall the notation presented in Lemma~\ref{L:Psi}(i) and \eqref{psi-bar}. Together with \eqref{th-sol}, \eqref{ind-def} and \eqref{psi-gamma} we get that, 
\be \label{tte1} 
\begin{aligned}
\Theta_t(s) &=- \left(\boldsymbol {\Psi}_t \mathds{1}_t \E_t[P_{\cdot}- P_T] e_1 \right)(s) \\
&=  - \E_t[P_s- P_T]  A  e_1  - \int_t^T \bar {\psi}_t(s,r) \E_t[P_r- P_T]  e_1 dr, \quad \textrm{for all } 0\leq t \leq s.
\end{aligned}
\ee
From \eqref{psi-bar} it follows that $\dot{\boldsymbol{\Psi}}_t = \dot{\bar{\boldsymbol{\Psi}}}_t $.  We differentiate the above  expression for $\Theta_\cdot (s)$ with respect to $t$ and use \eqref{psi-bar} to get 
 \be \label{fgf} 
\begin{aligned}
&d\Theta_t(s)  \\
 &= - d\E_t[P_s- P_T]  A  e_1   - \left(\boldsymbol {\dot {\bar \Psi}}_t \mathds{1}_t \E_t[P_{\cdot}- P_T] e_1 \right)(s)   - \left(\bar{\boldsymbol  {\Psi}}_t \mathds{1}_t d\E_t[P_{\cdot}- P_T] e_1 \right)(s) \\
&\quad +  \bar{\psi}_t(s,t) \E_t[P_t-P_T]e_1\\
&=- d\E_t[P_s- P_T]  A  e_1   - \left(\boldsymbol {\dot {\Psi}}_t \mathds{1}_t \E_t[P_{\cdot}- P_T] e_1 \right)(s)  +  \bar{\psi}_t(s,t) \E_t[P_t-P_T]e_1 \\
&\quad - \left(\bar{\boldsymbol  {\Psi}}_t \mathds{1}_t d\E_t[P_{\cdot}- P_T] e_1 \right)(s) \\
&= 2 \left(\boldsymbol {  {\Psi}}_t  \dot{\boldsymbol{\Sigma}}_t \Theta_t  \right)(s)   +  \bar{\psi}_t(s,t) \E_t[P_t-P_T] e_1 +dN_t(s)  , 
\end{aligned}
\ee 
where we used \eqref{th-sol}, Lemma \ref{L:Psi}(iii) and 
\begin{align}\label{eq:dN}
dN_t(s) 
:=- \left({\boldsymbol  {\Psi}}_t \mathds{1}_t d\E_t[P_{\cdot}- P_T] e_1 \right)(s), 
\end{align}
\blue{From \eqref{th-sup-bnd} and  \eqref{ass:P} we have 
\be
 \begin{aligned} 
\E\left[\sup_{t \leq T} |\Theta_t(s)|^2 \right]  <\infty, \quad \textrm{for all }    s \leq T.
\end{aligned} 
\ee
Together with \eqref{tte1} and \eqref{ass:P} we get, 
\be \label{tte2}
\E\left[ \sup_{t \leq T} \left| \int_t^T \bar {\psi}_t(s,r) \E_t[P_r- P_T]  e_1 dr \right|^2 \right] <\infty.
\ee
 From Lemma \ref{L:Psi}(i) it follows that for all $t\in [0,T]$, $\boldsymbol{\Psi}_t = \bar{\boldsymbol{\Psi}}_t + A\cdot \id $ where $\bar {\psi}_t $ is the kernel of $\bar{\boldsymbol{\Psi}}_t$. Hence from \eqref{eq:dN} and \eqref{tte2} and application of \eqref{ass:P} to the operator $A\cdot \id$ it follows that,  
\begin{align}\label{eq:dN2}
\E \left[ \sup_{t\leq T}|N_t(s)|^2 \right]  <\infty, \quad \textrm{for all }   s \leq T.
\end{align}
Hence for any $s \in [0,T]$, $(N_t(s))_{t\in [0,s]}$ is a true martingale.  
}

Note that \eqref{back-theta} follows directly from \eqref{fgf}.
Next we prove that \eqref{th-sol2} satisfies the boundary condition \eqref{theta-bnd}. 
From  \eqref{th-sol} and Lemma \ref{L:Psi}(ii) we get that  
\be \label{jja} 
\begin{aligned} 
\Theta_s(s)&=- \left(\boldsymbol {\Psi}_s \mathds{1}_s \E_s[P_{\cdot}- P_T] e_1 \right)(s) \\
&= -\left( A\cdot\id + \frac{1}{2\lambda}\boldsymbol{\hat  {K}}^*\boldsymbol{\Psi}_s \right)( \E_s[P_{\cdot}- P_T] e_1  \mathds{1}_s)(s) \\
&=- Ae_1 \E_s[P_{s}- P_T] -\frac{1}{2\lambda} \left(  \boldsymbol{\hat  {K}}^*\boldsymbol{\Psi}_s \right)( \E_s[P_{\cdot}- P_T] e_1  \mathds{1}_s)(s) \\
&=- \frac{1}{2\lambda} e_1(P_s-M_s) -\frac{1}{2\lambda} \left(  \boldsymbol{\hat  {K}}^*\boldsymbol{\Psi}_s \right)( \E_s[P_{\cdot}- P_T] e_1  \mathds{1}_s)(s), 
\end{aligned} 
\ee
where we have used \eqref{a-mat} in the last equality. Next, we use \eqref{in-prod}, \eqref{th-sol2} and \eqref{eq:explicithatK} to get  
\bn 
(\boldsymbol{\hat  {K}}^*\boldsymbol{\Psi}_s )( \E_t[P_{\cdot}- P_T] e_1  \mathds{1}_s)(s) &=& {-} (\boldsymbol{\hat  {K}}^* \Theta_s(\cdot))(s) \\
&=& {-} \langle K_{s}, \Theta_s \rangle_{L^2} e_1, 
\en 
which together with \eqref{jja} {verifies \eqref{theta-bnd}}. 
 Finally, from Lemma \ref{lem-bnd-theta} we have 
 \begin{align}\label{eq:Thetasquareint}
    \sup_{t\leq T} \int_t^T \mathbb E\left[ |\Theta_t(s)|^2 \right] ds  <\infty, 
 \end{align}
 which completes the proof. 
\end{proof}

\section{Proof of Lemma~\ref{L:Psi}(ii)} \label{sec-pf-lem-psi} 
 Before we prove Lemma~\ref{L:Psi}(ii), we recall the notion of resolvent. For a kernel $K \in L^2([0,T]^2,\mathbb R^{2\times 2})$, we define its resolvent $R_T \in L^2([0,T]^2,\mathbb R^{2\times 2})$ by the unique solution to 
\begin{align}\label{eq:resolventeqkernel}
R_T = K + K \star R_T, \quad  \quad  K \star R_T =  R_T \star K.
\end{align} 
In terms of integral operators, this translates into 
\begin{align}\label{eq:resolventeqop0}
\boldsymbol{R}_T =  \boldsymbol{K} + \boldsymbol{K}\boldsymbol{R}_T, \quad \boldsymbol{K}\boldsymbol{R}_T=\boldsymbol{R}_T\boldsymbol{K}.
\end{align}
In particular, if $K$ admits a resolvent,  $({\id}-\boldsymbol{K})$ is invertible and
\begin{align}\label{eq:integralresop}
({\id}-\boldsymbol{K})^{-1}=\id+\boldsymbol{R}_T.
\end{align}
 
\begin{proof}[Proof of Lemma~\ref{L:Psi}(ii)]   
From Lemma A.2 of \cite{aj-22} we get the existence of the resolvent $\hat R$ of  $\frac {1} {2\lambda} \hat K$  which is again a Volterra kernel. From  \eqref{eq:integralresop} it follows that $(\id -\frac 1 {2\lambda} \boldsymbol{\hat{K}} )$ is invertible with an inverse given by $(\id + \boldsymbol{\hat{R}})$.  By Lemma \ref{lemmma-inv-op}, $\left( \id -  2\boldsymbol{\tilde \Sigma}_{t} A  \right)$ is invertible with an inverse given by $(\id +\boldsymbol{R}^{A}_t)$ where $\boldsymbol{R}^{A}_t$ is the resolvent of $ 2\boldsymbol{\tilde \Sigma}_{t} A$. We get that $\boldsymbol{\Psi}_t$ defined in \eqref{def:riccati_operator} satisfies 
\be
\begin{aligned}\label{eq:boundary}
	\boldsymbol{\Psi}_t &= (\id + \boldsymbol{\hat{R}})^* A (\id+\boldsymbol{R}^{A}_t)(\id+\boldsymbol{\hat{R}})   \\
	&= A \id  + \boldsymbol{\hat{R}}^* A  + A  \boldsymbol{\hat{R}}+ \boldsymbol{\hat{R}}^* A \boldsymbol{R}^{A}_t     + A \boldsymbol{R}^{A}_t    \boldsymbol{\hat{R}}    \\
	& \quad + \boldsymbol{\hat{R}}^* A  \boldsymbol{R}^{A}_t   \boldsymbol{\hat{R}}+ \boldsymbol{\hat{R}}^* A  \boldsymbol{\hat{R}}+ A  \boldsymbol{R}^{A}_t.
\end{aligned}
 \ee 
We first argue that 
\be\label{eq:bordR}
\hat{R}(s,u)=0, \quad  \mbox{for all }  0\leq s<u \leq T,
\ee
and
\be\label{eq:bordR2}
R^{A}_t(t, \cdot) = 0,  \quad \textrm{for all } 0\leq t\leq T.
\ee
Indeed, since $\hat K$ is a Volterra kernel, its resolvent $\hat R$ is also a Volterra kernel and \eqref{eq:bordR2} follows. From \eqref{eq:sigmakernel} we get that $\Sigma_t(t,\cdot)=0$ together with \eqref{def:C_tilde} we get that $\tilde \Sigma_t(t,\cdot)=0$, so that $R^{A}_t(t,\cdot)=0$  by the resolvent equation \eqref{eq:resolventeqkernel}. 

Let  $f \in L^2\left([0,T],\R^2\right)$. Using \eqref{eq:bordR} and \eqref{eq:bordR2} we get that,
\begin{align*}\label{eq:zero_terms}
	\left( A\boldsymbol{R}_t^{A}  \right) (f)(t) & =A  \int_0^T R^{A}_t(t,s)  f(s)ds \; = \; 0, \\
	\left(A \boldsymbol{R}^{A}_t  \boldsymbol{\hat{R}}\right)(f)(t) & = \;  A \int_0^T \int_0^T R^{A}_t(t,u)\hat{R}(u,s) f(s) du ds \; = \; 0.
\end{align*}
From \eqref{eq:resolventeqkernel} we get 
\be \label{res-id} 
\begin{aligned} 
\boldsymbol{\hat{R}}^* &= \frac {1} {2\lambda}\boldsymbol{\hat{K}}^* + \frac 1 {2\lambda}\boldsymbol{\hat{K}}^*\boldsymbol{\hat{R}}^* \\
&= \frac 1 {2\lambda}\boldsymbol{\hat{K}}^*(\id + \boldsymbol{\hat{R}}^*).
\end{aligned} 
\ee
Combining \eqref{res-id} with \eqref{eq:boundary} yields
\be \label{ps-r} 
\begin{aligned}
	(\boldsymbol{\Psi}_t) (f)(t) =& A (\id  +   \boldsymbol{\hat{R}})(f)(t)  +  (\boldsymbol{\hat{R}}^* A + \boldsymbol{\hat{R}}^*A \boldsymbol{R}^{A}_t+ \boldsymbol{\hat{R}}^*{A} \boldsymbol{R}_t^{A} \boldsymbol{\hat{R}}+ \boldsymbol{\hat{R}}^*{A} \boldsymbol{\hat{R}} )(f )(t)  \\
	=& A (\id  +   \boldsymbol{\hat{R}})(f)(t)  + \boldsymbol{\hat{R}}^* A (\id +  \boldsymbol{R}^{A}_t+  \boldsymbol{R}_t^{A} \boldsymbol{\hat{R}}+ 
	 \boldsymbol{\hat{R}} )f(t)  \\
	 =& A (\id  +   \boldsymbol{\hat{R}})(f)(t)  + \boldsymbol{\hat{R}}^* A (\id +  \boldsymbol{R}^{A}_t)(\id +  \boldsymbol{\hat{R}})
	  f(t)  \\
	   =& A (\id  +   \boldsymbol{\hat{R}})(f)(t)  + \frac{1}{{2\lambda}} \boldsymbol{\hat{K}}^*(\id+  \boldsymbol{\hat{R}}^*)A (\id +  \boldsymbol{R}^{A}_t)(\id +  \boldsymbol{\hat{R}})
	  f(t)  \\
	=&  A\left(\id  -\frac{1}{2\lambda} \boldsymbol{\hat {K}}  \right)^{-1}(f)(t) +\frac{1}{2\lambda} \left( \boldsymbol{\hat {K}}^*\boldsymbol{\Psi}_t \right)(f)(t).
\end{aligned}
\ee
Recall that $\mathds{1}_{t}(s) = \mathds{1}_{\{s \geq  t\}}$. From \eqref{eq:bordR} we have 
$$
 \boldsymbol{\hat{R}}(f\mathds{1}_t)(t) = \int_{0}^TR(t,u) f(u)\mathds{1}_{\{t\leq u\}} du= 0.  
$$
Together with \eqref{ps-r} we get the result of Lemma~\ref{L:Psi}(ii). 
\end{proof}

 \section{Proofs of Lemmas \ref{lemma-pos-def-ker}, \ref{lemmma-inv-op} and \ref{lemma-gam-phi}}\label{sec-lem-inv} 
 \begin{proof} [Proof of Lemma \ref{lemma-pos-def-ker}]
Assume that $G$ satisfies \eqref{g-spec} and let $f\in L^2\left([0,T],\mathbb R\right)$. Then using Fubini's theorem we get 
\bn
\int_{0}^T \int_0^T G(|t-s|) f(s)f(t) ds dt &=&  \int_{\mathbb{R_+} }\left(\int_{0}^T \int_0^Te^{-x|t-s|}f(s)f(t)ds dt \right) \mu(dx) \\
&\geq&0, 
\en
where we used the fact that for each $x\geq0$, $\wt G_x(t) = e^{-x t}$ is a nonnegative definite kernel (see Example 2.7 in \cite{GSS}) and that $\mu$ is a nonnegative measure. 
\end{proof}

 \begin{proof}[Proof of Lemma \ref{lemmma-inv-op}]
 We first note that from \eqref{eq:schur} it follows that $\boldsymbol{D}_t$ is a self-adjoint operator. We will show that under the assumptions of the lemma $\boldsymbol D_t$ is positive definite, hence it is invertible. 
 
 Recall that $\tilde G$ was defined in \eqref{h-tilde} and that $\boldsymbol{\tilde G}_t$ is the operator induced by the kernel $\tilde G(s,u)\mathds 1_{\{u \geq t\}}$. Clearly operator $\id$ is positive definite and $\boldsymbol{1}^*_t \boldsymbol{1}_t$ is nonnegative definite. It follows from \eqref{eq:schur} that in order to prove that $\boldsymbol D_t$ is positive definite we need to show that $ (\boldsymbol {\tilde G}_t +  \boldsymbol {\tilde G}^*_t)$ is nonnegative definite. 
Note that we can write the kernel of $\boldsymbol{\tilde G}_t$ as follows, 
\be \label{t-g-dec} 
\tilde G_t(s,u)= (2\varrho \mathds{1}_{\{u<s\}}+G(s,u)) \mathds{1}_{\{u > t\}}.   
\ee

Let $f\in L^2\left([0,T],\mathbb R\right)$, then from \eqref{pos-def} we get
\be \label{pd-1}
\begin{aligned}  
&\int_0^T\int_0^T\big( G_t(s,u)+G_t^*(s,u)\big) f(s)f(u)ds du  \\
&= \int_0^T\int_0^T\big( G(s,u)\mathds{1}_{\{u>t\}}+ G(u,s)\mathds{1}_{\{s>t\}} \big)f(s)f(u)ds du \\ 
&= \int_0^T\int_0^T\big( G(s,u)\mathds{1}_{\{s>u\}}\mathds{1}_{\{u>t\}}+ G(u,s)\mathds{1}_{\{s>t\}}\mathds{1}_{\{u>s\}} \big)f(s)f(u)ds du \\ 
&= \int_0^T\int_0^T\big( G(s,u)\mathds{1}_{\{s>u\}}\mathds{1}_{\{u>t\}}\mathds{1}_{\{s>t\}}+ G(u,s)\mathds{1}_{\{s>t\}}\mathds{1}_{\{u>s\}}\mathds{1}_{\{u>t\}} \big)f(s)f(u)ds du \\ 
&= \int_0^T\int_0^T\big( G(s,u) + G(u,s)  \big)f_t(s)f_t(u)ds du \\ 
&\geq 0, 
\end{aligned} 
\ee
where we used the fact that $G_t(s,u)=0$ for $u>s$, with $f_t(s):=f(s)\mathds{1}_{\{s>t\}}$. 

Moreover, we have
\be \label{pd-2} 
\begin{aligned}
&  \int_{t}^T\int_{u}^T f(s)f(u)dsdu  + \int_{0}^T\int_{0}^T \mathds 1_{\{t\leq s \leq u\}}f(u)f(s)du ds \\
&=   \int_{t}^T\int_{u}^T f(s)f(u)dsdu  + \int_{t}^T\int_{t}^u f(u)f(s)ds du \\
&=   \int_{t}^T\int_{t}^T f(s)f(u)dsdu \\
&= \left( \int_{t}^T f(s)ds\right)^2  \\
&\geq 0.
\end{aligned}
\ee
From \eqref{t-g-dec}, \eqref{pd-1} and \eqref{pd-2} it follows that $\boldsymbol{\tilde G}_t$ is nonnegative definite and this completes the proof.  
\end{proof} 

We now turn to the proof of  Lemma \ref{lemma-gam-phi}.  First, we need an auxiliary result.
Recall that $K$ was defined in \eqref{g0-k-def}.
We define  $ \boldsymbol{K}_t$ as the operator induced by the kernel $K(s,u)\mathds 1_{\{u \geq t\}}$  and
\be \label{k-hat-e} 
\boldsymbol{\hat K}_t = - ( \boldsymbol{K}_t \otimes e_1),
\ee
which by \eqref{eq:explicithatK} is induced by the kernel 
\be \label{k-hat-mat} 
\hat K_t(s,u) =\begin{pmatrix}
-\tilde G(s,u)\mathds 1_{\{u \geq t\}} &  0 \\
\mathds{1}_{\{s\geq u\}}\mathds 1_{\{u \geq t\}}  & 0  \\
 \end{pmatrix}.
 \ee
 Recall that $ \boldsymbol{\hat R}$ was defined after \eqref{eq:integralresop} and that $\boldsymbol{ \hat K}_t$ was defined in \eqref{k-hat-e}.
\begin{lemma}
      Let $\boldsymbol {\hat R}$ be the resolvent of $\boldsymbol {\hat K}$ and let $\boldsymbol {\hat R}_t$ be the operator induced by the kernel $\hat{R} (s,u)\mathds 1_{\{u \geq t\}}$. Then $  \boldsymbol {\hat R}_t$ is the resolvent of $\boldsymbol{ \hat K}_t$.  
\end{lemma}
  \begin{proof}
Recall that from \eqref{eq:resolventeqkernel} we have 
\be  \label{jj2} 
 \hat R(s,u) =\hat K(s,u) + \int_0^T \hat R(s,z) \hat K(z,u)dz,
\ee
and 
\be \label{jj3} 
  \int_0^T \hat K(s,z)\hat R(z,u)dz = \int_0^T \hat R(s,z)\hat K(z,u) dz.
\ee
Using \eqref{jj2} we can write $\hat R_t$ as follows 
\be\label{jj0} 
\begin{aligned}
  \hat R_t(s,u)&= \hat R(s,u)\mathds 1_{\{u \geq t\}}\\ 
   &= \hat K(s,u)\mathds 1_{\{u \geq t\}} + \int_0^T \hat R(s,z)\hat K(z,u)\mathds 1_{\{u \geq t\}} dz\\
   &= \hat K_t(s,u) + \int_0^T \hat R_t(s,z)\hat K_t(z,u) dz.
\end{aligned}
\ee
Since $\hat R$ is a Volterra kernel and by \eqref{jj3} we get, 
\be \label{jj1} 
\begin{aligned}
    \int_0^T \hat K_t(s,z)\hat R_t(z,u)dz &= \int_0^T \hat K(s,z)\mathds 1_{\{z \geq t\}}\hat R(z,u)\mathds 1_{\{u \geq t\}}dz \\
    &= \mathds 1_{\{u \geq t\}}  \int_0^T \hat K(s,z) \hat R(z,u)dz\\ 
    &=\mathds 1_{\{u \geq t\}} \int_0^T \hat R(s,z) \hat K(z,u)dz  \\
    &= \int_0^T \hat R_t(s,z) \hat K_t(z,u)dz. 
\end{aligned}
\ee
From \eqref{jj0} and \eqref{jj1} it follows that $ {\hat R}_t$ and ${ \hat K}_t$ satisfy \eqref{eq:resolventeqkernel}, and the result follows. 
  \end{proof}

\begin{proof} [Proof of Lemma \ref{lemma-gam-phi}] 
We first prove that the operator $\boldsymbol \Psi_{t}$ in \eqref{def:riccati_operator} is well defined. From \eqref{eq:schur}, \eqref{op-gamma} and Lemma \ref{lemmma-inv-op} it follows that $ \boldsymbol \Gamma_t$ is well defined and invertible.  We will use the invertibility of $\boldsymbol{\Gamma}_t$ to show that $\left(\id  - 2   \tilde{\boldsymbol{\Sigma}}_t A    \right) $ is invertible and compute its inverse. Since the invertibility  of $\left( \id - \frac 1 {2\lambda} \boldsymbol{\hat K}^* \right) $ and $\left( \id - \frac 1 {2\lambda} \boldsymbol{\hat K} \right)$ is given as a resolvent of a Volterra operator (cf. \eqref{eq:integralresop}), this will prove that $\boldsymbol \Psi_{t}$ is well defined. 
  
We start by deriving an essential identity. 
Since $\Sigma_t(s,u)=0$ if $s\vee u \leq t$ we can rewrite \eqref{def:C_tilde} as follows:
 \begin{align}\label{eq:tildesigmat}
\boldsymbol{\tilde{\Sigma}}_t &=   \left( \id - \frac 1 {2\lambda} \boldsymbol{\hat K} \right)^{-1} \boldsymbol{\Sigma}_t \left(\id - \frac 1 {2\lambda} \boldsymbol{\hat K}^*\right)^{-1}\\
&=\left( \id +  \boldsymbol{\hat R}\right) \boldsymbol{\Sigma}_t \left(\id +\boldsymbol{\hat R}^*\right)\\
&=  \boldsymbol{\Sigma}_t  + \boldsymbol{\Sigma}_t \boldsymbol{\hat R}^* + \boldsymbol{\hat R}\boldsymbol{\Sigma}_t  + 
 \boldsymbol{\hat R}\boldsymbol{\Sigma}_t \boldsymbol{\hat R}^*\\
 &=  \boldsymbol{\Sigma}_t  + \boldsymbol{\Sigma}_t \boldsymbol{\hat R}^* + \boldsymbol{\hat R}\boldsymbol{\Sigma}_t  + 
 \boldsymbol{\hat R}_t\boldsymbol{\Sigma}_t \boldsymbol{\hat R}^*_t\\
 &= \left( \id +  \boldsymbol{\hat R}_t\right) \boldsymbol{\Sigma}_t \left(\id +\boldsymbol{\hat R}^*_t\right)\\
 &= \left( \id - \frac 1 {2\lambda} \boldsymbol{\hat K}_t \right)^{-1} \boldsymbol{\Sigma}_t \left(\id - \frac 1 {2\lambda} \boldsymbol{\hat K}_t^*\right)^{-1}.
\end{align}

Using \eqref{gam-0} we can write 
\begin{align}
&    {\boldsymbol \Gamma}_t \\
&= A^{-1} \id  - \frac 1 {2\lambda} \left( \boldsymbol{\hat K}_tA^{-1}  + A^{-1}  \boldsymbol{\hat K}_t^*   \right) \\
    &=  A^{-1} \id  - \frac 1 {2\lambda} \left( \boldsymbol{\hat K}_tA^{-1}  + A^{-1}  \boldsymbol{\hat K}_t^*   \right)  + \frac{1}{4\lambda^2}  \boldsymbol{\hat K}_t  A^{-1}  \boldsymbol{\hat K}_t^* - 2  \boldsymbol{\Sigma}_t \\
    &= \left( \id - \frac 1 {2\lambda} \boldsymbol{\hat K}_t\right) A^{-1}  \left( \id - \frac 1 {2\lambda} \boldsymbol{\hat K}_t^* \right) - 2 \boldsymbol{\Sigma}_t   \\
    &=    \left( \id - \frac 1 {2\lambda} \boldsymbol{\hat K}_t\right) \left(\id  - 2  \left( \id - \frac 1 {2\lambda} \boldsymbol{\hat K}_t\right)^{-1} \boldsymbol{\Sigma}_t  \left( \id - \frac 1 {2\lambda} \boldsymbol{\hat K}^*_t\right)^{-1} A  \right)   A^{-1}  \left( \id - \frac 1 {2\lambda} \boldsymbol{\hat K}_t^* \right)\\
    &=   \left( \id - \frac 1 {2\lambda} \boldsymbol{\hat K}_t\right) \left(\id  - 2   \tilde{\boldsymbol{\Sigma}}_t A    \right)   A^{-1}  \left( \id - \frac 1 {2\lambda} \boldsymbol{\hat K}_t^* \right),
\end{align}
where in the last equality we used \eqref{eq:tildesigmat}.  Now since $\left( \id - \frac 1 {2\lambda} \boldsymbol{\hat K}_t\right)$ and $\left( \id - \frac 1 {2\lambda} \boldsymbol{\hat K}_t^* \right)$ are invertible, it follows that, 
\begin{align}
    \left(\id  - 2   \tilde{\boldsymbol{\Sigma}}_t A    \right)  
      &=  \left( \id - \frac 1 {2\lambda} \boldsymbol{\hat K}_t\right)^{-1}  \boldsymbol{\Gamma}_t  \left( \id - \frac 1 {2\lambda} \boldsymbol{\hat K}^*_t\right)^{-1} A, 
\end{align}
which proves that $   \left(\id  - 2   \tilde{\boldsymbol{\Sigma}}_t A    \right)$ is invertible with an inverse which is given by
\begin{align}
    \left(\id  - 2   \tilde{\boldsymbol{\Sigma}}_t A    \right)^{-1}  = A^{-1} \left( \id - \frac 1 {2\lambda} \boldsymbol{\hat K}^*_t\right)     \boldsymbol{\Gamma}_t^{-1}  \left( \id - \frac 1 {2\lambda} \boldsymbol{\hat K}_t\right).
    \end{align}

Next, we prove relation \eqref{psi-gamma}. We first note that  both terms appearing in  expression \eqref{psi-gamma} are continuous in  the parameter $\phi \in [0,\infty)$, recall \eqref{op-gam-inv} and \eqref{def:riccati_operator}. It is therefore enough to prove \eqref{psi-gamma} for any $\phi>0$.
Note that for $A$ in \eqref{a-mat} we have for $\phi>0$,  
\be \label{a-mat-inv} 
A^{-1} = \left(\begin{matrix}
 {2\lambda} & 0 \\
0 & - \frac 1{2\phi}
\end{matrix}\right).
\ee
From \eqref{eq:schur}, \eqref{op-gamma}, \eqref{a-mat-inv} and  \eqref{op-one} we obtain
\be \label{gam-0}
\begin{aligned}
\boldsymbol{\Gamma}_t 
&= \left(\begin{matrix}
 {2\lambda} \id + (\boldsymbol{\tilde G}_t + \boldsymbol{\tilde G}^*_t) &- \boldsymbol{1}^*_t \\
-\boldsymbol{1}_t & - \frac 1{2\phi} \id 
\end{matrix}\right)\\
&= A^{-1} \id  - \frac 1 {2\lambda} \left( \boldsymbol{\hat K}_tA^{-1}  + A^{-1}  \boldsymbol{\hat K}_t^*   \right).
\end{aligned}
\ee
Using Lemma \ref{lemmma-inv-op} and \eqref{def:riccati_operator} we note that in order to prove Lemma \ref{lemma-gam-phi}, it is enough to prove that for any $f\in L^2\left([0,T],\mathbb R\right)$, the quantities
\be \label{to-s} 
\boldsymbol{\Psi}^{-1}_t f\mathds 1_t   =   \left( \id - \frac 1 {2\lambda} \boldsymbol{\hat K} \right)  \left( \id -  2\boldsymbol{\tilde \Sigma}_{t} A  \right)  A^{-1}  \left( \id - \frac 1 {2\lambda}  \boldsymbol{\hat K^*} \right) f\mathds 1_t,  \quad  t\leq T, 
\ee
coincide with the left-hand side of \eqref{gam-0} operating on $f\mathds 1_t$. 


Let $f\in L^2\left([0,T],\mathbb R\right)$. From \eqref{def:C_tilde} and \eqref{to-s} we get,   
\be \label{to-s1} 
 \boldsymbol{\Psi}^{-1}_t  f\mathds 1_t  =  \left( \left( \id - \frac 1 {2\lambda} \boldsymbol{\hat K}\right) A^{-1}  \left( \id - \frac 1 {2\lambda} \boldsymbol{\hat K}^* \right) - 2 \boldsymbol{\Sigma}_t  \right)  f\mathds 1_t.
\ee
From \eqref{k-hat-mat} we get 
\begin{equation}\label{cor1} 
\left(\boldsymbol {\hat K} f\mathds 1_t\right)(s) = \int_0^T \hat K(s,u) \mathds 1_t(u) f(u) du = \int_0^T \hat K_t(s,u) f(u) du = \left(\boldsymbol {\hat K}_t f\mathds 1_t\right)(s),
\end{equation} 
and 
\be\label{cor2}
K_t(s,u)^* = K_t(u,s)^\top = K(u,s)^\top  \mathds 1_t(s). 
\ee
From \eqref{cor2} we get
\be \label{cor3} 
\begin{aligned}
\mathds{1}_t(s)\left(\boldsymbol {\hat K}^* f\mathds 1_t\right)(s) &= \int_0^T \mathds{1}_t(s)\hat K^*(s,u) \mathds 1_t(u) f(u)du\\ 
&= \int_0^T \mathds{1}_t(s)\hat K(u,s)^\top \mathds 1_t(u) f(u) du \\
&=\mathds{1}_t(s)\left(\boldsymbol {\hat K}^*_t f\mathds 1_t\right)(s).
\end{aligned}
\ee
From \eqref{to-s1}, \eqref{cor1} and \eqref{cor3} it follows that 
\be \label{to-s2} 
\begin{aligned}
&\mathds{1}_t(s)\big(\boldsymbol{\Psi}^{-1}_t  f\mathds 1_t \big)(s)  \\
&=\mathds{1}_t(s)\left( \Big( A^{-1} \id  - \frac 1 {2\lambda} \left( \boldsymbol{\hat K}_tA^{-1}  + A^{-1}  \boldsymbol{\hat K}_t^*   \right) + \frac{1}{4\lambda^2}  \boldsymbol{\hat K}  A^{-1}  \boldsymbol{\hat K}^* - 2  \boldsymbol{\Sigma}_t \Big)f\mathds 1_t\right)(s). 
\end{aligned}
\ee
To see that we recall that for $K$ and $\hat K$ were defined in \eqref{g0-k-def} and \eqref{eq:explicithatK}, respectively. A direct matrix multiplication, using \eqref{a-mat-inv} gives
$$
\boldsymbol{\hat K}  A^{-1}  \boldsymbol{\hat K} = 2 \lambda  \boldsymbol{K} \boldsymbol{K}^*. 
$$
Together with \eqref{eq:sigmakernel} we get, 
\begin{align*}
 \mathds{1}_t(s)  \big(\boldsymbol{\hat K}  A^{-1}  \boldsymbol{\hat K}^*f\mathds 1_t \big)(s)    = 2 \lambda  \mathds{1}_t(s)  \big( \boldsymbol{K} \boldsymbol{K}^*{f}\mathds 1_t \big)(s) =  {8\lambda^2} \mathds{1}_t(s) \big( \boldsymbol {\Sigma_t}f\mathds 1_t \big)(s). 
\end{align*}
 
Hence the last two terms in the right-hand side of \eqref{to-s2} cancel, and together with \eqref{gam-0} we get 
\be \label{cor-4} 
\begin{aligned} 
\mathds{1}_t(s) \big( \boldsymbol{\Psi}^{-1}_t  f\mathds 1_t \big)(s) =&\mathds{1}_t(s)\left( \Big( A^{-1} \id  - \frac 1 {2\lambda} \left( \boldsymbol{\hat K}_tA^{-1}  + A^{-1}  \boldsymbol{\hat K}_t^*   \right) \Big) f\mathds 1_t \right)(s) \\ 
=&\mathds{1}_t(s) \big(\boldsymbol{\Gamma}_t f\mathds 1_t \big)(s), 
\end{aligned} 
\ee
and the result follows.
\end{proof} 

 Conflict of Interest Statement \\
All authors declare no conflicts of interest. \\

 Data Availability Statement \\
Data sharing is not applicable to this article as no new data were created or analyzed in this study.


\end{document}